\documentclass[acmsmall]{acmart}
\usepackage{graphicx}
\usepackage{xcolor}
\usepackage[normalem]{ulem}
\usepackage{multirow}
\usepackage{tabularx}
\usepackage{macro}

\usepackage{amsmath}

\usepackage{amssymb}
\usepackage{amsthm}

\usepackage{makecell}
\usepackage{thmtools}
\usepackage{thm-restate}

\usepackage{algorithm}
\usepackage[noend]{algpseudocode}

\usepackage{comment}
\usepackage{wrapfig}
\usepackage{mdframed}
\usepackage{enumitem}
\usepackage{caption}
\usepackage{subcaption}
\usepackage{xspace}
\usepackage{tcolorbox}

\setcopyright{rightsretained}
\acmDOI{10.1145/3808279}
\acmYear{2026}
\acmJournal{PACMPL}
\acmVolume{10}
\acmNumber{PLDI}
\acmArticle{201}
\acmMonth{6}
\acmSubmissionID{pldi26main-p161-p}
\received{2025-11-07}
\received[accepted]{2026-04-03}

\ccsdesc[500]{Software and its engineering~Software verification}

\keywords{Program Synthesis, Program Verification, Active Learning, Neurosymbolic Synthesis}

\begin{document}

\title{Choose, Don’t Label: Multiple-Choice Query Synthesis for Program Disambiguation}

\author{Celeste Barnaby}
\orcid{0000-0001-7688-6133}
\affiliation{%
  \institution{University of Texas at Austin}
  \city{}
  \country{USA}
}
\email{celestebarnaby@utexas.edu}

\author{Danny Ding}
\orcid{0009-0009-3437-8064}
\affiliation{%
  \institution{University of Texas at Austin}
  \city{}
  \country{USA}
}
\email{dingyy@utexas.edu}

\author{Osbert Bastani}
\orcid{0000-0001-9990-7566}
\affiliation{%
  \institution{University of Pennsylvania}
  \city{}
  \country{USA}
}
\email{obastani@seas.upenn.edu}

\author{Işıl Dillig}
\orcid{0000-0001-8006-1230}
\affiliation{%
  \institution{University of Texas at Austin}
  \city{}
  \country{USA}
}
\email{isil@cs.utexas.edu}

\begin{abstract}
High-level specifications of code are inherently ambiguous, and prior systems have explored interactive techniques to help users clarify their intent and resolve such ambiguities. However, most existing approaches elicit supervision through labeled examples, which are often error-prone and may fail to capture user intent. This paper introduces a new active learning paradigm for program disambiguation based on \emph{multiple-choice queries}. In this paradigm, the system presents a small set of  high-level behaviors as multiple-choice options, and the user simply selects the intended one. Technically, each answer option corresponds to a Hoare triple that characterizes a cluster of semantically similar candidate programs. This formulation enables formal reasoning about the informativeness and interpretability of queries, and supports systematic construction of optimal queries. Building on this insight, we develop a new active learning algorithm and implement it in a tool called \toolname{}, which automatically synthesizes informative multiple-choice queries for program disambiguation. We evaluate \toolname{} across four domains spanning both symbolic and neurosymbolic settings and show that it produces intuitive, easy-to-answer queries and achieves efficient convergence. Most importantly,  \toolname{} identifies the intended program \emph{more reliably} than existing methods, while maintaining competitive runtime performance.

\end{abstract}

\maketitle

\section{Introduction}

Recent advances in program synthesis and large language models have made it increasingly practical to generate code from high-level intent. Yet such intent is often underspecified or open to interpretation, giving rise to multiple plausible solutions that are {semantically distinct}. To address this ambiguity, prior work has explored active learning strategies that identify the desired program through targeted queries posed to the user~\cite{flashprog, samplesy, learnsy, forest, smartlabel}. These approaches, however, predominantly rely on users to label concrete inputs, which has two key limitations. First, manual input labeling can be cumbersome and error-prone in structure-rich domains, such as those involving large tables or nested data structures. Second, most existing methods assume a fixed set of inputs over which the synthesized program will be evaluated and only guarantee correctness over those.

\begin{wrapfigure}{r}{0.45\linewidth}
\vspace{-0.5em}
\begin{mdframed}[backgroundcolor=gray!10, linewidth=0.5pt, roundcorner=5pt]
\noindent\textbf{Query:} If the input table contains multiple rows for the same user, the program should return:
\begin{enumerate}[label=(\alph*), itemsep=0pt, leftmargin=*]
\item  one row per user
\item  one row per (user, date) pair
\item  one row per original record
\end{enumerate}
\end{mdframed}
\vspace{-1em}
\caption{Example multiple-choice query.}
\label{fig:query-ex}
\vspace{-0.5em}
\end{wrapfigure}
This paper presents a new approach to active learning for code generation, where users answer high-level multiple-choice questions instead of labeling specific inputs. 
As shown in Figure~\ref{fig:query-ex}, these questions let users choose among distinct behavioral patterns: for example, whether the program should return one row per user, one per user–date combination, or one per original record. By replacing tedious example annotation with intuitive questions about high-level behavior, this approach allows  users to express semantic intent more directly, while also guiding the synthesizer to reason over broader behaviors.
Furthermore, this approach offers stronger correctness guarantees than most prior methods~\cite{learnsy,samplesy,smartlabel} and yields more accurate user responses in practice.

At the heart of our approach is a method for partitioning the program space into a small number of semantic equivalence classes. Each class is defined by a shared precondition $\precond$ and a unique postcondition $\postcond_i$, yielding a structured query $
Q = (\precond, \postcond_1, \ldots, \postcond_k)
$. Here, $\precond$ describes a class of inputs (e.g., \emph{table containing multiple rows for the same user}), whereas each $\postcond_i$ specifies a possible property of the corresponding output (e.g., \emph{returning one row per user}). This structure naturally induces a multiple-choice question: the precondition sets up the scenario under which the program's behavior should be evaluated, and the postconditions become the candidate answers. Thus, answering the multiple-choice query reduces to selecting which of several Hoare-style specifications
$\{\precond\}\,P\,\{\postcond_i\}$ represents the user’s intent. 

The key challenge is selecting structured queries that are both \emph{informative} (i.e., best disambiguate programs) and \emph{interpretable} (i.e., easily understood by users). To address this challenge, our method first identifies a region of the input space where a maximal number of candidate programs exhibit different behavior. This region is summarized as a {precondition} that strikes a good balance between discriminative power and simplicity.  Once a precondition $\precond$ is fixed, our method clusters programs based on their semantic behavior under $\precond$, ensuring that the clusters are as even as possible so that \emph{any} user answer eliminates a large fraction of programs. Finally, each cluster's behavior is  summarized by a concise postcondition that separates each group from all of the others. This yields a complete and mutually exclusive partition of candidate programs and can be directly translated into a multiple-choice query posed in natural language.

We implemented our approach in a new tool called \toolname and evaluated it on four domains: table transformations, JSON transformations, batch image editing, and image search. In our experiments, \toolname achieved perfect accuracy across all tasks, whereas prior active learning methods failed on up to 36\% of benchmarks. Furthermore, it posed questions that users answered 38\% more accurately, and achieved comparable or better efficiency than existing approaches in terms of query selection efficiency, user response time, and interaction rounds.


To summarize, this paper makes the following key contributions:
\begin{itemize}[leftmargin=*]
\item {\bf New paradigm for interactive synthesis.}
We introduce a new interactive synthesis framework in which users answer multiple-choice questions about high-level program behavior, rather than labeling concrete inputs. Each question expresses a logical relation between inputs and outputs, allowing users to communicate intent at a semantic level (Section~\ref{sec:problem}).
\item {\bf Principled query synthesis algorithm.}
We design a query synthesis algorithm that formulates user queries as an optimization problem over \emph{informativeness} and \emph{interpretability}, solved through a combination of semantic reasoning, clustering, and interpolation (Section~\ref{sec:algo}).
\item {\bf End-to-end implementation and evaluation.}
We realize these ideas in a new tool called  \toolname (Section~\ref{sec:impl}), and demonstrate its effectiveness across four diverse domains. Our evaluation shows that \toolname achieves perfect accuracy and consistently improves user response correctness, while maintaining comparable or better efficiency compared to example-based approaches (Section~\ref{sec:eval}).
\end{itemize}




\section{Motivating Scenario \& Overview}

This section illustrates our approach through a simple example inspired by \ImageEye, a system for programmatic image manipulation from prior work ~\cite{imageeye}. \ImageEye  learns image editing programs from labeled examples, where each input is an original image and the output is its edited counterpart.

\begin{figure}[t]
  \centering
  \setlength{\tabcolsep}{2pt}

  \begin{subfigure}[t]{0.32\linewidth}
    \centering
    \includegraphics[width=\linewidth]{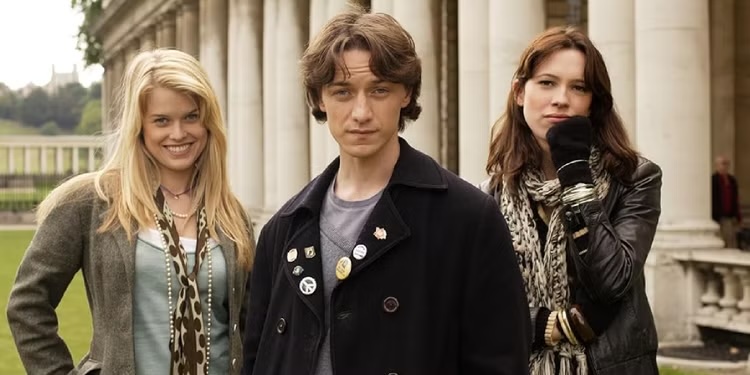}%
    \caption{Photo with Alice and  friends.}
    \label{fig:alice-io-input}
  \end{subfigure}\hfill
  %
  \begin{subfigure}[t]{0.32\linewidth}
    \centering
    \includegraphics[width=\linewidth]{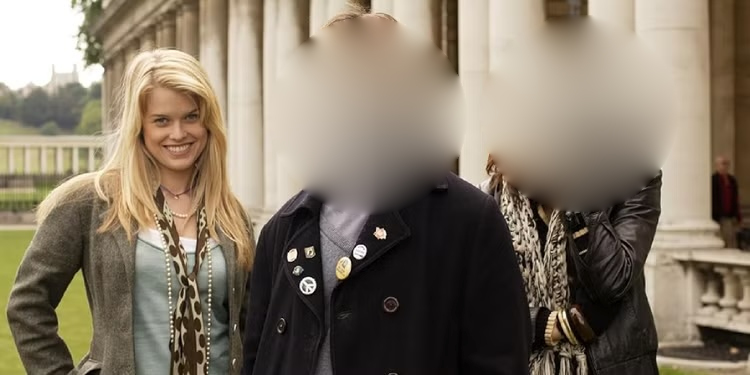}%
    \caption{Labeled output.}
    \label{fig:alice-io-output}
  \end{subfigure}\hfill
  %
  \begin{subfigure}[t]{0.32\linewidth}
    \centering
    \includegraphics[width=\linewidth]{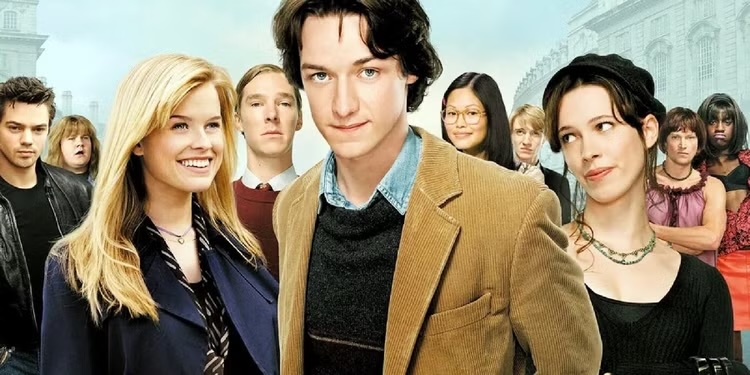}%
    \caption{Alice among many people.}
    \label{fig:alice-io-crowd}
  \end{subfigure}

  \vspace{-0.6em}
  \caption{Input–output demonstration (a,b) and a scenario (c) with high annotation burden.}
  \label{fig:alice-io}
\end{figure}

To illustrate the limitations of relying on labeled examples, consider a user, Alice, who wants to apply a privacy-preserving transformation to her photo collection: \emph{blur the faces of all people except herself}. In \ImageEye, this task can be achieved by a program that detects all faces, 
compares each against the target, and applies the \texttt{Blur} operator to non-matching faces (Program 1 in Figure~\ref{fig:alice-programs}). Now, suppose Alice provides the input–output example in Figures \ref{fig:alice-io-input}–\ref{fig:alice-io-output}, where her face remains clear while the others are blurred. Although this example captures her intent, it is ambiguous: several other programs, such as blurring all \emph{non-smiling} faces or those with \emph{non-blonde} hair, would produce the same output and therefore remain viable in the \emph{hypothesis space} (shown in Figure~\ref{fig:alice-programs}).


\paragraph{Existing methods.}
Prior approaches address such ambiguity in two main ways. The first is to present a candidate program to the user, who must inspect its source code or outputs and iteratively provide additional examples. This paradigm is both effort-intensive and prone to error. The second employs active learning to select the \emph{most informative} example for labeling~\cite{samplesy,learnsy,smartlabel}. Although this strategy makes the interaction more systematic, it can select examples that are difficult to annotate, such as images containing many people with diverse attributes (e.g., Figure~\ref{fig:alice-io-crowd}), increasing both annotation time and the likelihood of mistakes. Moreover, because these methods only reason over a fixed set of inputs, their correctness guarantees are limited to that dataset and do not ensure that the resulting program reflects the user’s intent on new data.

\begin{figure}[t]
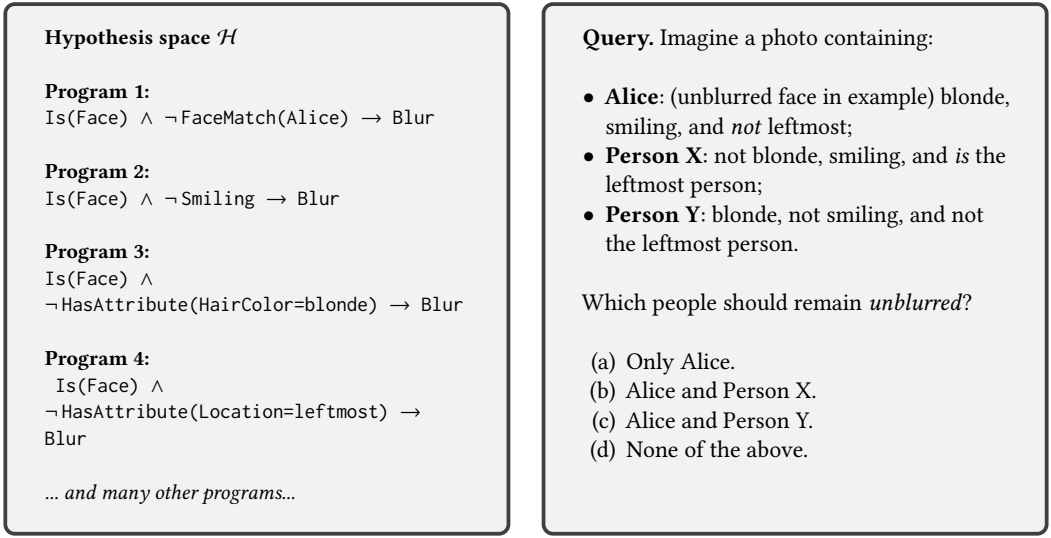

  \centering
  \begin{subfigure}[t]{0.485\linewidth}
    \begin{tcolorbox}[equal height group=H, valign=top]
      \raggedright
      \footnotesize\textbf{Hypothesis space $\mathcal{H}$}\par \ \\ 

      \textbf{Program 1:} \\ 
      {\ttfamily 
       Is(Face) $\wedge$ $\neg$\,FaceMatch(Alice) $\rightarrow$ Blur}  \\ \ \\ 
       \textbf{Program 2:} \\ 
           {\ttfamily  Is(Face) $\wedge$ $\neg$\,Smiling $\rightarrow$ Blur}  \\ \ \\ 
                  \textbf{Program 3:} \\ 
        {\ttfamily  Is(Face) $\wedge$ $\neg$\,HasAttribute(HairColor=blonde) $\rightarrow$ Blur}  \\ \ \\ 
               \textbf{Program 4:} \\ 
        {\ttfamily \ Is(Face) $\wedge$ $\neg$\,HasAttribute(Location=leftmost) $\rightarrow$ Blur} \\  \ \\ 
        \emph{... and many other programs...}
      \vspace{2pt}
    \end{tcolorbox}
    \caption{Hypothesis space.}
    \label{fig:alice-programs}
  \end{subfigure}\hfill
  \begin{subfigure}[t]{0.485\linewidth}
  \small
    \begin{tcolorbox}[equal height group=H, valign=top]
      \raggedright
      \textbf{Query.}  Imagine a photo containing: \ \\ \ \\ \par
      \begin{itemize}[leftmargin=*,nosep]
        \item \textbf{Alice}: (unblurred face in example) blonde, smiling, and \emph{not}  leftmost;
        \item \textbf{Person X}: not blonde, smiling, and \emph{is} the leftmost person;
        \item \textbf{Person Y}: blonde, not smiling, and not the leftmost person.
      \end{itemize}
      \ \\  \ \\ 
      Which people should remain \emph{unblurred}? \\ \ \\ 
      \begin{enumerate}[label=(\alph*),leftmargin=*,nosep]
        \item Only Alice.
        \item Alice and Person X.
        \item Alice and Person Y.
        \item None of the above.
      \end{enumerate}
    \end{tcolorbox}
    \caption{Multiple-choice query for running example}
    \label{fig:mc-query-alice}
  \end{subfigure}

  \vspace{-0.1in}
  \caption{ Programs consistent with user's example (a) and the generated multiple-choice query (b) }
  \label{fig:alice-side-by-side}
  \vspace{-0.15in}
\end{figure}

\paragraph{Our approach.}
Instead of asking users to label concrete examples, our method poses a multiple-choice question that highlights key differences among candidate programs. For the running example, Figure~\ref{fig:mc-query-alice} describes a hypothetical image that contains people with different attributes (i.e., hair color, position, and expression), and asks which faces should remain unblurred. Each answer reflects a distinct interpretation of the user’s intent and is designed to eliminate a corresponding subset of programs from the hypothesis space.

To generate such a query, \toolname constructs a \emph{text description} of a hypothetical image that makes candidate programs disagree. It first identifies \emph{distinguishing predicates}, or interpretable conditions on object attributes where two programs produce different results. For example, for the first two programs in Figure~\ref{fig:alice-programs}, a distinguishing predicate might state that ``Alice is not smiling, but someone else is.'' From this set of predicates $\Phi$, \toolname synthesizes a concise precondition $\phi$ that separates as many program pairs as possible while remaining easy to interpret. In our example, $\phi$ corresponds to the scene in Figure~\ref{fig:mc-query-alice}, which depicts Alice and two others with attributes chosen to maximally differentiate the remaining candidates.

Given the discriminating scene $\phi$, \toolname symbolically executes each program in $\mathcal{H}$ and groups those that produce the same result. Each group represents a distinct outcome under $\phi$ and serves as the basis for a multiple-choice answer. To keep the query concise, \toolname merges these groups into at most four balanced clusters so that any user selection eliminates a roughly equal share of the remaining candidates. Each cluster is then summarized by a \emph{separator}, a logical condition distinguishing it from the others, which is rendered in natural language (using an LLM) to form the answer choices, as shown in Figure~\ref{fig:mc-query-alice}.

\section{Problem Formulation}\label{sec:problem}

In this section, we formalize the problem addressed in the rest of this paper. 

\subsection{Query Space}
Given a finite hypothesis space $\mathcal{H}$ of candidate programs consistent with the user's initial specification, 
the goal is to identify the intended program $P^* \in \mathcal{H}$ through structured logical queries:
\[
\query = (\precond, \postcond_1, \dots, \postcond_k),
\]
where $\precond$ is a precondition describing a family of inputs, and $\postcond_i$ is a postcondition describing a distinct program behavior for that input family.
The pair $(\precond, \postcond_i)$ is called a \emph{scenario}. 
When the user selects option $i$, all programs inconsistent with $\postcond_i$ are eliminated, thereby refining $\mathcal{H}$.

For a query to be meaningful, its scenarios must partition the hypothesis space $\hs$ relative to 
$\precond$. That is, each program in $\hs$ should fall into exactly one scenario, ensuring that the 
user’s answer can be used to unambiguously refine $\hs$. We capture this requirement as follows.

\begin{definition}{\bf (Valid query).}~\label{def:valid-query}
Given hypothesis space $\hs$, a query 
$\query = (\precond, \postcond_1, \dots, \postcond_k)$ is \emph{valid} if
\[
\begin{array}{llll}
\textbf{(Mutual exclusion)} & \quad & 
\forall P \in \hs.\ \forall i \neq j.\ 
\neg \big( \models \{ \precond \}\, P \,\{ \postcond_i \} \ \wedge\ 
\models \{ \precond \}\, P \,\{ \postcond_j \} \big). \\
\textbf{(Coverage)} & \quad & 
\forall P\in\hs.\ \exists i.\ \models  \{ \precond \}\, P \,\{ \postcond_i \}.
\end{array}
\]
\end{definition}
Mutual exclusion ensures that no program in $\hs$ can satisfy two different scenarios 
with the same precondition, so the answer is unambiguous. Coverage ensures that every program in $\hs$ is consistent with at least one scenario, ensuring that one of the answers is correct. Now, given query $\query =  (\precond, \postcond_1, \dots, \postcond_k)$, for each $i$, define $\hs_{\precond,i} \;=\; \{\, P \in \hs \mid \{ \precond \}\, P \,\{ \postcond_i \}\,\}$
to be the programs in $\mathcal{H}$ that satisfy $(\precond, \postcond_i)$. By Definition~\ref{def:valid-query}, $\{ \hs_{\phi, i} \}_{i=1}^k$ is a complete partition of $\mathcal{H}$ (i.e., they are disjoint and cover $\mathcal{H}$). If the user selects answer $\postcond_i$, we can refine $\mathcal{H}$ to $\hs_{\precond,i}$.

\subsection{Query Selection Problem}\label{sec:orig-problem-stmt}

At a high level, the \emph{query selection problem} is to choose a query $\query$ that balances two goals. The first is \emph{disambiguation power}: each possible answer should eliminate a large portion of the hypothesis space. The second is \emph{interpretability}: the pre- and postconditions should form a question that users can easily understand and answer.

To formalize the problem, we assume two predicate universes: a precondition universe $\Upre$ and a postcondition universe $\Upost$. Each universe defines a finite set of atomic predicates that can be combined to form candidate pre- and postconditions, and we assume that these universes are literal-closed (i.e., $\forall a \in \mathcal{U}$, we also have $\neg a \in \mathcal{U}$). To keep queries interpretable, we restrict both pre- and postconditions to be \emph{cubes}, that is, conjunctions of atoms drawn from their respective universes.
Then, the query space is
\[
\mathcal{Q}=\{Q=(\precond, \postcond_1, \dots, \postcond_k)\mid \precond \in \textsc{Cube}(\Upre),\ \postcond_i \in \textsc{Cube}(\Upost),\ Q\text{ is valid}\},
\]
where validity is as in Definition~\ref{def:valid-query} and $\textsc{Cube}(\mathcal{U})$ is the space of cubes over predicate universe $\mathcal{U}$.

Next, \emph{disambiguation power} captures the efficacy of a query $Q$ at pruning the hypothesis space.
\begin{definition}[{\bf Disambiguation power}]
Let $\query = (\precond, \postcond_1, \dots, \postcond_k)$ be a query, and let $\hs_{\precond,i}$ denote the set of programs in $\hs$ that satisfy $\postcond_i$ under $\precond$. Then the disambiguation power of $\query$ is defined as:
\[
\dpower(\query) 
= \min_{i \in [1,k]} \frac{\sum_{j \ne i} |\hs_{\precond,j}|}{|\hs|}
= \left ( 1 - \max_{i \in [1,k]} \frac{|\hs_{\precond,i}|}{|\hs|} \right ) 
\]
\end{definition}
Intuitively, for any answer provided by the user, we can eliminate at least $\dpower(\query)$ fraction of programs from the hypothesis space. Thus, we want to select a query that maximizes disambiguation power. However, we also want to ensure the query is interpretable. We define the \emph{complexity} $\complexity(\query)$ of a query $\query$ based on the syntactic complexity of $\precond$ and the $\postcond_i$'s (e.g., the number of atoms in these logical formulas), reflecting how difficult the query is to interpret.
Now, we have:
\begin{definition}{\bf (Query selection problem)}\label{def:problem-orig}
Given $\Upre,\Upost$, $\hs$, and a hyperparameter $\lambda > 0$ balancing disambiguation power and complexity, the \emph{query selection problem} is to compute
\[
\query^* = \argmax_{\query \in \mathcal{Q}} \left[ \dpower(\query) - \lambda \cdot \complexity(\query) \right].
\]
\end{definition}

\subsection{Decomposed Problem Formulation}

Directly solving the query selection  problem from Section~\ref{sec:orig-problem-stmt}  is computationally challenging since it requires jointly optimizing over pre- and postconditions. This combined search space is too large for exhaustive evaluation to be feasible. Furthermore, optimizing interpretability at the level of the entire query makes it difficult to isolate and control the complexity of individual conditions. To address these challenges, we adopt a decomposed formulation. We first select a satisfiable precondition $\precond$ that captures semantically meaningful behavioral distinctions while remaining interpretable. Then, we construct postconditions $\postcond_i$ conditioned on $\precond$. To formalize this decomposed problem, we first define the \emph{disambiguation power} of a precondition $\precond$ in isolation as the number of (unordered) program pairs in the hypothesis space $\hs$ that it distinguishes:
\[
\dppre(\precond) = \left| \left\{ (P_1, P_2)  \mid P_1, P_2 \in \hs \wedge \forall x.\, \precond(x) \Rightarrow P_1(x) \ne P_2(x) \right\} \right|.
\]
\begin{definition}[{\bf Decomposed Query Selection Problem}]
\label{def:decomposed-query-selection}
Given $\Upre,\Upost$, $\hs$, and hyperparameters $\lambda_{\mathrm{pre}}$ and $\lambda_{\mathrm{post}}$, the \emph{Decomposed Query Selection Problem} is to compute
\begin{align}
\label{eq:preobjective}
\precond^*&\in\argmax_{\precond \in \textsc{Cube}(\Upre)}\left[
\dppre(\precond) - \lambda_{\mathrm{pre}} \cdot \complexity(\precond)
\right] \\
\label{eq:postobjective}
(\postcond_1^*, \dots, \postcond_k^*) &\in \argmax_{\substack{(\postcond_1, \dots, \postcond_k) \in \textsc{Cube}(\Upost)^k}}
\left[
\dpower(\precond^*, \postcond_1, \dots, \postcond_k) - \lambda_{\mathrm{post}} \cdot \sum_{i=1}^k \complexity(\postcond_i)
\right].
\end{align}
\end{definition}
We note that this decomposed formulation is not equivalent to the joint objective in Definition ~\ref{def:problem-orig}: fixing the precondition before optimizing postconditions may exclude queries that would score higher under simultaneous optimization. However, the decomposition offers two practical advantages that justify this trade-off. First, it makes the optimization tractable by replacing a single search over the combined space of all $(\precond, \postcond_1, \dots, \postcond_k)$ tuples with two smaller, sequential subproblems. Second, it allows interpretability to be controlled at the level of individual conditions, since the complexity of the precondition and postconditions can be penalized independently. Intuitively, a precondition $\precond$ that distinguishes many candidate programs naturally enables informative queries: when $\precond$ induces diverse program behaviors, it becomes easier to construct postconditions that divide the hypothesis space evenly, leading to high disambiguation power.




\section{Active Learning Algorithm}
\label{sec:algo}


\begin{algorithm}[t]
\caption{Main Active Learning Loop}
\label{alg:main}
\begin{algorithmic}[1]
\Require Initial hypothesis space $\hs$; predicate universes $\Upre$, $\Upost$
\Ensure A semantically unique program $P^* \in \hs$

\While{true} \label{alg:main:loop-start}
    \If{$\mathsf{NumUnique}(\hs) = 1$} \label{alg:main:check-termination}
         \Return $P^* \in \hs$ \label{alg:main:return}
    \EndIf

    \ForAll{distinct pairs $(P_i,P_j)\in \hs\times\hs$} \label{alg:main:pair-loop}
        \State $\Phi(i,j) \gets \Call{GetDistinguishing}{P_i,P_j,\Upre}$ \label{alg:main:compute-distinguishing}
    \EndFor \label{alg:main:end-pair-loop}

    \State $\precond \gets \Call{GetBestPrecondition}{\Phi}$ \label{alg:main:get-best-pre}

    \If{$\precond = \bot$} \label{alg:main:precondition-fail}
         $\Upre \gets \Call{RefinePredicates}{\hs,\Upre}$; \textbf{continue} \label{alg:main:refine-preconds}
    \EndIf

      \State $\query \gets \Call{GenerateQuery}{\precond,\hs,\Upost}$ \label{alg:main:generate-query} 

    \State $\postcond \gets \Call{QueryUser}{\query}$ \label{alg:main:query-user}
    \State $\hs \gets \{\,P\in\hs \mid \{\precond\}\,P\,\{\postcond\}\}$ \label{alg:main:refine-hypothesis}

\EndWhile \label{alg:main:loop-end}
\end{algorithmic}
\end{algorithm}

Algorithm~\ref{alg:main} presents our top-level procedure for solving the decomposed query selection problem. 
Given an initial hypothesis space $\hs$ and predicate universes $\Upre$ and $\Upost$, 
the algorithm iteratively interacts with the user until all remaining programs in $\hs$ are semantically equivalent. 
Each iteration begins by identifying preconditions under which pairs of programs $P_i, P_j \in \hs$ exhibit different behaviors 
(line~\ref{alg:main:compute-distinguishing}). 
Concretely, it computes a set of \emph{distinguishing predicates} $\Phi(i, j)$ over $\Upre$, 
where each $\varphi \in \Phi(i, j)$ is a sufficient condition for the outputs of $P_i$ and $P_j$ to differ -- 
that is, 
$\forall x.\ \varphi(x) \Rightarrow P_i(x) \neq P_j(x)$. This ensures that $\varphi$ captures an entire region where the two programs are semantically incompatible, so any input satisfying $\varphi$ witnesses their disagreement. 
To obtain these predicates, the algorithm first derives the weakest precondition under which $P_i$ and $P_j$ differ, 
then extracts its prime implicants~\cite{prime} over $\Upre$. 
This subroutine is summarized in Algorithm~\ref{alg:compute-distinguishing}.

Next, Algorithm~\ref{alg:main} invokes {\sc GetBestPrecondition} to compute a symbolic precondition $\precond$ 
that maximizes the objective in Eq.~(\ref{eq:preobjective}) from Definition~\ref{def:decomposed-query-selection}. 
If no predicate in $\Upre$ satisfies this objective, {\sc GetBestPrecondition} returns $\bot$, 
triggering {\sc RefinePredicates} (Algorithm~\ref{alg:refine-predicates}) to extend $\Upre$ with new atoms 
and restart the iteration (line~\ref{alg:main:refine-preconds}). 
Once a valid precondition $\precond$ is obtained, the algorithm calls {\sc GenerateQuery} to synthesize postconditions 
that jointly optimize Eq.~(\ref{eq:postobjective}) in Definition~\ref{def:decomposed-query-selection}. 
The resulting query $Q = (\precond, \postcond_1, \ldots, \postcond_k)$ is then translated into natural language using an LLM 
and presented to the user (line~\ref{alg:main:query-user}). 
If the user selects answer $j$, their intended program $P$ must satisfy the Hoare triple 
$\{ \precond \} P \{ \postcond_j \}$, 
and all programs violating this triple are removed from $\hs$ (line~\ref{alg:main:refine-hypothesis}). 
The process repeats until all remaining programs in $\hs$ are semantically equivalent 
(line~\ref{alg:main:check-termination}). 
The following subsections describe precondition synthesis and query generation in more detail.

{
Algorithm~\ref{alg:main} satisfies two key guarantees. The first says that the query selected at each step solves the Decomposed Query Selection Problem (Definition~\ref{def:decomposed-query-selection}); the second says that, assuming the user responds correctly, the algorithm converges to the correct program.\footnote{Proofs of all theorems are provided in Appendix ~\ref{sec:proofs}.}

\begin{restatable}{theorem}{Optimality}\label{thm:optimality}
\textsc{(Optimality of Query Selection)}. At each iteration, the query $(\precond,\postcond_1,...,\postcond_k)$ generated by Algorithm~\ref{alg:main} solves the Decomposed Query Selection Problem in Definition~\ref{def:decomposed-query-selection}.
\end{restatable}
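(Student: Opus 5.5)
The statement splits along the two lines of Definition~\ref{def:decomposed-query-selection}, so I will prove optimality separately for the precondition (Eq.~(\ref{eq:preobjective})) and for the postconditions (Eq.~(\ref{eq:postobjective})). At each iteration, $\precond$ is returned by \textsc{GetBestPrecondition} on $\Phi$, and the postconditions are returned by \textsc{GenerateQuery} with $\precond$ fixed. If the two subroutines are exact optimizers of their respective objectives, the pair solves the decomposed problem.

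\textbf{Precondition step.} I first prove a correctness lemma for \textsc{GetDistinguishing}. For every pair $(P_i,P_j)$ and every cube $\precond\in\textsc{Cube}(\Upre)$, I will show that $\forall x.\,\precond(x)\Rightarrow P_i(x)\neq P_j(x)$ holds iff $\precond$ implies some $\varphi\in\Phi(i,j)$. The forward direction uses that $\Phi(i,j)$ is the set of prime implicants over $\Upre$ of the weakest precondition $\mathrm{wp}_{ij}$ for disagreement. Any cube implying $\mathrm{wp}_{ij}$ is an implicant, so it contains (syntactically, hence implies) a prime implicant. The converse follows because each $\varphi$ implies $\mathrm{wp}_{ij}$. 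This gives
\[
\dppre(\precond)=\bigl|\{(i,j) \mid \exists \varphi\in\Phi(i,j).\ \precond\Rightarrow\varphi\}\bigr|,
\]
which depends only on $\Phi$. Next I argue that \textsc{GetBestPrecondition} maximizes $\dppre(\precond)-\lambda_{\mathrm{pre}}\complexity(\precond)$ over satisfiable cubes. I expect it to be either an exhaustive enumeration over the finite set $\textsc{Cube}(\Upre)$ or an exact MaxSAT/ILP encoding. In the encoding case I must check that each pair indicator is forced to equal the disjunction above and that the soft constraints reproduce the complexity penalty. The $\bot$ case triggers \textsc{RefinePredicates} and produces no query, so it is vacuous for the theorem.

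\textbf{Postcondition step.} With $\precond$ fixed, I group programs by their behavior under $\precond$ using symbolic execution. For any valid query, each $\hs_{\precond,i}$ must be a union of these behavior classes, since programs with identical behavior satisfy the same Hoare triples. So $\dpower$ is determined by how the classes are merged into $k$ clusters. I then show that \textsc{GenerateQuery} enumerates or optimally solves the merging together with the construction of minimal separating cubes (the interpolants), and that a partition is realizable by cubes iff each cluster admits a separator. This identifies the argmax of Eq.~(\ref{eq:postobjective}) with the algorithm's output.

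\textbf{Main obstacle.} The hardest step is this last one. Clustering to maximize balance and separator synthesis to minimize complexity look like separate stages, but the objective couples them, so a greedy balanced clustering followed by minimal separators need not be jointly optimal. My first approach is to show that the algorithm searches jointly, scoring every candidate partition by its balance term plus the sum of its minimal separator complexities. If instead it proceeds in stages, I will try to exploit the algorithm's specific $\lambda_{\mathrm{post}}$ regime, or prove optimality relative to the exact objective that the algorithm actually encodes.
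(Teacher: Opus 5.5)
\textbf{Where your plan matches the paper.} Your route is the paper's. Its proof of this theorem combines Theorem~\ref{cor:omt-equivalence} (precondition optimality, via prime-implicant completeness of $\Phi(i,j)$ and an argument on the OMT encoding), Theorem~\ref{cor:merge-clusters} (postcondition optimality) and Corollary~\ref{cor:sep-valid} (validity). Your hedges resolve the favorable way.
\begin{itemize}
\item \textsc{GetBestPrecondition} is an exact OMT encoding. However, (DP) and (IA) do not force $p_{ij}$ to equal your disjunction in every feasible model. Equality holds only at an optimum: otherwise setting $d_{ij}^k=p_{ij}=1$ keeps all constraints and raises the objective. That is the paper's exchange argument.
\item \textsc{MergeClusters} does search jointly. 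It runs branch-and-bound over all maps $\mathcal{F}:[N]\to[k]$ and scores each leaf by the exact objective with minimal separators, so your first approach is the right one. The one extra obligation is that the pruning bound $1-\max_j|P_j(\mathcal{F}_p)|/|\mathcal{H}|$ is admissible (Theorem~\ref{thm:ub-sound}). This holds because completions only enlarge bins and the complexity penalty is nonnegative.
\end{itemize}

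\textbf{The step that fails.} It is the ``only if'' half of ``a partition is realizable by cubes iff each bin admits a separator.'' You need this half to get $\mathsf{C}(\psi_j(\mathcal{F}))\le\mathsf{C}(\psi'_j)$ against an arbitrary valid competitor $(\phi,\psi'_1,\dots,\psi'_k)$. For $P$ outside bin $j$, mutual exclusion in Definition~\ref{def:valid-query} only gives $\mathsf{sp}(P,\phi)\not\Rightarrow\psi'_j$. A separator in the sense of \textsc{ConstructSeparator} needs the stronger $\mathrm{UNSAT}(\psi'_j\wedge\mathsf{sp}(P,\phi))$. These can differ whenever $\phi$ fails to distinguish two programs that still differ somewhere on $\phi$, and the precondition objective does not prevent that.

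Here is a concrete instance. Take $\mathcal{U}^-=\{x<0,\,x\ge 0\}$, with the atoms $y=x$, $y=-1$, $y=100$ in $\mathcal{U}^+$. Let $Q_1(x)=x$, let $Q_2(x)=-1$ for $x<0$ and $5$ otherwise, and let $Q_3(x)=100$.
\begin{itemize}
\item For $\lambda_{\mathrm{pre}}<1$, the optimal precondition is $\phi=(x<0)$.
\item Under $\phi$, $\mathsf{sp}(Q_1,\phi)$ and $\mathsf{sp}(Q_2,\phi)$ are distinct but both hold at $x=y=-1$. So \textsc{ConstructSeparator} returns $\bot$ whenever $Q_1$ and $Q_2$ lie in different bins.
\item Hence every query the algorithm can return has disambiguation power at most $1/3$.
\item Yet $(x<0;\ y=x;\ y=-1;\ y=100)$ is valid per Definition~\ref{def:valid-query}, has disambiguation power $2/3$, and uses three atoms. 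It therefore wins for $k=3$ and small $\lambda_{\mathrm{post}}$.
\end{itemize}

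The paper's Lemma~\ref{lem:mergecluster-completeness} makes exactly this inference (``by the mutual-exclusion property \dots\ this implies UNSAT''), so it offers no repair. To close the argument, strengthen mutual exclusion (or restrict Eq.~(\ref{eq:postobjective})) to require $\mathrm{UNSAT}(\psi_j\wedge\mathsf{sp}(P,\phi))$ for every $P$ satisfying some $\psi_i$ with $i\neq j$. Then both directions of your claim hold and the rest of your plan goes through.
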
}

\begin{restatable}{theorem}{Main}\label{thm:main}
\textsc{(Correctness of Active Learning)}. Given a finite hypothesis space $\hs$ containing the ground-truth program, Algorithm~\ref{alg:main} always terminates and returns a program $P^* \in \hs$ that is semantically equivalent to the user's intended program (under the assumption that the user provides correct answers to each query).
\end{restatable}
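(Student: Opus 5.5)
We prove Theorem~\ref{thm:main} in two parts. The invariant part shows that the ground-truth program $P_u$ (or a program semantically equivalent to it) is never removed from $\hs$. The progress part shows that each completed iteration strictly shrinks the number of semantic equivalence classes in $\hs$. Termination then follows because $\hs$ is finite, and correctness follows from the invariant together with the exit condition at line~\ref{alg:main:check-termination}.

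\emph{Invariant.} Suppose $P_u \in \hs$ at the start of an iteration and a query $\query=(\precond,\postcond_1,\dots,\postcond_k)$ is posed. By Theorem~\ref{thm:optimality}, $\query$ lies in the feasible set of the decomposed problem. We rely on {\sc GenerateQuery} producing only valid queries in the sense of Definition~\ref{def:valid-query}. By coverage and mutual exclusion, $P_u$ satisfies $\{\precond\}P_u\{\postcond_i\}$ for exactly one index $i$. A correct user answers $\postcond_i$, so the filter at line~\ref{alg:main:refine-hypothesis} keeps $P_u$. Iterations that take the {\sc RefinePredicates} branch leave $\hs$ unchanged. Hence $P_u\in\hs$ is preserved. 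When the loop exits, $\mathsf{NumUnique}(\hs)=1$, so every program in $\hs$, including the returned $P^*$, is semantically equivalent to $P_u$.

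\emph{Progress.} Consider an iteration that reaches the user query. Then $\precond\neq\bot$. We argue that $\dppre(\precond)>0$, i.e., that {\sc GetBestPrecondition} returns $\bot$ rather than a precondition separating no pairs. Under this assumption, some pair $P_1,P_2\in\hs$ disagrees on every input satisfying $\precond$. Taking $\precond$ to be satisfiable, with a witness $x$, we get $P_1(x)\neq P_2(x)$. The next step is to show that $P_1$ and $P_2$ land in different clusters $\hs_{\precond,i}$. This should follow because {\sc GenerateQuery} groups programs by their symbolic behaviour under $\precond$ and only merges whole groups. Since the answer keeps exactly one cluster, and there are at least two nonempty clusters, at least one semantic class is eliminated. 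If {\sc GenerateQuery} could put all programs in a single cluster, then $\dpower=0$; we would need to check that the optimizer, or a side condition $k\ge2$, rules this out.

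\emph{Refinement branch (main obstacle).} The hardest step is showing that the branch at line~\ref{alg:main:refine-preconds} cannot loop forever. We would first argue that when $\mathsf{NumUnique}(\hs)>1$, two semantically distinct programs differ on some input. {\sc RefinePredicates} (Algorithm~\ref{alg:refine-predicates}) adds atoms to $\Upre$; we need that it eventually adds an atom characterizing such a region, for instance a predicate pinning down a concrete distinguishing input. After that, {\sc GetDistinguishing} yields a nonempty $\Phi$ and {\sc GetBestPrecondition} returns a non-$\bot$ precondition. We would rely on the specification of Algorithm~\ref{alg:refine-predicates} to guarantee that each refinement step strictly enlarges the set of separable pairs, or directly adds a distinguishing atom. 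Given that guarantee, there are only finitely many refinements between consecutive queries.

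With both the refinement bound and strict progress, the number of semantic classes, a natural number at most $|\hs|$, decreases after finitely many steps. It reaches $1$ while $P_u$ is still in $\hs$, which completes the proof.
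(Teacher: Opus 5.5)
Your skeleton matches the paper's. The invariant is Lemma~\ref{lemma:correct-prog}, and the progress part is the induction of Lemma~\ref{lemma:termination}. Counting semantic classes instead of $|\hs|$ is fine, since equivalent programs have the same strongest postcondition and so never straddle bins, but $|\hs|$ is simpler.

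\textbf{Gap in the progress step.} You infer that the pair $P_1,P_2$ distinguished by $\precond$ lands in different answer bins ``because \textsc{GenerateQuery} only merges whole groups.'' This does not follow. Grouping by $\mathsf{StrongestPost}$ only puts them in different fine-grained clusters $C_a\neq C_b$, and \textsc{MergeClusters} is free to map $C_a$ and $C_b$ to the same bin. It is also not what you need: you need the returned query to have at least two \emph{nonempty} bins. That cannot be extracted from the optimality of \textsc{MergeClusters}. The mapping that puts every cluster in one bin has $\dpower=0$ but can have very cheap separators, and nothing in Eq.~(\ref{eq:postobjective}) stops it from winning when balanced mappings need costly or nonexistent separators. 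The paper does not derive this either. It lists it as an explicit hypothesis of the proof: \textsc{GenerateQuery} returns at least two postconditions, with each bin nonempty ``by construction.'' With that hypothesis, Corollary~\ref{cor:sep-valid} and mutual exclusion show that every program in an unselected bin violates the chosen $\postcond_i$. So $|\hs|$ strictly drops, and induction on $|\hs|$ closes the argument. You should state this as an assumption rather than leave it ``to check.''

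\textbf{Refinement branch.} Your worry here is legitimate and goes beyond the paper. Case~2 of Lemma~\ref{lemma:termination} simply asserts that \textsc{GetBestPrecondition} returns a distinguishing precondition whenever $\hs$ contains semantically distinct programs. However, the guarantee you hope to read off Algorithm~\ref{alg:refine-predicates} is not there. In that branch $\hs$ is unchanged, so a second call to \textsc{RefinePredicates} adds exactly the same admissible atoms as the first. If \textsc{Admissible} rejects the atoms needed to build any distinguishing cube, \textsc{GetBestPrecondition} returns $\bot$ forever. The \textsc{RefinePostPredicates} loop inside \textsc{GenerateQuery}, which you do not mention, has the same shape. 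So this too must enter as a hypothesis, for example that whenever $\mathsf{NumUnique}(\hs)>1$ a non-$\bot$ precondition and a separable partition are eventually obtained. Once both hypotheses are explicit, your argument and the paper's coincide.
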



\begin{figure}[t]
\centering
\begin{minipage}{0.48\textwidth}
\small
\begin{algorithm}[H]
\caption{\textsc{GetDistinguishing}$(P_1, P_2, \Upre)$}

\label{alg:compute-distinguishing}
\begin{algorithmic}[1]
\Require Programs $P_1, P_2$; universe $\Upre$
\Ensure Set $\mathcal{C}$ of distinguishing cubes over $\Upre$
\State $\varphi \gets \mathsf{WeakestPre}(  \mathsf{assert}(P_1(x) \neq P_2(x))$
\State $D \gets \textsc{DNF}(\varphi)$; $\mathcal{C} \gets \varnothing$
\ForAll{clause $d \in D$}
  \State $\mathcal{C} \gets \mathcal{C} \cup \Call{FindImplyingCubes}{d, \Upre}$
\EndFor
\State \Return $\mathcal{C}$
\end{algorithmic}
\end{algorithm}
\end{minipage}
\hfill
\begin{minipage}{0.48\textwidth}
\begin{algorithm}[H]
\caption{\textsc{RefinePredicates}$(\hs, \Upre)$}
\label{alg:refine-predicates}
\begin{algorithmic}[1]
\small
\Require Hypothesis space $\hs$;  universe $\Upre$
\Ensure Expanded universe $\Upre$
\ForAll{distinct pairs $(P_1,P_2)\in \hs\times\hs$}
\State $\varphi \gets \mathsf{WeakestPre}(  \mathsf{assert}(P_1(x) \neq P_2(x))$
  \ForAll{$\alpha \in \mathsf{Atoms}(\varphi)$}
    \If{\Call{Admissible}{$\alpha$}} $\Upre \gets \Upre \cup \{\alpha\}$
    \EndIf
  \EndFor
\EndFor
\State \Return $\Upre$
\end{algorithmic}
\end{algorithm}
\end{minipage}
\vspace{-0.1in}
\caption{Auxiliary procedures for finding distinguishing predicates and refining predicate universe. The  \textsc{Admissible} procedure subjects each atom to an application-specific admissibility test and may exclude predicates based on syntactic or semantic complexity. }
\vspace{-0.1in}
\end{figure}

\subsection{Precondition Synthesis}
\label{sec:precondition-synthesis}

\begin{figure}[t]
\centering
\begin{minipage}{0.98\linewidth}
\newtcolorbox{blk}[1]{colback=white,colframe=black,boxrule=0.6pt,arc=2pt,
  left=6pt,right=6pt,top=6pt,bottom=6pt,fonttitle=\bfseries,title={#1}}

\begin{blk}{Given}
$\Upre=\{A_1,\dots,A_n\}$ and  $\Phi(i,j)=\langle \Phi_{ij}^1,\dots,\Phi_{ij}^{K_{ij}}\rangle$
is an ordered list of distinguishing predicates for program pair $P_i, P_j$.
\end{blk}

\begin{blk}{Decision Variables}
\begin{itemize}[leftmargin=1.1em]
  \item \emph{Atom selectors} $a_t \in {0,1}$ for $t \in [n]$, where each $a_t = 1$ indicates that atom $A_t$ is selected.
\item \emph{Predicate selectors} $d_{ij}^k \in \{0,1\}$ for $i < j$ and $k \in [K_{ij}]$, 
where $d_{ij}^k = 1$ indicates that predicate $\Phi_{ij}^k$ is selected for pair $(P_i, P_j)$.  

\item \emph{Distinguished-pair flags} $p_{ij} \in \{0,1\}$ for $i < j$, 
where $p_{ij} = 1$ indicates that pair $(P_i, P_j)$ is distinguished.  
\end{itemize}
\end{blk}

\begin{blk}{Hard Constraints}
\vspace{-3pt}
\[
\begin{aligned}
\textbf{(DP)} \quad  p_{ij}\!\Longleftrightarrow\!\bigvee_{k=1}^{K_{ij}} d_{ij}^k, \quad 
&&\textbf{(IA)} \quad d_{ij}^k\!\Longrightarrow\!\!\bigwedge_{t\in\mathrm{atoms}(\Phi_{ij}^k)}\!a_{t} \quad  
\textbf{(SAT)}  \quad a_t \Rightarrow A_t
\end{aligned}
\]
\vspace{-5pt}
\end{blk}

\begin{minipage}[t]{0.48\linewidth}
\begin{blk}{Objective}
\[
\max\ \sum_{i<j} p_{ij}\;-\;\lambda_{\mathrm{pre}}\sum_{t=1}^{n} a_t
\]
\end{blk}
\end{minipage}\hfill
\begin{minipage}[t]{0.48\linewidth}
\begin{blk}{Synthesized Precondition}
\begin{align}
\label{eq:getbestprecond}
\precond(x) = \bigwedge_{t : a_t=1} A_t(x),
\end{align}
\end{blk}
\end{minipage}

\end{minipage}
\caption{Optimization modulo theory formulation of {\sc GetBestPrecondition}}
\label{fig:omt-encoding}
\vspace{-0.1in}
\end{figure}

We now explain the {\sc GetBestPrecondition} procedure for finding a single
predicate $\precond$ that optimizes the first objective (Eq.~\ref{eq:preobjective}) in
Definition~\ref{def:decomposed-query-selection}. Recall that our goal is to find
a conjunction of predicates over $\Upre$ that will differentiate as many
program pairs as possible, but with a complexity penalty. To find such a
precondition, our method utilizes the distinguishing predicates $\Phi$
precomputed for each program pair in Algorithm~\ref{alg:main}. Specifically,
rather than performing a blind search over all cubes in $\Upre$, we leverage
$\Phi$ to restrict attention to conjunctions that are known to be useful. Each distinguishing predicate in $\Phi(i,j)$ guarantees that $P_i$ and $P_j$ behave differently on this input region and thereby specifies exactly which atoms of $\Upre$ must be included for the precondition to distinguish this pair.
 Because each predicate in $\Phi(i,j)$ is a sufficient condition for $P_i$ and $P_j$ to differ, the semantic reasoning is already handled in Algorithm~\ref{alg:main}. The remaining task, addressed in {\sc GetBestPrecondition}, is to find a consistent subset of these witnesses and combine their atoms into a single global precondition.


We encode {\sc GetBestPrecondition} as an Optimization
Modulo Theory (OMT) problem in
Figure~\ref{fig:omt-encoding}. Our formulation introduces three families of
variables: atom selectors $a_t$, which determine which atomic predicates from
$\Upre$ are included in the final conjunction; predicate selectors $d_{ij}^k$,
which track which distinguishing predicate is used as a witness for program pair $(P_i, P_j)$;
and pair flags $p_{ij}$, which indicate whether the pair $(P_i, P_j)$ is successfully
distinguished. Constraints (DP)--(SAT) tie these variables together. First, (DP)
ensures that a pair is marked as distinguished iff at least one distinguishing predicate for that pair is selected.
Next, (IA) enforces that if a predicate $\Phi_{ij}^k$ is chosen (meaning $d_{ij}^k$ is assigned to true), then all atoms that
occur in it are also selected.
Finally, (SAT) ties the atom-selection variables to their corresponding predicates by asserting 
$a_t \Rightarrow A_t$ for every atom $A_t \in \Upre$. 
These implications define the synthesized precondition
$ 
\phi(x) = \bigwedge_{t : a_t = 1} A_t(x)
$, 
with the  hard constraints collectively ensuring that $\phi$ is always satisfiable. 


The objective function (shown under Objective in Figure~\ref{fig:omt-encoding}) trades off disambiguation power and complexity: the first term (sum over all $p_{ij}$'s)
rewards maximizing coverage by distinguishing as many pairs as possible, while
the second term (sum over $a_t$'s) penalizes the number of atoms selected to bias the solution toward
simpler and more interpretable preconditions. After solving  this OMT instance, we obtain  the precondition
$\phi$ by conjoining exactly those atoms $A_t$ whose corresponding indicator $a_t$ is true.


\begin{restatable}[Correctness of \textsc{GetBestPrecondition}]{theorem}{GetBestPrecond}
\label{cor:omt-equivalence}
\textsc{GetBestPrecondition} solves Eq.~(\ref{eq:preobjective}) in Definition~\ref{def:decomposed-query-selection}; i.e., letting $\precond(x)$ be as in Eq.~(\ref{eq:getbestprecond}) and $\complexity(\psi)$ be the number of conjuncts in $\psi$, then
\[
\precond \in \argmax_{\psi \in \textsc{Cube}(\Upre)}
\bigl[\dppre(\psi) - \lambda_{\mathrm{pre}}\cdot \complexity(\psi)\bigr].
\]
\end{restatable}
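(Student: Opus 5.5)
The plan is to set up a value-preserving correspondence between feasible OMT assignments and cubes over $\Upre$, and then compare optimal values. For any feasible assignment $(a,d,p)$, let $\psi_a = \bigwedge_{t: a_t=1} A_t$. For any cube $\psi$, let $a^\psi$ be its atom indicator vector. Since $\complexity$ counts conjuncts, the penalty terms coincide: $\lambda_{\mathrm{pre}}\sum_t a_t = \lambda_{\mathrm{pre}}\cdot\complexity(\psi_a)$. It then suffices to prove two inequalities. (i) For every feasible assignment, $\sum_{i<j} p_{ij} \le \dppre(\psi_a)$. (ii) For every satisfiable cube $\psi$, there is a feasible assignment with $a=a^\psi$ and $\sum_{i<j} p_{ij} = \dppre(\psi)$. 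Together these show that the OMT optimum equals $\max_\psi[\dppre(\psi)-\lambda_{\mathrm{pre}}\complexity(\psi)]$ and is attained by the returned $\precond$. Unsatisfiable cubes can be set aside, since the paper requires a satisfiable precondition and (SAT) enforces satisfiability.

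\textbf{Soundness, inequality (i).} Suppose $p_{ij}=1$. By (DP), some $d_{ij}^k=1$. By (IA), every atom of $\Phi_{ij}^k$ is selected, so $\psi_a \Rightarrow \Phi_{ij}^k$ syntactically, because both are cubes and the atoms of $\Phi_{ij}^k$ are a subset of those of $\psi_a$. By construction in \textsc{GetDistinguishing}, $\Phi_{ij}^k(x)\Rightarrow P_i(x)\neq P_j(x)$. Hence $\forall x.\,\psi_a(x)\Rightarrow P_i(x)\neq P_j(x)$, so the pair is counted in $\dppre(\psi_a)$. Distinct index pairs give distinct unordered program pairs, so summing over all pairs yields the inequality.

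\textbf{Completeness, inequality (ii), the main obstacle.} Fix a satisfiable cube $\psi$ and a pair $(P_i,P_j)$ with $\forall x.\,\psi(x)\Rightarrow P_i(x)\neq P_j(x)$. We need some $\Phi_{ij}^k$ whose atoms are all in $\psi$. The plan has three steps.

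1. Because $\mathsf{WeakestPre}$ computes the exact weakest precondition $\varphi_{ij}$ of disagreement, we get $\psi \Rightarrow \varphi_{ij}$. Hence $\psi$ is an implicant of $\varphi_{ij}$ over $\Upre$.
2. Every implicant cube over a literal-closed universe contains some prime implicant over that universe. Shrink $\psi$ by greedily dropping literals while preserving implication; the result is a prime implicant $\pi \subseteq \psi$.
3. Show $\pi$ appears in $\Phi(i,j)$, using the paper's stated fact that \textsc{GetDistinguishing} extracts the prime implicants of $\varphi_{ij}$ over $\Upre$.

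Step 3 is the delicate point. The DNF-clause-wise \textsc{FindImplyingCubes} must yield all prime implicants over $\Upre$, not merely those implying a single clause. I would invoke the paper's description, that the procedure returns the prime implicants over $\Upre$, as the specification of this step, and note that this completeness assumption is exactly what the theorem needs.

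Given that, set $d_{ij}^k=1$ for that $\pi=\Phi_{ij}^k$ and $p_{ij}=1$. For undistinguished pairs, set all $d$ and $p$ to $0$. Then (IA) holds because the atoms of $\pi$ lie within $\psi$, (DP) holds by construction, and (SAT) holds by choosing any $x$ satisfying $\psi$. This gives (ii). Combining (i) and (ii), for any optimal OMT assignment $a$,
\[
\dppre(\precond)-\lambda_{\mathrm{pre}}\complexity(\precond) \ge \sum_{i<j} p_{ij}-\lambda_{\mathrm{pre}}\sum_t a_t \ge \dppre(\psi)-\lambda_{\mathrm{pre}}\complexity(\psi)
\]
holds for every $\psi$, which proves the claim.
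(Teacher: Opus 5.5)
Your proposal is correct and follows the paper's proof: inequality (i) is Lemma~\ref{lem:inv}, and inequality (ii) is the paper's construction of a feasible model $M_\psi$ for each satisfiable cube. That construction rests on the same coverage--completeness property of \textsc{GetDistinguishing} that the paper simply posits as assumption (iv); your caveat that clause-wise \textsc{FindImplyingCubes} need not return every prime implicant of the disjunction is fair, and the paper does not address it either. The only difference is that you bound the true objective of the returned $\precond$ from below by the OMT optimum using (i) alone, so you can skip the paper's exchange argument (Lemma~\ref{lem:opt-counts}) proving $\sum_{i<j}p_{ij}(M^\star)=\dppre(\precond_{M^\star})$ at an optimum.
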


\begin{example} Consider the following hypothesis space of \textsc{ImageEye} programs:
{\small
\begin{align*}
\prog_1 :=& \texttt{ Is(Face)} \ \wedge \ \texttt{HasAttribute(HairColor=brown)} \rightarrow \texttt{Brighten} \\
\prog_2 :=& \texttt{ Is(Face)} \ \wedge \ \texttt{Smiling} \rightarrow \texttt{Brighten} \\ 
\prog_3 :=& \texttt{ Find(Is(Face), Is(Guitar), Above)} \rightarrow \texttt{Brighten}
\end{align*}}%
Here, $\prog_1$ brightens all faces with brown hair, $\prog_2$ brightens all smiling faces, and $\prog_3$ brightens all faces that are above guitars. Assuming the input image contains two objects $\{x_1, x_2\}$, suppose that \textsc{GetDistinguishing} generates the following constraints for each pair of programs:

{\footnotesize
\setlength{\jot}{2pt} 
\begin{align*}
\Phi_{1,2} =&\ \big\{
\redcolor{\texttt{HasLabel($x_1$, face)}} \redcolor{ $\wedge$ }
\redcolor{\texttt{HasHairColor($x_1$, brown)}} \redcolor{ $\wedge$ }
\redcolor{$\neg$\texttt{HasExpression($x_1$, smiling)}}, \\
&\quad
\text{\texttt{HasLabel($x_1$, face)}} \wedge
\text{\texttt{HasHairColor($x_1$, blonde)}} \wedge
\text{\texttt{HasExpression($x_1$, smiling)}}, 
\big\} \\[0.3em]
\Phi_{1,3} =&\ \big\{
\redcolor{\texttt{HasLabel($x_1$, face)}} \redcolor{ $\wedge$ }
\redcolor{\texttt{HasHairColor($x_1$, brown)}} \redcolor{ $\wedge$ }
\redcolor{$\neg$\texttt{HasLabel($x_2$, guitar)}}, \\
&\quad
\bluecolor{\texttt{HasLabel($x_1$, face)}} \bluecolor{ $\wedge$ }
\bluecolor{\texttt{HasHairColor($x_1$, blonde)}} \bluecolor{ $\wedge$ }
\bluecolor{\texttt{HasLabel($x_2$, guitar)}}  \bluecolor{ $\wedge$ }
\bluecolor{\texttt{Above($x_1$, $x_2$)}}
\big\} \\[0.3em]
\Phi_{2,3} =&\ \big\{
\text{\texttt{HasLabel($x_1$, face)}} \wedge
\text{\texttt{HasExpression($x_1$, smiling)}} \wedge
\neg\text{\texttt{HasLabel($x_2$, guitar)}}, \\
&\quad
\bluecolor{\texttt{HasLabel($x_1$, face)}} \bluecolor{ $\wedge$ }
\bluecolor{$\neg$}\bluecolor{\texttt{HasExpression($x_1$, smiling)}} \bluecolor{ $\wedge$ }
\bluecolor{\texttt{HasLabel($x_2$, guitar)}} \bluecolor{ $\wedge$ }
\bluecolor{\texttt{Above($x_1$, $x_2$)}}
\big\}
\end{align*}}%

Here, choosing the \redcolor{\text{red}} or the \bluecolor{\text{blue}} constraints maximizes distinguished program pairs. Since the red constraints contain fewer unique atoms, they will be selected to generate the precondition:

{\small\[
\begin{array}{c}
\text{\texttt{HasLabel($x_1$, face)}} \ \wedge \ \text{\texttt{HasHairColor($x_1$, brown)}} \ \wedge \  \\
\neg\text{\texttt{HasExpression($x_1$, smiling)}} \ \wedge \ \neg\text{\texttt{HasLabel($x_2$, guitar)}}
\end{array}
\]}%

This precondition means that the image contains two objects: the first is a face with brown hair that is not smiling, and the second is an object that is not a guitar. 

\end{example}

\subsection{Overview of Multiple-Choice Answer Generation}\label{sec:query-gen}

Now that we have computed the optimal precondition $\precond$, we need to generate optimal answer choices for $\precond$. At a high level, the precondition $\precond$ identifies the region of the input space on which the programs in $\hs$ exhibit qualitatively different behaviors, and the answer choices should group these behaviors into a small set of mutually exclusive, collectively exhaustive scenarios that the user can reliably distinguish.  Algorithm~\ref{alg:generate-query} summarizes how we generate these answer choices. 

To start with, Algorithm~\ref{alg:generate-query}  calls \textsc{GroupBySP} to partition the programs in $\hs$ based on their output behavior under $\precond$ (line~2), producing equivalence classes
\[
C_i = \{ P \in \hs \mid \mathsf{StrongestPost}(P, \precond) = \gamma_i \}
\qquad(\forall i\in\{1,...,N\}).
\]
Here, $\mathsf{StrongestPost}$ denotes the strongest postcondition; thus, each $C_i$ consists of all programs in $\hs$ whose output behaviors are indistinguishable under $\precond$.  In principle,  we could now directly translate each equivalence class $C_i$ into an answer choice based on $\gamma_i$. However, this approach is undesirable for two reasons: (1) the number of distinct clusters $N$ may be too large, resulting in an impractically long list of answer choices, and (2) the postconditions $\gamma_i$ are typically more complex than needed.



\begin{algorithm}[t]
\caption{\textsc{GenerateQuery}}
\label{alg:generate-query}
\begin{algorithmic}[1]
\Require Precondition $\precond$, hypothesis space $\hs$, postcondition universe $\Upost$
\Ensure Query $(\precond, \psi_1,\dots,\psi_k)$

\While{true}
  \State $\{C_1,\dots,C_N\} \gets \Call{GroupBySP}{\hs,\precond}$
  \State $(B_1,\dots,B_k) \gets \Call{MergeClusters}{\{C_i\},\,k, \precond}$
  \For{$i = 1,\dots,k$}
    \State $\psi_i \gets \Call{ConstructSeparator}{B_i,\{B_j\}_{j\neq i},\Upost}$
  \EndFor
  \If{$\exists\, i \in \{1,\dots,k\} \text{ with } \psi_i = \bot$}
    \State $\Upost \gets \Upost \cup \Call{RefinePostPredicates}{\hs,\precond,\Upost}$
  \Else \  \Return $(\precond,\psi_1,\dots,\psi_k)$
  \EndIf
\EndWhile

\AlgSeparator 

\Procedure{RefinePostPredicates}{$\hs,\precond,\Upost$}
  \State \Return $ \bigcup_{P\in \hs}\{\,a \in \mathsf{Atoms}(\mathsf{StrongestPost}(P,\precond)) \mid \mathsf{Admissible}(a)\,\}$
\EndProcedure
\end{algorithmic}
\end{algorithm}
To address these challenges, Algorithm~\ref{alg:generate-query} then invokes \textsc{MergeClusters} to coarsen the initial fine-grained partition $\{C_1, \dots, C_N\}$ into at most $k$ disjoint bins $B_1, \dots, B_k$, where $k$ is a small constant (typically 3 or 4). 
Each bin $B_i$ corresponds to answer choice $i$ in the final multiple-choice query, such that selecting option $i$ retains only the programs in $B_i$ and eliminates all others. 
\textsc{MergeClusters} seeks to produce bins of roughly equal size so that each answer removes a comparable portion of the hypothesis space. Next,
Algorithm~\ref{alg:generate-query} calls {\sc ConstructSeparator} (line 5) to convert each bin $B_i$ into a symbolic postcondition $\postcond_i$ over the predicate universe $\Upost$, ensuring that $\postcond_i$ is consistent with the behaviors in $B_i$ and excludes all programs in other bins. 
If no such postconditions can be synthesized (meaning {\sc ConstructSeparator} returns $\bot$ for at least one cluster), the algorithm calls \textsc{RefinePostPredicates} (defined in lines~9--10) to extend $\Upost$ with new atomic predicates derived from the strongest postcondition.

We next describe  the \textsc{MergeClusters} and \textsc{ConstructSeparator} procedures in more detail. We start with 
\textsc{ConstructSeparator} because it is internally used by \textsc{MergeClusters} to
evaluate the cost of candidate merges.

\begin{algorithm}[t]
\caption{\textsc{ConstructSeparator}$(B,\{B_1, \ldots, B_k\},\mathcal{U}^+)$}
\label{alg:sep}
\begin{algorithmic}[1]

\State $\Phi^+ \gets \bigvee_{P\in B}\mathsf{StrongestPost}(P,\precond)$ \Comment{Positive spec for target bin}
\State $\varphi_j \gets  \, \bigvee_{P\in B_j}\mathsf{StrongestPost}(P,\precond)$  \Comment{Negative specs for each $B_j$}
\If{$\exists j.\ \mathrm{SAT}(\Phi^+\land\varphi_j)$} \Return 
$\bot$ 
\EndIf
\State $\mathcal{A}\gets\{\,a\in\mathcal{U}^+\mid \mathrm{UNSAT}(\Phi^+\land\neg a)\,\}$ \Comment{All atoms implied by $\Phi^+$}
\State $S \gets \{ s_a \ | \ a \in \mathcal{A} \}$  \Comment{Indicators $s_a$ denoting that atom $a$ is chosen}
\State  $\mathcal{C}\gets\emptyset$ \Comment{Counterexamples}
\While{true}
\State $\psi \gets \textsc{MaxSAT}\big(\{\,\lnot s_a \mid a \in \mathcal{A}\,\},\, \mathcal{C}\big)$%
\Comment{Returns $\psi \triangleq \bigwedge_{s_a = 1} a$ with as few atoms as possible}

  \If{$\psi=\bot$} \Return $\bot$ 
  \EndIf
  \If{$\forall j. \ \mathrm{UNSAT}(\varphi_j\land\psi)$} \Return $\psi$ \EndIf
 \State Choose $j,m$ with $m\models(\varphi_j\land\psi)$
  \State $\mathcal{C}\gets \mathcal{C}\ \cup\ \Big\{\ \displaystyle\bigvee_{a\in\mathcal{A},\ m\models \neg a} s_a\ \Big\}$
\EndWhile
\end{algorithmic}
\end{algorithm}

\subsection{Separator Construction}\label{sec:construct-separator}

The goal of {\sc ConstructSeparator} (summarized in Algorithm~\ref{alg:sep}) 
is to synthesize a postcondition 
that captures the behavior of programs in the target bin while excluding all others. Specifically, it takes as input the target bin $B$, the negative bins $\{B_1,\ldots,B_k\}$, and the atom universe $\mathcal{U}^+$, and aims to find a postcondition $\psi$ that (1) holds for all behaviors represented by $B$, 
(2) rules out behaviors from every other bin $B_i$, (3) is a cube $\psi \in \mathsf{Cubes}(\mathcal{U}^+)$, and (4) contains as few atoms as possible. The first two conditions ensure that $\postcond$ is correct (i.e., it separates $B$ from the other bins), and the last two conditions ensure that it is interpretable (i.e., the answer choices are understandable).

Algorithm~\ref{alg:sep} starts  by constructing the \emph{positive specification} 
$\Phi^+ = \bigvee_{P\in B}\mathsf{StrongestPost}(P,\precond)$  for the target bin $B$
and, a \emph{negative specification} 
$\varphi_j = \bigvee_{P\in B_j}\mathsf{StrongestPost}(P,\precond)$ for every other $B_j$. 
If $\Phi^+\land\varphi_j$ is satisfiable for any $j$, then the behaviors of $B$ and $B_j$ overlap on at least one input. In this case, no separator $\psi$  can distinguish $B$ and $B_j$ because $\psi$ needs to hold for \emph{all} $B$ behaviors but reject \emph{all} $B_j$ behaviors.   Thus, in this case, the procedure terminates with $\bot$ to indicate failure.

Otherwise, the algorithm proceeds to construct a separator $\postcond$ that satisfies the four conditions mentioned earlier. Here, satisfying conditions (1) and (2) is straightforward: since the check on line 3 guarantees that $\Phi^+ \wedge (\bigvee_j \varphi_j)$ is unsatisfiable, the entailment $\Phi^+ \Rightarrow \neg(\bigvee_j \varphi_j)$ holds, and we can take $\psi$ to be a Craig interpolant for this entailment -- that is, $\Phi^+ \Rightarrow \psi$ and $\psi \Rightarrow \neg(\bigvee_j \varphi_j)$. Then, $\psi$ satisfies (1) since $\Phi^+ \Rightarrow \psi$ by definition, and (2) since $\psi \Rightarrow (\bigvee_j \varphi_j)$ implies that $\psi \wedge \varphi_j$ is unsatisfiable for every $j$. In other words, any Craig interpolant satisfies the first two conditions, but we must ensure that the interpolant is also a cube and contains as few atoms as possible to satisfy conditions (3) and (4).





\begin{wrapfigure}{r}{0.45\linewidth}
\vspace{-0.1in}
\begin{mdframed}[backgroundcolor=gray!10, linewidth=0.5pt, roundcorner=5pt]
\noindent\textbf{Query:} Suppose you have an input table with 2 rows and 2 columns. If the value in cell $(1,1)$ is $-1$ and the value in cell $(2, 1)$ is $0$, which of the following is true of your output table?:
\begin{enumerate}[label=(\alph*), itemsep=0pt, leftmargin=*]
\item  The table has 2 rows
\item  The table has 1 row, and the value in cell $(1,1)$ is $0$. 
\item  The table has 1 row, and the value in cell $(1, 1)$ is $-1$.
\end{enumerate}
\end{mdframed}
\vspace{-1em}
\caption{Generated query.}
\label{fig:mc-ex}
\vspace{-0.2in}
\end{wrapfigure}

Hence, rather than using an off-the-shelf interpolation tool, our method  
searches for a \emph{minimum cube interpolant} using a custom algorithm based on
counterexample-guided inductive synthesis (CEGIS). 
Specifically, it  treats each atom in $\mathcal{U}^+$ as a candidate building block, and incrementally constructs the interpolant by alternating between an \emph{optimization} phase and a \emph{verification} step. Given a set of counterexamples $\mathcal{C}$, the algorithm first attempts to solve an optimization problem subject to $\mathcal{C}$ and then checks whether the resulting solution is indeed a valid interpolant. If not, it strengthens the specification $\mathcal{C}$ and solves a more constrained optimization problem. Because each step preserves optimality subject to an over-approximation of the true specification, the first solution found is guaranteed to be the optimal interpolant.

In more detail, line 3 of Algorithm~\ref{alg:sep}  first restricts the search space to atoms $\mathcal{A}=\{\,a\in\mathcal{U}^+\mid \mathrm{UNSAT}(\Phi^+\land\neg a)\,\}$ that are already entailed by the positive
specification.  This is valid because any cube interpolant must consist only of atoms implied by $\Phi^+$. Then the CEGIS loop (lines 7--12) alternates between using \textsc{MaxSAT} to compute the optimal solution consistent with $\mathcal{C}$ (line 8) and searching for counterexamples (line 10). The \textsc{MaxSAT} problem is over the Boolean variables $s_a$ defined on line 5, where $s_a$ indicates whether atom $a\in\mathcal{A}$ is in $\postcond$ (i.e.,  $\psi$ is defined as $\bigwedge_{s_a = 1} a$). Then, the optimization problem  is to minimize the number of atoms in $\psi$ subject to $\mathcal{C}$ (line 6).

If the computed $\psi$ successfully separates  the positive and negative specifications (i.e., 
$\psi\land\varphi_j$ is unsatisfiable for all $j$), then it is returned as the solution (line 10). Otherwise, the solver produces a model
$m\models(\varphi_j\land\psi)$ for some $j$, from which a new blocking clause
$\bigvee_{a\in\mathcal{A},\,m\models\neg a}s_a$ is derived.
This clause enforces that any  solution must include at least one atom
that contradicts the counterexample $m$, thereby eliminating $\psi$ and other cubes that fail
for the same reason. Finally, this clause is added to the set of constraints (line 12) and the CEGIS loop continues.
\begin{restatable}[Correctness of {\sc ConstructSeparator}]{theorem}{ConstructSeparator}\label{thm:sep}
{\sc ConstructSeparator} returns the smallest cube 
$\psi\in\mathsf{Cubes}(\mathcal{U}^+)$ such that
$ \Phi^+\Rightarrow\psi$ and  $
\forall j.\ \mathrm{UNSAT}(\psi\land\varphi_j)$, or $\bot$ if no such cube exists.
\end{restatable}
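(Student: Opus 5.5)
The plan is to establish three facts: the pruning on lines 3--4 loses nothing, the \textsc{MaxSAT} constraints relax the true specification soundly, and the loop terminates. Optimality then follows by a standard CEGIS-with-optimal-oracle argument.

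\emph{Step 1: reduction to subsets of $\mathcal{A}$.} First I justify the early return on line 3. Suppose $\mathrm{SAT}(\Phi^+\land\varphi_j)$ holds with a model $m$. Any $\psi$ with $\Phi^+\Rightarrow\psi$ satisfies $m\models\psi$, so $m\models\psi\land\varphi_j$. Hence no separator exists, and returning $\bot$ is correct.

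Next I show that a cube $\psi$ with $\Phi^+\Rightarrow\psi$ uses only atoms in $\mathcal{A}$. Each conjunct $a$ of $\psi$ satisfies $\Phi^+\Rightarrow a$, i.e.\ $\mathrm{UNSAT}(\Phi^+\land\neg a)$. (If $\Phi^+$ is unsatisfiable, every atom lies in $\mathcal{A}$, which is harmless.) Conversely, every subset $T\subseteq\mathcal{A}$ gives a cube $\psi_T=\bigwedge_{a\in T}a$ that is implied by $\Phi^+$.

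So the feasible separators are exactly the $\psi_T$ with $T\subseteq\mathcal{A}$ and $\forall j.\ \mathrm{UNSAT}(\psi_T\land\varphi_j)$. The goal becomes finding a minimum-cardinality such $T$.

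\emph{Step 2: soundness of the counterexample clauses.} This is the key invariant: every feasible $T$ satisfies every clause in $\mathcal{C}$. Take a clause produced from a model $m\models\varphi_j\land\psi$. Suppose a feasible $T$ violated it. Then $T$ would contain no atom $a$ with $m\models\neg a$, so $m\models a$ for all $a\in T$, i.e.\ $m\models\psi_T$. But then $m\models\psi_T\land\varphi_j$, contradicting feasibility.

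Consequently, the feasible set is contained in the solution set of the \textsc{MaxSAT} constraints at every iteration. This has two consequences:
\begin{itemize}
\item If \textsc{MaxSAT} returns $\bot$ (the constraints are unsatisfiable), no feasible $T$ exists, so returning $\bot$ is correct.
\item If it returns $\psi$ minimizing $|T|$ over the relaxation and $\psi$ passes the check on line 10, then $\psi$ is feasible. Its size is at most that of any feasible $T$, since feasible sets lie inside the relaxation. So $\psi$ is a smallest separator.
\end{itemize}

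\emph{Step 3: termination.} This is the main obstacle, since the statement implicitly requires the loop to halt. I would argue that each new clause excludes the current $T$. Because $m\models\psi$, every $a\in T$ has $m\models a$, so the clause contains no $s_a$ with $a\in T$. The assignment encoding $T$ therefore violates the clause. Each iteration thus removes at least one of the finitely many $2^{|\mathcal{A}|}$ subsets, and this requires $\mathcal{U}^+$ to be finite, as the paper assumes for predicate universes. The loop ends within $2^{|\mathcal{A}|}+1$ iterations, either with a verified $\psi$ or with $\bot$.

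I also need the \textsc{MaxSAT} call to solve the hard constraints $\mathcal{C}$ exactly while minimizing the number of true $s_a$. I would state this as the assumed specification of the solver.
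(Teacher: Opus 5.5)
Your proposal is correct and follows essentially the same route as the paper's proof: you restrict to atoms in $\mathcal{A}$, show that every counterexample clause is satisfied by all feasible cubes (so the \textsc{MaxSAT} optimum over the relaxation is a lower bound on the smallest separator), and get termination from the finiteness of the candidate cubes. Your version is somewhat more careful than the paper's, since it explicitly justifies both $\bot$ exits (the line-3 overlap check and an unsatisfiable \textsc{MaxSAT} instance) and shows precisely why each new clause excludes the current candidate $T$.
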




\begin{example}
    Consider the following hypothesis space of three R programs:
{\small
\begin{align*}
\prog_1 &:= \texttt{Mutate(sum := Col(1) + Col(2))} \\
\prog_2 &:= \texttt{Filter(Col(1) < 0) |> Mutate(sum := Col(1) + Col(2))} \\
\prog_3 &:= \texttt{Filter(Col(1) = 0) |> Mutate(sum := Col(1) + Col(2))}
\end{align*}
}

Given an input table, $\prog_1$ computes a new column containing the sum of columns \texttt{1} and \texttt{2}. $\prog_2$ (resp. $\prog_3$) performs the same mutation, but first removes all rows where the value in column \texttt{1} is less than 0 (resp. equal to 0). Since there are only three programs, each program is placed in its own bin $B_i$. Under precondition $\phi := \texttt{cell}_{1,1} = -1 \wedge \texttt{cell}_{2,1} = 0$ (where $\texttt{cell}_{i,j}$ corresponds to the value in the $i$th row and $j$th column), and assuming that the input table contains 2 rows and 2 columns, the strongest postconditions are as follows:

{\footnotesize
\begin{align*}
    \textsf{StrongestPost}(\prog_1, \phi) =& \texttt{ num\_rows} = 2 \wedge \texttt{num\_columns} = 3 \wedge \texttt{output\_cell}_{1,1} = \texttt{cell}_{1,1} \wedge \texttt{output\_cell}_{1,2} = \texttt{cell}_{1,2} \wedge 
    \cdots  \\ 
    \textsf{StrongestPost}(\prog_2, \phi) =& \texttt{ num\_rows} = 1 \wedge \texttt{num\_columns} = 3 \wedge  \texttt{output\_cell}_{1,1} = \texttt{cell}_{2,1} \wedge \texttt{output\_cell}_{1,2} = \texttt{cell}_{2,2} \wedge \cdots\\ 
    \textsf{StrongestPost}(\prog_3, \phi) =& \texttt{ num\_rows} = 1 \wedge \texttt{num\_columns} = 3 \wedge  \texttt{output\_cell}_{1,1} = \texttt{cell}_{1,1} \wedge \texttt{output\_cell}_{1,2} = \texttt{cell}_{1,2} \wedge \cdots 
\end{align*}
}

These postconditions constrain the shapes of the output tables and the values therein. Suppose our atom universe $\Upost$ contains all of the atoms in the postconditions. To construct a separator for $B_1$, we must select interpolants that rule out $B_2$ and $B_3$. In this case, both $B_2$ and $B_3$ may be ruled out by the single atom $\texttt{num\_rows} = 2$, since both $\prog_2$ and $\prog_3$ are guaranteed to filter a row. For $B_2$, we need to find the simplest constraint that rules out both $B_1$ and $B_3$, while capturing the behavior of $B_2$. Selecting $\texttt{ num\_rows} = 1 \wedge \texttt{output\_cell}_{1,1} = \texttt{cell}_{2,1}$ as the separator satisfies both constraints and is the simplest such predicate. Finally, for $B_3$, the simplest such predicate is $\texttt{num\_rows} = 1 \wedge \texttt{output\_cell}_{1,1} = \texttt{cell}_{1,1}$. Thus, for this example, our method would pose the  multiple-choice query shown in Figure~\ref{fig:mc-ex}. 

\end{example}

\subsection{Merging Clusters}\label{sec:merge-clusters}

\begin{algorithm}[t]
\caption{\textsc{MergeClusters}}
\label{alg:merge-clusters}
\begin{algorithmic}[1]
\Require Clusters $C_1,\dots,C_N$, number of answers $k$, precondition $\precond$
\Ensure \Call{EvaluateObjective}{$\mathcal{F}, \precond$} $\geq O_{\mathrm{min}}$, or $\mathcal{F}$ is optimal 
\vspace{0.05in}
\Function{MergeClusters}{$\{C_i\}_{i=1}^N,\,k, \precond$}
\State $\mathcal{F} \gets \Call{LbPartition}{\{C_i\}_{i=1}^N,\,k}$ 
\State $O_{\mathrm{init}} \gets \Call{EvaluateObjective}{\mathcal{F}, \precond}$ \label{line:seed-eval} 
\State $(\mathcal{F},\,O_{\mathrm{best}}) \gets \Call{BranchAndBound}{\{C_i\}_{i=1}^N,\,\varnothing,\,O_{\mathrm{init}}, \precond}$ \label{line:branch-and-bound} 
\State \Return $\mathcal{F}$
\EndFunction
\vspace{0.05in}
  \Function{LbPartition}{($\{C_i\}_{i=1}^N$, $k$)} 
  \For{$i=1$ to $N$}
    \State $\psi_i \gets \Call{ConstructSeparator}{C_i,  \{ \cup_{ j \neq i} C_j \}, \Upost}$
    \State $w_i \gets |C_i| + \lambda\,\complexity(\psi_i)$
  \EndFor
  \State \Return $ \Call{LB-BinPack-Solver}{\{w_i\}_{i=1}^N,\,k}$
  \EndFunction  



\end{algorithmic}
\end{algorithm}

We now discuss the \textsc{MergeClusters} procedure (Algorithm~\ref{alg:merge-clusters}) for producing a partition of the hypothesis space into $k$ clusters. 
This algorithm takes as input the initial \emph{fine-grained partition} $C_1, \ldots, C_N$ (induced by the strongest postconditions $\gamma_i$) and produces \emph{coarse-grained} partition $\{B_1, \ldots, B_k\}$ for a fixed $k$, which we represent by a \emph{partition mapping} $\mathcal{F}:[N]\to [k]$, where $\mathcal{F}(i)=j$ indicates that cluster $C_i$ is included in bin $B_j$. For a fixed precondition and hypothesis space, any partition mapping $\mathcal{F}$ naturally induces a multiple-choice query, defined as follows:
\begin{definition}[Induced query]\label{def:induced-query}
Given hypothesis space $\mathcal{H}$, precondition $\phi$, and partition mapping $\mathcal{F}$, the \emph{query induced by} $(\mathcal{F},\phi^*)$ is
$(\phi,\,\psi_1(\mathcal{F}),\dots,\psi_k(\mathcal{F})) $, where
{
\[
\begin{aligned}
P_j(\mathcal{F}) & =  \{\,h \in \hs \mid \exists\, i \;.\; \mathcal{F}(i)=j \ \wedge\ h \in C_i\,\},
\qquad
\mathcal{N}_j(\mathcal{F}) 
= \left\{\, N_r \ \Bigm\vert\ \ r\neq j,\ N_r= \bigcup\nolimits_{\substack{\mathcal{F}(i)=r}} C_i \right\}
\\[4pt]
& \ \ \ \ \ \ \ \ \ \ \ \ \ \ \ \ \ \ \ \ \ \ \ \ \ \ \ \psi_j(\mathcal{F}) 
:= \textsc{ConstructSeparator}\bigl(P_j(\mathcal{F}),\,\mathcal{N}_j(\mathcal{F}),\,\Upost\bigr).
\end{aligned}
\]}
\end{definition}
\noindent Intuitively, each bin $B_j$ is represented by a separator $\psi_j$ that distinguishes it from all others. Our goal is to generate an informative and interpretable query, which corresponds to generating a partition mapping that results in balanced clusters and interpretable separators, respectively.

The key challenge is that evaluating the quality of a partition requires computing its separators, but this makes the optimization problem computationally intractable: the space of possible partitions is exponential in $N$, and each candidate partition requires invoking \textsc{ConstructSeparator}, which is already solving an NP-hard problem. We address this challenge using a two-phase approach. In the first phase (\textsc{LbPartition}), we optimize a proxy objective that approximates the true objective, resulting in a \emph{proxy partition} that is far more efficient to compute since it does not require explicitly constructing 
separators. Then, the second phase performs a branch-and-bound search starting from the proxy partition. Intuitively, since the proxy partition is a high-quality starting point, the branch-and-bound procedure can efficiently prune the search space.



The first phase is based on the insight that the quality of a partition can be estimated without computing all pairwise separators. Instead, we approximate the true objective by constructing, for each initial fine-grained cluster, a \emph{one-vs-all} separator that distinguishes it from the union of all other clusters. This strategy provides a good signal about how easily each cluster can be separated from the rest, allowing us to \emph{estimate} the cost of potential merges. Specifically, we formulate a proxy objective that takes into account both (1) how balanced a partition is, and (2) its estimated separator complexity cost, obtained by aggregating the one-vs-all separator costs.

In more detail, we formulate this proxy optimization problem as a \emph{load-balanced bin-packing} task, where the goal is to distribute items of varying weights into a fixed number of bins so that no bin becomes disproportionately heavy. 
In our setting, each fine-grained cluster $C_i$ plays the role of an item, and each answer option corresponds to a bin. 
The ``weight'' of each item reflects both the number of programs it contains and the complexity of its separator:
\[
w_i \;=\; |C_i| + \lambda\,\complexity(\psi_i),
\quad\text{where}\quad
\psi_i = \textsc{ConstructSeparator}(C_i,  \{ \cup_{ j \neq i} C_j \}, \Upost)
\]
Here, $|C_i|$ penalizes clusters that contain many candidates and $\mathsf{C}(\psi_i)$ is the syntactic complexity of the one-vs-all separator distinguishing $C_i$ from all other clusters. 
The optimization seeks a mapping of clusters to $k$ bins that minimizes the load of the heaviest bin, thereby achieving two desirable properties: 
(i) no single answer option dominates the hypothesis space, and 
(ii) each option corresponds to clusters that can be separated using simple conditions. We let $\mathcal{F}$ denote the proxy partition mapping constructed by {\sc LbPartition}.

In the second phase, our algorithm performs branch-and-bound search over the space of partition mappings with $\mathcal{F}$ as the starting point. First, the {\sc EvaluateObjective} procedure in line 3 computes the induced query for $\mathcal{F}$, and then  computes its true objective value $O_{\mathrm{init}}$ according to Definition~\ref{def:decomposed-query-selection}.
The branch-and-bound implementation is standard. It organizes the search space as a tree, where each internal node corresponds to a partial assignment of fine-grained clusters to bins, and each leaf represents a complete partition. It uses an admissible heuristic on the best achievable objective; for efficiency, it does not need to call {\sc ConstructSeparator}. We provide details in Appendix~\ref{sec:branch-and-bound}.

\section{Implementation}
\label{sec:impl}

We have implemented the proposed active learning technique as a new tool called \toolname, written in Python. \toolname  uses the Z3 SMT solver~\cite{z3} for checking satisfiability and solving optimization problems.

\rev{
\paragraph{\textbf{Translating queries to natural language.}}
Our implementation leverages \texttt{gpt-4o} to translate logical queries into natural language (NL) descriptions. We utilize few-shot prompting \cite{brown2020language}, providing the LLM with a query along with a small set of in-context examples. Because this translation is not formally verified, a mistranslation could in principle cause the user to select an incorrect answer. In practice, however, we find this risk to be negligible because the queries are expressed in first-order logic over a small, well-typed predicate set -- this is a setting where LLMs are highly reliable ~\cite{photoscout}. 
}

\paragraph{\textbf{Instantiating in new domains.}}  
The design of \toolname is domain-agnostic and can be instantiated for different synthesis settings by providing three components:
(1) a synthesizer for generating the initial hypothesis space,
(2) an analysis engine for computing pre- and postconditions, and
(3) the initial universes of pre- and postcondition predicates.
Depending on the domain, the analysis engine may either apply standard invariant-generation techniques to reason about iterative or higher-order constructs (e.g., \texttt{fold}), or unroll these constructs to a fixed bound and compute pre- and postconditions on the resulting loop- and recursion-free programs. In the latter case, \toolname guarantees semantic equivalence only up to the chosen unrolling depth.
For the domains used in our evaluation, we adopt this bounded-unrolling strategy and manually verify the correctness of the final synthesized program.
\rev{The construction of the predicate universes is straightforward. 
Precondition predicates in $\mathcal{U}^-$ follow the shape of predicates already exposed by the DSL, while postcondition predicates in $\mathcal{U}^+$ encode possible effects of the program on the input.
For example, in the image editing domain, $\mathcal{U}^-$ includes predicates such as \texttt{HasLabel(obj, Person)} and \texttt{HasRelation(obj\_1, obj\_2, NextTo)}, whereas $\mathcal{U}^+$ includes predicates such as \texttt{Blurred(obj)} and \texttt{Cropped(obj)} that describe observable output behavior.
More generally, $\mathcal{U}^-$ captures properties of inputs that may appear in synthesized preconditions, while $\mathcal{U}^+$ captures properties of outputs or input-output relationships that may appear in answer choices.
}





\paragraph{{\textbf{Approximating the objective.}}}  
While the algorithms described in Section~\ref{sec:algo} compute the optimal query as defined in
Definition~\ref{def:decomposed-query-selection}, our implementation employs two practical
approximations to ensure tractability. 
First, following prior work~\cite{samplesy, smartlabel}, we uniformly sample a  subset of 
programs from the hypothesis space each round and compute queries that are optimal with respect to this subset,
rather than over the full hypothesis space. 
Second, our implementation of {\sc MergeClusters}  invokes  branch-and-bound search {only} when the solution produced by 
\textsc{LbPartition} exhibits  complexity exceeding a threshold. 
These approximations preserve the intent of optimal query selection while keeping the computation
efficient in practice. 




\section{Evaluation}
\label{sec:eval}

In this section, we describe the results of our experimental evaluation, which aims to answer the following research questions: 

\begin{itemize}[leftmargin=*]
\item \textbf{RQ1: Accuracy.} How does \toolname compare against state-of-the-art active learning baselines in terms of accuracy?
\item \textbf{RQ2: Interpretability.} How do users perform when answering the multiple-choice queries posed by \toolname{} compared to traditional input–output labeling questions?
\item \textbf{RQ3: Efficiency.} How does \toolname{} compare against baselines in terms of efficiency (e.g., number of interaction rounds, query generation time)?
\item \textbf{RQ4: Ablations.} How important are our key algorithmic ingredients in reducing active learning runtime and generating simple queries? 
\end{itemize}

\subsection{Application Domains}
To address our research questions, we evaluate \toolname and all baselines on 157 tasks drawn from four domains studied in prior work: data wrangling~\cite{morpheus}, JSON transformations~\cite{faery}, batch image editing~\cite{imageeye}, and image search~\cite{photoscout}. All of these domains involve rich  input types, namely tables, trees, and images, and therefore provide a meaningful basis of evaluation for our approach. We describe each domain in more detail below.

\paragraph{Table transformations} Our first application domain, \tables, consists of 80 challenging \emph{data wrangling} tasks considered in prior work~\cite{morpheus, feng2018program}. Each task involves transforming one or more input tables into a target table using a DSL that is inspired by R's \texttt{dplyr} and \texttt{tidyr} libraries. Typical transformations include reshaping data between ``wide'' and ``long'' formats, consolidating multiple tables, and performing grouped aggregations or joins.

\paragraph{JSON transformations.}
Our second application domain, \trees, consists of 15  JSON transformation tasks  from prior work~\cite{faery}. Each benchmark involves converting hierarchical input trees into structurally distinct outputs using a domain-specific language that supports node creation, filtering, and restructuring through declarative tree combinators. These transformations capture a  range of structural manipulations, such as field extraction, flattening, and conditional reorganization.

\paragraph{Batch image editing} 
Our third application domain, \imageedit, involves image editing tasks considered in prior work~\cite{imageeye,smartlabel}. Unlike the two previous domains, these tasks are neurosymbolic and require synthesizing programs in a DSL that involves neural networks for image classification and segmentation.  Each task specifies a high-level edit (e.g., “brighten faces of bride and groom” “crop all people playing guitar'') that must be realized by composing various neural components with symbolic operators. These benchmarks come with collections of real-world images.

\paragraph{Image search} Our fourth and final application domain, \imagesearch,  involves image search tasks from prior work~\cite{smartlabel,photoscout} where the goal is to filter a subset of images that have a certain property, such as ``contain a dog and a cat next to each other,'' or ``contain a person riding a bike while wearing a helmet.'' Similar to the previous domain, these tasks  require synthesis in a neurosymbolic DSL that has neural constructs for image segmentation and object classification.

\subsection{Baselines}

To meaningfully evaluate our proposed approach, we compare \toolname{} against state-of-the-art active learning techniques from recent work. Because no single prior method applies uniformly across both symbolic and neurosymbolic domains, our evaluation considers two groups of baselines.

\paragraph{Purely symbolic domains.} For purely symbolic domains (\tables and \trees), we compare against {\bf \textsc{SampleSy}}~\cite{samplesy} and {\bf \textsc{LearnSy}}~\cite{learnsy}. \textsc{SampleSy} selects the input whose \emph{worst-case} label would eliminate the largest fraction of the hypothesis space, while \textsc{LearnSy} generalizes this strategy by introducing a probabilistic model of program equivalence that estimates the \emph{expected} information gain from labeling a particular input. 

\paragraph{Neurosymbolic domains.}  
For neurosymbolic domains (\imageedit and \imagesearch), we compare \toolname{} against {\bf \textsc{SmartLabel}}~\cite{smartlabel}, which is the only prior technique that provides formal guarantees against eliminating the intended program in the presence of neural uncertainty. \textsc{SmartLabel} extends \textsc{SampleSy}'s objective to neurosymbolic settings by incorporating conformal prediction~\cite{confpred}, producing set-valued outputs that bound the true label with high confidence.

\subsection{Experimental Setup and Methodology}

We evaluate \toolname, along with all baselines and ablations, using the following methodology. We begin by generating two initial input–output examples and constructing a hypothesis space~$\hs$ of programs consistent with the examples, using either an enumerative synthesizer or an LLM-based agent. The hypothesis space always includes the ground-truth program $P^*$; however, it also includes many other programs that conform to the initial examples, but semantically differ from the ground truth. Once such a hypothesis space $\hs$ is constructed, we  perform active learning over~$\hs$, using an oracle that provides correct responses to each query. To account for potential variability from the initial random I/O examples, we repeat all experiments five times, each with a different seed, and report the mean and standard deviation of the outcomes.


\begin{wraptable}{r}{0.52\linewidth}  
\vspace{-0.1in}
\centering
\footnotesize
\caption{Details about (1) the number of tasks,  (2)  average input space size, (3) initial hypothesis space size, (4) average program size, and (5) the percentage of program pairs in the initial hypothesis space that are semantically distinguishable.}
\label{tab:stats}
\vspace{-0.1cm}
\setlength{\tabcolsep}{3pt} 
\begin{tabular}{lccccc}
\toprule
\textbf{Domain} & \# & \textbf{|Inputs|} & \textbf{|$\hs$|} & \textbf{AST size} & \textbf{\rev{\% Pairs Dist.}}  \\
\midrule
\tables      & 80 & 75.0  & 50.0   & 14.6 & \rev{58.1\%} \\
\trees       & 15 & 375.0 & 95.6   & 11.7 & \rev{86.4\%} \\
\imageedit   & 37 & 271.7 & 1096.4 & 15.7 & \rev{87.3\%} \\
\imagesearch & 25 & 257.4 & 318.8  & 16.0 & \rev{92.9\%} \\
\midrule
\textbf{Overall} & 157 & 244.8 & 390.2 & 14.5 & \rev{73.2\%} \\
\bottomrule
\end{tabular}
\vspace{-0.2cm}
\end{wraptable}

Table~\ref{tab:stats} summarizes key statistics for our benchmark domains, including the number of tasks, average program size, and average  cardinality of the initial hypothesis space. While \toolname{} does not require access to a predefined set of inputs on which the target program is evaluated, all of the baseline methods do. In particular, these baselines are parameterized by a finite \emph{input space} from which candidate queries are drawn. Thus, Table~\ref{tab:stats} also shows the size of the input space used for evaluating the baselines.  Specifically, for the \imageedit and \imagesearch domains, we adopt the same input spaces used in prior work~\cite{smartlabel}. In contrast, for the \tables{} and \trees{} domains, no standard input dataset is available; we construct synthetic input spaces by sampling schema-compatible inputs that satisfy the grammar of each domain.\footnote{To make the comparison fair and meaningful, we must choose the size of the input space carefully: including too few inputs hurts accuracy, while including too many increases the time needed to compute each query. Following findings from the HCI literature on user tolerance for interactive wait times~\cite{user-tolerance}, we select the largest input space for which all methods complete query generation within 10 seconds, which has been reported as the upper bound of acceptable response latency for sustained user engagement. 
} 

\rev{To further characterize the complexity of query generation, Table~\ref{tab:stats} also reports the fraction of program pairs in the initial hypothesis space that are semantically distinguishable. This metric captures how easily the hypothesis space can be pruned by informative queries: when many pairs are distinguishable, the system can more readily construct queries that expose behavioral differences among candidates, whereas greater semantic overlap means that many candidates behave identically on large parts of the input space and are therefore harder to separate. Overall, 73.2\% of program pairs are distinguishable, although this fraction varies across domains, with tasks in the  \tables{} domain having the lowest distinguishability rate (58.1\%).}


All experiments are executed on a 2022 MacBook Pro with an 8-core M2 processor and 8 GB of RAM, using a timeout limit of 300 seconds per task.

\subsection{Evaluation of Accuracy}\label{sec:accuracy}

Our first research question investigates whether \toolname{} improves the accuracy of interactive program synthesis compared to existing active learning techniques. 
For each method, we measure the percentage of benchmarks for which the synthesized program is \emph{semantically equivalent} to the ground-truth program~$P^*$. 
A benchmark counts as solved only if the synthesized program is manually verified to behave identically to~$P^*$ on all possible inputs. 
Tables~\ref{tab:accuracy1} and~\ref{tab:accuracy2} report these results for the symbolic (\tables, \trees) and neurosymbolic (\imageedit, \imagesearch) domains.
Across all domains, \toolname{} solves 100\% of benchmarks. For the \tables and \trees domains, \textsc{SampleSy} and \textsc{LearnSy} solve 77.5\% and 76.2\% benchmarks, respectively, and for the neurosymbolic domains, \textsc{SmartLabel} solves 85.5\% of the benchmarks.

\begin{figure}[t]
\begin{minipage}{0.49\linewidth}
\captionof{table}{Experimental results comparing accuracy of active learning techniques in symbolic domains.}
\label{tab:accuracy1}
\footnotesize
\centering
\begin{tabular}{lccc}
\toprule
\textbf{Domain} & \textbf{\textsc{Socrates}} & \textbf{\textsc{SampleSy}} & \textbf{\textsc{LearnSy}} \\
\midrule
\tables & {\bf 100.0\% $\pm$ 0\%} & 80.3\% $\pm$ 4.3\% & 78.5\% $\pm$ 5.4\% \\
\trees  & {\bf 100.0\% $\pm$ 0\%} & 62.7\% $\pm$ 7.6\%  & 64.0\% $\pm$ 6.0\%\\ 
\midrule
{Overall} & {\bf 100.0\% $\pm$ 0\%} & 77.5\% $\pm$ 2.5\% & 76.2\% $\pm$ 3.8\% \\
\bottomrule
\end{tabular}
\end{minipage}
\hfill
\begin{minipage}{0.49\linewidth}
\captionof{table}{Experimental results comparing accuracy of active learning techniques in neurosymbolic domains.}
\label{tab:accuracy2}
\footnotesize
\centering
\begin{tabular}{lcc}
\toprule
\textbf{Domain} & \textbf{\textsc{Socrates}} & \textbf{\textsc{SmartLabel}} \\
\midrule
\imageedit   & {\bf 100\% $\pm$ 0\%} & 84.9\% $\pm$ 2.4\% \\
\imagesearch & {\bf 100\% $\pm$ 0\%} & 86.4\% $\pm$ 8.3\% \\
\midrule
{Overall} & {\bf 100\% $\pm$ 0\%} & 85.5\% $\pm$ 2.3\% \\
\bottomrule
\end{tabular}
\end{minipage}
\vspace{-.3in}
\end{figure}

\paragraph{{Failure analysis for baselines.}}
Most baseline failures stem from their reliance on a fixed input set for checking equivalence. These methods iteratively refine the hypothesis space until all remaining candidates are observationally indistinguishable on that set and then return one randomly sampled program. However, observational equivalence on a finite input set does not imply true semantic equivalence, and the selected program may therefore diverge from the ground truth. As discussed in Section~\ref{sec:impl}, the equivalence guarantee provided by \toolname is also not universal, due to the bounded unrolling used to compute pre- and postconditions. Despite this practical limitation, all programs synthesized by \toolname were manually verified to be semantically equivalent to the ground truth. In contrast, the baselines frequently produce programs that match the ground truth on all observed examples but behave differently on unseen inputs -- an inherent limitation of any active-learning approach restricted to a fixed input set. Finally, \textsc{SmartLabel} sometimes fails for a distinct reason: although its conformal predictor provides statistical coverage guarantees, there remains a small probability that the ground-truth label falls outside the predicted confidence set. In the neurosymbolic domains, a small fraction of failures can be attributed to this cause.

\vspace{0.1in}
\idiotbox{RQ1}{\toolname{}  converges to the ground-truth program in all experimental benchmarks, while the baselines fail to find the intended program for roughly 25\% of the programs in the symbolic domains and 15\% in the neurosymbolic domains. }

\subsection{Evaluation of Interpretability}\label{sec:user-study}

 To answer our second research question, we conducted a user study to evaluate interpretability. This study measures how easily users can understand and answer multiple choice (MC) queries compared to the input–output (I/O) queries produced by existing active learning techniques.


\paragraph{\textbf{Setup.}} We recruited 18 participants, each of whom completed two active learning tasks per application domain: one using \toolname and one using a representative active learning baseline (\textsc{SampleSy} for \tables and \trees, and \textsc{SmartLabel} for \imageedit and \imagesearch). Participants were undergraduate and graduate students in computer science, all with prior programming experience but no familiarity with the evaluated systems. 

To keep the total session length under 45 minutes and reduce cognitive fatigue, each participant was assigned tasks from three of the four domains. Each task was randomly drawn from a pool of five benchmarks, and both the order of domains and the assignment of tools were randomized to mitigate ordering effects.
For each task, participants were first shown a short textual description and an I/O example, followed by a two-minute familiarization period during which they could review the materials and ask clarifying questions. They then answered all queries issued by the corresponding active learning tool. We recorded both the time  to answer each query and whether the answer matched the ground truth.

\begin{wraptable}{r}{0.45\textwidth}
\vspace{-0.2in}
\centering
\caption{User study comparing the interpretability of multiple-choice (MC)  and  I/O queries.}
\label{tab:userstudy}
\footnotesize
\setlength{\tabcolsep}{5pt}
\begin{tabular}{@{}lcccc@{}}
\toprule
\textbf{Domain} &
\multicolumn{2}{c}{\textbf{Time (s)}} &
\multicolumn{2}{c}{\textbf{Accuracy (\%)}} \\
\cmidrule(lr){2-3} \cmidrule(lr){4-5}
& \textsc{MC} & \textsc{I/O} & \textsc{MC} & \textsc{I/O} \\
\midrule
\tables      & {\bf 60.5} & 98.8 & {\bf 93.8} & 86.7 \\
\trees       & {\bf 59.9} & 70.5 & {\bf 77.8} & 54.1 \\
\imageedit   & 45.5 & {\bf 38.6} & {\bf 88.2} & 57.9 \\
\imagesearch & 25.6 & {\bf 21.2} & {\bf 92.3} & 57.8 \\
\midrule
\textsc{Overall} & {\bf 48.2} & 55.9 & {\bf 88.5} & 64.1 \\
\bottomrule
\end{tabular}
\vspace{-0.2cm}
\end{wraptable}
\paragraph{\textbf{Results}} Table~\ref{tab:userstudy} summarizes the results of our user study. Across all domains, users answered MC questions slightly faster and 38\% more accurately than I/O questions. The overall accuracy improvement is statistically significant ($p = 4.35\times10^{-5}$, paired-sample t-test). Using the same test, response time differences were statistically significant only in the \tables domain ($p = 0.043$), where users answered MC questions 38 seconds faster on average. This difference likely reflects the additional overhead of manually typing tables for I/O questions, which was considerably more time-consuming than selecting a multiple-choice answer. In the \trees domain, users answered MC questions 44\% more accurately and 10 seconds faster than I/O questions, suggesting that the structured presentation of MC queries helped users reason about hierarchical data more effectively. 
In the \imageedit and \imagesearch domains, MC questions yielded 52\% and 60\% higher accuracy, respectively, despite slightly longer response times. For these domains, I/O questions required users to inspect images and identify objects, leading to frequent misidentifications or omissions. In contrast, MC questions provided textual descriptions and a small set of candidate options, which seems to have reduced ambiguity and human error. Appendix~\ref{sec:errors} illustrates representative  errors users make when answering queries. 

\vspace{0.1in}
\idiotbox{RQ2}{Users answered  multiple-choice queries significantly more accurately than the input–output queries, while maintaining comparable response times.}

\subsection{Evaluation of Efficiency}\label{sec:efficiency}

Having established that \toolname{}  improves both synthesis and query accuracy, we next examine whether this improvement comes at the cost of efficiency. We measure efficiency along two dimensions: (1) the number of interaction rounds required to converge, and (2) the average time needed to generate each query. Table~\ref{tab:runtime} summarizes these results.

Across all domains, \toolname{} is competitive in terms of efficiency. In the \tables{} domain, it converges in roughly the same number of rounds as \textsc{LearnSy} and fewer than \textsc{SampleSy}, while maintaining lower average query-generation time. In the neurosymbolic domains (\imageedit{} and \imagesearch), \toolname{} matches or outperforms \textsc{SmartLabel} in runtime and requires roughly the same number of interaction rounds despite its stronger semantic guarantees. The only setting where \toolname{} is less efficient is the \trees{} domain.  However, this is explained by the fact that the baselines have the lowest accuracy in this domain: while they take approximately 3 fewer rounds of user interaction, their accuracy is less than 65\% (compared to 100\% of \toolname).

Our runtime measurements indicate that \toolname{}’s computational cost is dominated by LLM inference (using \texttt{gpt-4o}), which is used to translate logical queries into natural language. To isolate this effect, Table~\ref{tab:runtime} also reports results for \toolname{\sc -InstantLLM}, which omits LLM processing time. This configuration is consistently faster than all baselines across every domain. Although a few \tables{} benchmarks exhibit outliers that inflate the mean, the median query-generation time for \toolname{\sc -InstantLLM} remains under two seconds. It is worth noting that the cost of LLM inference primarily reflects model throughput and network variability, and it can be substantially reduced through improved deployment infrastructure ~\cite{zhou2024survey} (e.g., provisioned throughput or specialized inference accelerators) without any modification to \toolname{}’s core technique.

\paragraph{Impact of different components on runtime}
\rev{
Another interesting empirical question is where \toolname{} spends the majority of its query-generation time in practice. As noted above, LLM inference for translating queries into natural language accounts for a significant portion of runtime (roughly 42\%), but this cost is not a fundamental algorithmic bottleneck. 
Figure ~\ref{fig:runtime} shows how the remaining  components contribute to query-generation time.  }

\begin{wrapfigure}{r}{0.35\textwidth}
  \begin{center}
  \vspace{-.2in}
        \includegraphics[
        width=0.34\textwidth,
        trim=0 0 0 2cm,
        clip
    ]{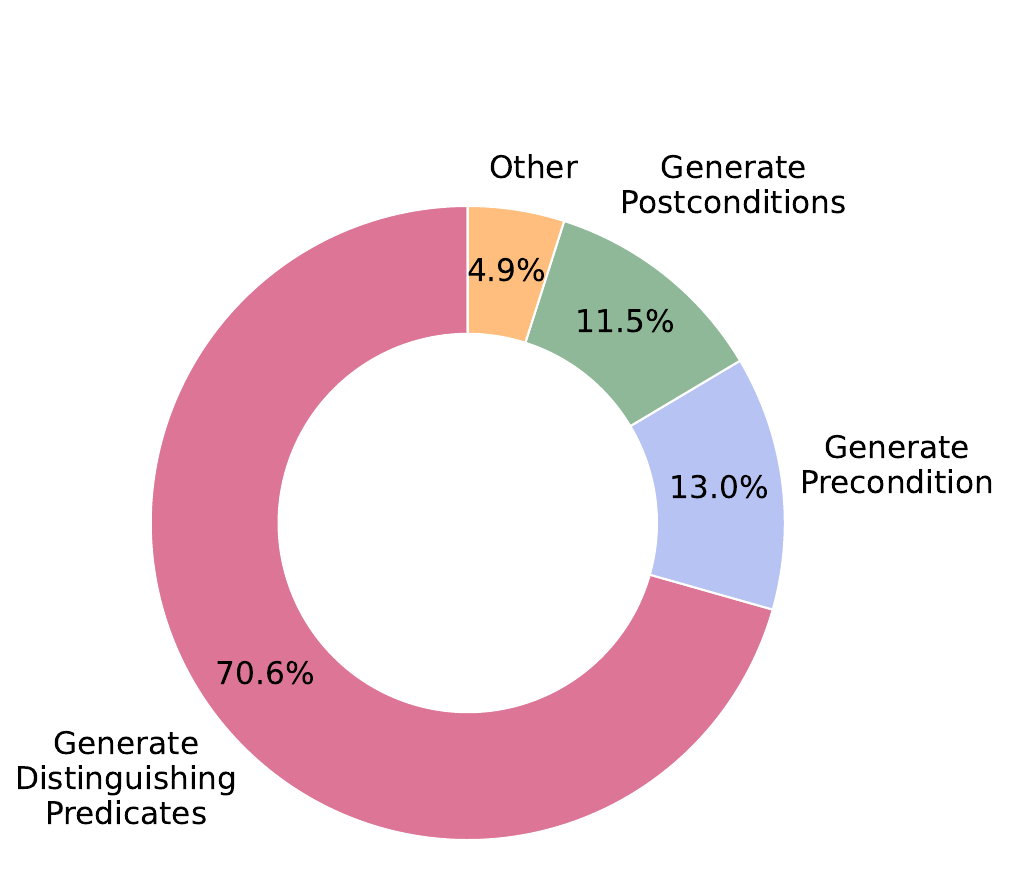}
  \end{center}
  \vspace{-0.13in}
    \caption{Breakdown of active learning runtime, excluding LLM inference time.}\label{fig:runtime}
  \vspace{-0.2in}
\end{wrapfigure}

\rev{
Across all domains, computing distinguishing predicates accounts for 70.6\% of runtime, precondition synthesis accounts for 11.5\%, and postcondition generation accounts for 13.0\%. This breakdown varies by domain. In the \tables{} domain, computing distinguishing predicates dominates runtime (81\%). In the \imageedit{} and \imagesearch{} domains, precondition synthesis becomes the primary cost, accounting for 60–66\% of runtime. In the \trees{} domain, query generation time is distributed more evenly across the three components.
}

\rev{
This trend reflects the greater symbolic reasoning burden in the \tables{} domain, where programs compose multiple table transformations and exhibit the lowest semantic distinguishability (58.1\%, Table~\ref{tab:stats}). Furthermore,  weakest precondition size is largest in this domain.
}

\idiotbox{RQ3}{\toolname{} achieves stronger semantic guarantees with only a marginal increase in rounds of user interaction and comparable or faster query computation time.}

\begin{table}[t]
\centering
\caption{Experimental results comparing \toolname against  active learning baselines. For each technique, we report the averages of (1) number of interaction rounds, (2) time per round, and (3) total tool runtime.}
\small
\begin{tabular}{c l c c c}
\toprule
\textbf{Domain} & \textbf{Technique} & \textbf{\# Rounds} & \textbf{Time per Round (s)} & \textbf{Total Tool Time (s)}\\
\midrule
\multirow{4}{*}{\tables}
  & \toolname & 4.4 $\pm$ 0.1 & 7.8 $\pm$ 0.4 & 34.5 $\pm$ 2.3 \\
  & \toolname-\textsc{InstantLLM} & 4.4 $\pm$ 0.1 & 5.7 $\pm$ 0.4 & 25.2 $\pm$ 1.8 \\
  & \textsc{SampleSy} & 5.1 $\pm$ 0.3 & 9.6 $\pm$ 0.3 & 48.4 $\pm$ 2.0  \\
  & \textsc{LearnSy}    & 4.2 $\pm$ 0.1 & 8.5 $\pm$ 0.2 & 35.8 $\pm$ 1.9  \\
\midrule
\multirow{4}{*}{\trees}
  & \toolname & 5.6 $\pm$ 0.2 & 5.9 $\pm$ 0.5 & 33.4 $\pm$ 2.5 \\
  & \toolname-\textsc{InstantLLM} & 5.6 $\pm$ 0.2 & 0.9 $\pm$ 0.0 & 5.1 $\pm$ 0.3 \\
  & \textsc{SampleSy} & 2.6 $\pm$ 0.1 & 9.5 $\pm$ 0.8 & 25.2 $\pm$ 2.6  \\
  & \textsc{LearnSy}    & 3.4 $\pm$ 0.3 & 4.8 $\pm$ 0.2 & 16.3 $\pm$ 1.3  \\
\midrule
\multirow{2}{*}{\imageedit}
  & \toolname & 4.1 $\pm$ 0.6 & 3.1 $\pm$ 0.2 & 12.9 $\pm$ 2.3 \\
  & \toolname-\textsc{InstantLLM} & 4.1 $\pm$ 0.6  & 0.9 $\pm$ 0.2 & 3.8 $\pm$ 0.9 \\
  & \textsc{SmartLabel} & 3.7 $\pm$ 0.5 & 5.5 $\pm$ 0.6 & 20.6 $\pm$ 6.5 \\
\midrule
\multirow{2}{*}{\imagesearch}
  & \toolname           & 4.0 $\pm$ 0.6 & 2.9 $\pm$ 0.2 & 11.7 $\pm$ 2.9 \\
  & \toolname-\textsc{InstantLLM} & 4.0 $\pm$ 0.6 & 0.7 $\pm$ 0.1 & 2.9 $\pm$ 0.7 \\
  & \textsc{SmartLabel} & 4.5 $\pm$ 0.8 & 6.3 $\pm$ 2.2  & 27.4 $\pm$ 7.3 \\
\bottomrule
\end{tabular}
\label{tab:runtime}
\vspace{-0.1in}
\end{table}

\subsection{Ablation Studies}\label{sec:ablation}

To quantify the contribution of each major algorithmic component in \toolname, we conduct ablation studies that remove or modify one component at a time while keeping others fixed. Specifically, we evaluate the following three variants:
\begin{itemize}[leftmargin=*]
    \item \textbf{\textsc{Socrates-SimplePre}}, which selects any distinguishing constraint between two non-equivalent programs as the precondition. This ablation isolates the impact of \textsc{GetBestPrecondition}.
    \item \textbf{\textsc{Socrates-SimpleSep}}, which computes a separator by taking the  disjunction of the strongest postconditions within each cluster and then simplifying the resulting formula~\cite{simplify}. 
    This ablation isolates the impact of \textsc{ConstructSeparator}.
    \item \textbf{\textsc{Socrates-RandCluster}}, which constructs roughly equal-sized clusters by  assigning each program to a randomly chosen cluster. After forming the clusters, it constructs a multiple-choice question by calling {\sc ConstructSeparator},  merging clusters as needed if a separator does not exist. This ablation isolates the impact of our {\sc MergeClusters} procedure. 
\end{itemize}

\noindent

\begin{table}[t]
\caption{Summary of ablation study results.}
\label{tab:ablation-summary}
\vspace{-0.1in}
\centering
\renewcommand{\arraystretch}{1.3} 
\begin{tabular}{@{}>{\centering\arraybackslash}m{0.28\linewidth} m{0.64\linewidth}@{}}
\toprule
\textbf{Variant} & \textbf{Key impact} \\
\midrule

\textsc{Socrates-SimplePre} &
Increases the number of rounds by about $2\times$, while modestly improving query generation time and query complexity. \\
\hline

\textsc{Socrates-SimpleSep} &
Nearly triples postcondition complexity, with minimal effect on other metrics. \\
\hline

\textsc{Socrates-RandCluster} &
Almost doubles the number of rounds and increases postcondition complexity by more than $2.5\times$. \\
\bottomrule
\end{tabular}
\renewcommand{\arraystretch}{1.0} 
\vspace{-0.1in}
\end{table}

\begin{table}[t]
\centering
\caption{\rev{Detailed ablation study results. The first row reports absolute values for \toolname{}. Remaining rows report percentage change relative to \toolname{}.}}
\label{tab:ablation}
\small
\vspace{-0.1in}
\setlength{\tabcolsep}{6pt}
\begin{tabular}{@{}lcccc@{}}
\toprule
\textbf{Variant} 
  & \textbf{\# Rounds} 
  & \textbf{Time/Round (s)} 
  & \textbf{Precond.} 
  & \textbf{Postcond.} \\
\midrule
\toolname 
  & 4.4 $\pm$ 0.1 
  & 3.6 $\pm$ 0.3 
  & 21.4 $\pm$ 0.1 
  & 6.9 $\pm$ 0.0 \\ 
\midrule
\toolname{\sc -SimplePre} 
  & \rev{+102\%}\upar 
  & \rev{-61\%}\downar 
  & \rev{-33\%}\downar
  & \rev{-39\%}\downar \\
\toolname{\sc -SimpleSep} 
  & \rev{-2\%}\noar 
  & \rev{+11\%}\noar 
  & \rev{+0\%}\noar
  & \rev{+191\%}\upar \\
\toolname{\sc -RandCluster} 
  & \rev{+84\%}\upar
  & \rev{+17\%}\noar
  & \rev{+4\%}\noar
  & \rev{+154\%}\upar \\
\bottomrule
\end{tabular}
\vspace{-0.2in}
\end{table}

Tables \ref{tab:ablation-summary} and~\ref{tab:ablation} report the results of this ablation study. Specifically, Table \ref{tab:ablation-summary} summarizes the key impact of disabling a given component, and Table~\ref{tab:ablation} provides more detailed statistics about the averages of (1) the number of interaction rounds, (2) query generation time (excluding LLM processing time), (3) precondition complexity, and (4) postcondition complexity (measured by AST size). As we can see, disabling any component of \toolname{} degrades performance along at least one dimension relative to the full configuration. \textsc{Socrates-SimplePre} generates simpler queries quickly, but doubles the number of interaction rounds. This suggests that our OMT-based precondition generation method plays an important role in reducing user effort. \textsc{Socrates-SimpleSep} leaves tool runtime essentially unchanged, but substantially increases postcondition complexity. This indicates that separator construction is important for keeping answer choices easy to understand. Finally, \textsc{Socrates-RandCluster} increases both the number of rounds and the complexity of the postconditions, showing that our clustering strategy helps form bins that are balanced and admit clean separators.

\vspace{0.05in}
\idiotbox{RQ4}{Our key algorithmic ingredients collectively improve the trade-off between query complexity, interaction rounds, and query-generation time.}

\section{Related Work}

\paragraph{\textbf{Active learning for program synthesis.}} Active learning refers to a class of techniques that strategically select data in order to maximize information gain. Originally developed in the context of machine learning~\cite{bshouty1994oracles, bshouty1995exact, schohn2000less, dasgupta2004analysis, settles2009active, ren2021survey}, active learning has recently been applied to \emph{interactive program synthesis}, where it is used to disambiguate candidate programs. In this setting,  the active learner typically queries the user to label an input from a pre-defined input space. 
For example, \emph{FlashProg}~\cite{flashprog} asks the user to label inputs on which candidate programs produce different outputs. Subsequent approaches~\cite{samplesy, learnsy, huang2022neural, smartlabel, abstractexamples} generalize this idea by formulating query selection as an optimization problem wherein the goal is to minimize the number of rounds of user interaction. For example, \emph{SampleSy} and \emph{SmartLabel} both employ greedy minimax algorithms, selecting the question whose worst answer will prune the greatest portion of programs.
These methods iterate until all remaining programs agree on all queries in the input space. However, as demonstrated in Section~\ref{sec:accuracy}, these methods may fail to identify the ground-truth program due to their restricted notion of equivalence. 

An alternative line of work explores symbolic program disambiguation, where the active learner reasons over logical constraints rather than concrete inputs. In this setting, the system constructs a formula characterizing inputs on which  candidate programs produce different outputs and invokes an SMT solver to find an input satisfying that formula. Several prior approaches~\cite{susmit10, scythe, forest} follow this pattern, using the resulting input as a new query for the user. Ramos et al.~\cite{ramos2020program} extend this idea by formulating program disambiguation as a MaxSAT optimization problem that selects an input distinguishing the largest number of program pairs. However, their formulation assumes that program semantics can be fully captured in propositional logic. More broadly, while these kinds of input generation techniques work well for programs over simple data types like integers or strings,  extending them to richer data (e.g., JSON documents or images) remains challenging: constructing a concrete input that satisfies a symbolic constraint is an open problem, especially when constraints involve complex structural or perceptual features. For example, generating an image that satisfies a logical predicate would require constraint-conditioned image synthesis, for which efficient and reliable methods do not yet exist ~\cite{kota2024}.

\paragraph{\textbf{User interaction models for program synthesis.}} Since providing input-output examples in the traditional programming-by-example (PBE) setting is often challenging \cite{pbemistakes, pbemistakes2}, prior work has proposed alternative interaction models that make it easier for users to specify their intent. Several approaches \cite{augexamples, flashprofile} propose interfaces that visualize clusters of data in the input space, as an aid for selecting new input-output examples. Singh and Solar-Lezama \cite{storyboard} propose a method that allows users to specify their task using control-flow diagrams. Other interfaces \cite{peleg2020programming, peleg2} present a candidate program, and let users annotate sub-expressions that should or should not be present in the solution. Drachsler-Cohen et al.~\cite{abstract} present an interactive synthesis system that communicates with the user through abstract examples -- i.e., symbolic input–output pairs describing potentially unbounded sets of examples. The user can either accept an abstract example or provide a concrete counterexample that invalidates it. While this design offers formal guarantees once the accepted abstractions cover the input space, it expects users to reason about symbolic patterns and to submit concrete instances that contradict them.

\paragraph{\textbf{Natural language interaction in synthesis.}} In recent years, there has been much interest in leveraging large language models (LLMs) in synthesis systems. Many prior works allow users to specify their intent with natural language (NL) queries that are processed by an LLM \cite{llm1, llm2, llm3, vipergpt, codegen, llm4}. Since NL is inherently imprecise, some approaches combine NL specifications with other modalities, such as input-output examples \cite{photoscout, chen2020multi, jain2022jigsaw, rahmani2021multi, gavran2020interactive} or demonstrations \cite{li2019pumice, li2017sugilite}.   While writing NL queries is often convenient, these approaches offer no guarantee that the synthesized code is correct across all inputs. In our work, we generate simple natural language queries that are designed to clarify ambiguities in the user's intent. 

\paragraph{\textbf{Predicate abstraction in program synthesis.}} 
Predicate abstraction is a classical technique for approximating program behavior within a finite domain of logical predicates~\cite{predabstr, predabstrC, flanagan2002predicate}. In verification, frameworks such as CEGAR~\cite{cegar} iteratively refine these abstractions to rule out spurious counterexamples, and similar ideas have been applied to program synthesis~\cite{wang2017program, guo2019program, wang2018learning}. In our setting, predicate abstraction serves a different role: it is used to synthesize structured, human-interpretable queries that can be formulated as multiple-choice questions.

\section{Conclusion}
This paper introduced a new paradigm for interactive program disambiguation based on multiple-choice queries, where users choose from a list of high-level behaviors instead of labeling concrete inputs. By formulating each question as a structured logical query over pre- and postconditions, our approach enables more direct communication of semantic intent and enables stronger correctness guarantees. We presented a principled algorithm that decomposes query selection into precondition synthesis and answer generation, realized through a combination of SMT-based optimization, clustering, and separator construction. Our implementation, \textsc{Socrates}, demonstrates that this paradigm leads to substantially higher synthesis accuracy and user response accuracy across both symbolic and neurosymbolic domains, while maintaining competitive efficiency. More broadly, this work highlights that replacing low-level annotation with structured semantic queries can make program disambiguation more practical both for users and synthesis systems. Looking ahead, as large language models are increasingly used for code generation, integrating structured disambiguation mechanisms like ours offers a promising direction for improving the reliability and interpretability of LLM-driven synthesis.
\section{Data-Availability Statement}

The artifact for this paper is available on Zenodo ~\cite{socrates-artifact}.

\bibliography{main}

\newpage
\appendix
\section{Branch-and-Bound Search Algorithm for Merging Clusters} \label{sec:branch-and-bound}

\begin{algorithm}[t]
\caption{\textsc{BranchAndBound}}
\label{alg:branch-and-bound}
\begin{algorithmic}[1]
  \Require Clusters $C_1,\dots,C_N$, incumbent partition $\mathcal{F}_p$, incumbent objective $O_\mathrm{best}$, precondition $\precond$
   \Ensure $\mathcal{F}^*$ is optimal 
\vspace{0.05in}
    \If{$|\mathcal{F}_p| = N$} \Comment{all clusters assigned}
      \State $O \gets \Call{EvaluateObjective}{\mathcal{F}_p, \precond}$ 
      \If{$O > O_{\mathrm{best}}$}  $\mathcal{F}^* \gets \mathcal{F}_p$; \quad $O_{\mathrm{best}} \gets O$
      \EndIf
      \State \Return $(\mathcal{F}^*,\,O_{\mathrm{best}})$
    \EndIf

    \State $U \gets \Call{ComputeBranchUB}{\mathcal{F}_p}$ \Comment{admissible upper bound} 
    \If{$U \le O_{\mathrm{best}}$}
       \Return $(\mathcal{F}^*,\,O_{\mathrm{best}})$ \Comment{prune this subtree}
    \EndIf

    \For{each admissible extension $\mathcal{F}_{p+1}$ of $\mathcal{F}_p$ using $\{C_i\}_{i=1}^N$}
      \State $(\mathcal{F}^*,\,O_{\mathrm{best}}) \gets \Call{BranchAndBound}{\{C_i\}_{i=1}^N,\,\mathcal{F}_{p+1},\,O_{\mathrm{best}}, \precond}$ 
    \EndFor

    \State \Return $(\mathcal{F}^*,\,O_{\mathrm{best}})$
\end{algorithmic}
\end{algorithm}

Algorithm~\ref{alg:branch-and-bound} describes our branch-and-bound-style search procedure. This algorithm induces a search tree where leaf nodes correspond to complete partitions, while internal nodes correspond to partial partitions. That is, at leaf nodes, $\mathcal{F}$ defines a total function $[N] \rightarrow [k]$ whereas, at internal nodes, it defines a partial function with some initial partitions  not  having been assigned to a final bin. The goal of the algorithm is to discover a leaf node that maximizes the objective. As standard with any branch-and-bound procedure, Algorithm~\ref{alg:branch-and-bound} computes an admissible (upper) bound on the objective value of any leaf that can be reached from a given internal node $n$ and \emph{only} expands $n$ if its upper bound exceeds that of the best incumbent so far (we use the objective value of the heuristic partition from Algorithm ~\ref{alg:branch-and-bound} as an initial upper bound). Thus, there are two key parts of the algorithm: 

\begin{itemize}[leftmargin=*]
\item {\bf Evaluating objective at leaf nodes:} At every leaf node $l$, the algorithm evaluates the objective value from Definition~\ref{def:decomposed-query-selection}, and updates the incumbent $(\mathcal{F}^*, O_{\mathrm{best}})$ if there is an improvement (Lines~7--10 of Algorithm~\ref{alg:branch-and-bound}).
\item {\bf Upper bounding objective:} Algorithm~\ref{alg:branch-and-bound} invokes a helper procedure called {\sc ComputeBranchUB} to upper bound the objective that can be plausibly attained from any internal node. 
\end{itemize} 

In the remainder of the section, we explain the {\sc ComputeBranchUB} procedure in more detail. 

\subsection{Computing Upper Bounds for Internal Nodes}

For a partial partition $\mathcal{F}_p$, we upper bound the best
objective attainable by any completion using \textsc{ComputeBranchUB} (Algorithm~\ref{alg:compute-branch-ub}).
In particular, we bound the disambiguation power that \emph{any}
completion can achieve:
\[
\Bigl[\,1-  \max_{j\in[k]}\frac{|P_j(\mathcal{F}_p)|}{|\hs|}\Bigr].
\]
Recall from Definition ~\ref{def:induced-query} that $P_j(\mathcal{F}_p)$ denotes the subset of $\hs$ assigned to a cluster $C'_j$ induced by $\mathcal{F}_p$ . This upper bound is optimistic in that it treats all unassigned clusters as if they could be
perfectly separated from one another and from already assigned clusters (so disambiguation
is limited only by current class imbalance $\max_j |P_j(\mathcal{F}_p)|/|\hs|$).
If the bound does not exceed the incumbent
$O_{\mathrm{best}}$, the subtree is pruned; otherwise, we extend $\mathcal{F}_p$ by placing
an unassigned cluster into one of the bins and continue the search.

\begin{algorithm}[t]
\caption{\textsc{ComputeBranchUB}}
\label{alg:compute-branch-ub}
\begin{algorithmic}[1]
  \Require Partial partition $\mathcal{F}_p$ for clusters $1\dots p\!-\!1$
  \Ensure Upper bound $U$ on objective value for any completion of  $\mathcal{F}_p$
  \Statex \Return \quad
 $
    \Bigl[\,1- \max_{j\in[k]}\tfrac{|P_j(\mathcal{F}_p)|}{|\hs|}]
$  
  \Statex \textbf{where}
  \[
  \begin{aligned}
   P_j &:= \{\, h\in\hs \mid \exists\, i\in \mathrm{dom}(\mathcal{F}_p):  \mathcal{F}_p(i)=j \land \ h\in C_i \,\}\\
  \end{aligned}
  \]
\end{algorithmic}
\end{algorithm}

\begin{restatable}[Upper-Bound Soundness]{theorem}{UBSoundness}\label{thm:ub-sound}
Let $\phi$ be a precondition, $\mathcal{F}_p$ any partial partition, and
$U=\textsc{ComputeBranchUB}(\mathcal{F}_p)$.
For any completion $\mathcal{F}$ of $\mathcal{F}_p$, let $Q=(\phi, \psi_1, \ldots, \psi_k)$ be the induced query of $(\mathcal{F},\phi)$.
Then, we have  $\dpower(Q) - \lambda_{{\mathrm{post}}} \cdot \sum_{i=1}^k \complexity(\psi_i) \leq U$.
\end{restatable}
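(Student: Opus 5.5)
The plan is to establish two inequalities and chain them. First, the complexity penalty is nonnegative. Second, the disambiguation power of any completion is at most the optimistic bound computed from the partial partition.

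\emph{Penalty term.} Since $\lambda_{\mathrm{post}} > 0$ and each $\complexity(\psi_i)$ counts atoms (or AST size), we have $\lambda_{\mathrm{post}}\sum_{i=1}^k \complexity(\psi_i) \ge 0$. Hence
\[
\dpower(Q) - \lambda_{\mathrm{post}}\sum_{i=1}^k \complexity(\psi_i) \le \dpower(Q),
\]
and it suffices to show $\dpower(Q) \le U$.

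\emph{Disambiguation term.} Let $\mathcal{F}$ be any completion of $\mathcal{F}_p$, so $\mathcal{F}$ agrees with $\mathcal{F}_p$ on $\mathrm{dom}(\mathcal{F}_p)$. For each bin $j$ we have $P_j(\mathcal{F}_p) \subseteq P_j(\mathcal{F})$: every cluster already assigned to $j$ remains assigned to $j$, and the completion only adds clusters. Thus $|P_j(\mathcal{F}_p)| \le |P_j(\mathcal{F})|$ for every $j$, and taking maxima gives
\[
\max_j \frac{|P_j(\mathcal{F})|}{|\hs|} \ \ge\ \max_j \frac{|P_j(\mathcal{F}_p)|}{|\hs|}.
\]

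It remains to relate $\dpower(Q)$ to the bin sizes. Here $\dpower(Q) = 1 - \max_i |\hs_{\phi,i}|/|\hs|$, where $\hs_{\phi,i}$ is the set of programs satisfying $\{\phi\}P\{\psi_i\}$. The key point is that $P_i(\mathcal{F}) \subseteq \hs_{\phi,i}$ whenever $\psi_i \neq \bot$. By Theorem~\ref{thm:sep}, $\Phi^+ \Rightarrow \psi_i$, where $\Phi^+$ is the disjunction of the strongest postconditions of programs in $P_i(\mathcal{F})$. So every $P \in P_i(\mathcal{F})$ satisfies $\{\phi\}P\{\psi_i\}$, because its strongest postcondition under $\phi$ implies $\psi_i$. (In fact equality holds, by the separation property and the fact that the bins partition $\hs$, but only the inclusion is needed.) This yields
\[
\dpower(Q) \le 1 - \max_i \frac{|P_i(\mathcal{F})|}{|\hs|} \le U .
\]

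\emph{Main obstacle: the case $\psi_i = \bot$.} If some separator does not exist, the induced query is not a genuine element of the query space. I would handle this by convention: either the objective of such a query is $-\infty$, which \textsc{EvaluateObjective} assigns to invalid queries, so the bound holds trivially; or such completions are excluded as non-admissible extensions in Algorithm~\ref{alg:branch-and-bound}. I would state this convention explicitly. Otherwise the argument is a routine monotonicity of the maximum bin load under extension of the partial partition.
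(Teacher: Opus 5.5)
Your proof is correct and follows the same route as the paper's: bin loads can only grow when $\mathcal{F}_p$ is completed, so the maximum bin fraction of any completion is at least that of $\mathcal{F}_p$, and the nonnegative complexity penalty can simply be dropped. You are more careful than the paper at two points. The paper asserts that $\dpower(Q)$ equals the bin-based quantity, and even writes that equality in terms of the partial bins $P_j$ where an inequality is meant; you instead derive the bound from $P_i(\mathcal{F}) \subseteq \hs_{\phi,i}$ via Theorem~\ref{thm:sep}. You also explicitly handle the case $\psi_i=\bot$, which the paper never addresses.
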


\begin{restatable}[Correctness of \textsc{MergeClusters}]{theorem}{MergeClusters}
\label{cor:merge-clusters}
\textsc{MergeClusters} solves Eq.~(\ref{eq:postobjective}) in Definition~\ref{def:decomposed-query-selection}; i.e., letting $\psi_1^*=\psi_1(\mathcal{F}),...,\psi_k^*=\psi_k(\mathcal{F})$ be the separators for the final partition mapping $\mathcal{F}$, then
\begin{align*}
(\postcond_1^*, \dots, \postcond_k^*) \in \argmax_{\substack{(\postcond_1, \dots, \postcond_k) \in \textsc{Cube}(\Upost)^k}}
\left[
\dpower(\precond, \postcond_1, \dots, \postcond_k) - \lambda_{\mathrm{post}} \cdot \sum_{i=1}^k \complexity(\postcond_i)
\right].
\end{align*}
\end{restatable}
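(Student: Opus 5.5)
The plan is to reduce the optimization over arbitrary tuples of cubes $(\postcond_1,\dots,\postcond_k)\in\textsc{Cube}(\Upost)^k$ to an optimization over partition mappings $\mathcal{F}:[N]\to[k]$. Then I would argue that \textsc{BranchAndBound} searches the latter space exhaustively, up to sound pruning. The argument has three steps.

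\textbf{Step 1 (reduction to partition mappings).} Fix $\precond$, and consider any tuple $(\postcond_1,\dots,\postcond_k)$ forming a valid query; invalid tuples are outside $\mathcal{Q}$ and can be ignored.
\begin{itemize}
\item Each fine-grained cluster $C_i$ consists of programs sharing the same strongest postcondition $\gamma_i$ under $\precond$. The truth of $\{\precond\}P\{\postcond_j\}$ depends only on $\mathsf{StrongestPost}(P,\precond)$, since it is equivalent to $\gamma_i\Rightarrow\postcond_j$.
\item Hence, by mutual exclusion and coverage, every $C_i$ lies entirely in exactly one $\hs_{\precond,j}$. This defines a mapping $\mathcal{F}(i)=j$, and $\hs_{\precond,j}=P_j(\mathcal{F})$.
\item Each $\postcond_j$ is a cube with $\Phi^+_j\Rightarrow\postcond_j$, where $\Phi^+_j=\bigvee_{P\in P_j(\mathcal{F})}\mathsf{StrongestPost}(P,\precond)$. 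Mutual exclusion gives $\mathrm{UNSAT}(\postcond_j\wedge\varphi_r)$ for every $r\neq j$. A model of $\postcond_j\wedge\gamma_i$ with $\gamma_i\Rightarrow\postcond_r$ would be a model of both postconditions; I would argue, possibly with an additional assumption, that this contradicts mutual exclusion.
\item By Theorem~\ref{thm:sep}, $\psi_j(\mathcal{F})$ is then a smallest such cube, so $\complexity(\psi_j(\mathcal{F}))\le\complexity(\postcond_j)$.
\item Since $\dpower$ depends only on the bin sizes $|P_j(\mathcal{F})|$, the query induced by $\mathcal{F}$ scores at least as high as the given tuple.
\end{itemize}
Conversely, each induced query with no $\bot$ separator is a valid query: the separators are disjoint on the relevant behaviors and cover them. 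So the maximum of Eq.~(\ref{eq:postobjective}) equals $\max_{\mathcal{F}}$ \textsc{EvaluateObjective}$(\mathcal{F},\precond)$. Partitions with empty bins can be absorbed by allowing $k'\le k$ options.

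\textbf{Step 2 (correctness of branch-and-bound).} The search tree enumerates all total maps $[N]\to[k]$ as leaves. \textsc{EvaluateObjective} computes the exact objective at each leaf, and the incumbent is initialized with the real value $O_{\mathrm{init}}$ of the \textsc{LbPartition} mapping. A subtree is pruned only when $U\le O_{\mathrm{best}}$. By Theorem~\ref{thm:ub-sound}, every completion in a pruned subtree has objective at most $U\le O_{\mathrm{best}}$, so pruning never discards a strictly better leaf. By induction on the recursion, the returned $\mathcal{F}$ attains the maximum over all leaves. This holds either because the optimal leaf is visited, or because an ancestor of it was pruned, in which case the incumbent already matches its value. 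If the \textsc{LbPartition} seed is itself optimal and never replaced, the claim still holds.

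\textbf{Step 3 (combine).} The returned $\mathcal{F}$ maximizes the objective over partition mappings, and by Step 1 this maximum equals the maximum over cube tuples. Therefore $(\psi_1(\mathcal{F}),\dots,\psi_k(\mathcal{F}))$ lies in the argmax.

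I expect Step 1 to be the main obstacle, specifically showing that an arbitrary valid tuple of cubes yields feasible separator instances. The difficulty is that \textsc{ConstructSeparator} requires $\psi\wedge\varphi_r$ to be unsatisfiable as formulas over output behaviors, while validity of the query only speaks about programs in $\hs$ satisfying Hoare triples. My first attempt would be to use that each $\gamma_i$ is a strongest postcondition, so its models are exactly the reachable outputs, and argue pointwise. If that fails, I would interpret the argmax of Eq.~(\ref{eq:postobjective}) as ranging over tuples whose scenarios are separators in the sense of Theorem~\ref{thm:sep}. I would also have to handle the \textsc{Admissible}-based refinement of $\Upost$ by treating $\Upost$ as fixed at the final iteration of Algorithm~\ref{alg:generate-query}.
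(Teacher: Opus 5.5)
\paragraph{Verdict.} Your three steps follow the paper's route. Step~1 is its Lemma~\ref{lem:mergecluster-completeness} combined with Lemma~\ref{lem:mergecluster-query-valid}, and Step~2 is its branch-and-bound argument via Theorem~\ref{thm:ub-sound}; your handling of a never-replaced \textsc{LbPartition} seed is more careful than the paper's. The step you flagged, however, is a genuine gap, and your first remedy cannot close it.

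\paragraph{Why the pointwise argument fails.} Mutual exclusion in Definition~\ref{def:valid-query} is a statement about whole Hoare triples. A program $h$ outside bin $j$ fails $\{\precond\}h\{\postcond_j\}$ as soon as \emph{one} input satisfying $\precond$ violates $\postcond_j$. Other models of $\mathsf{StrongestPost}(h,\precond)$ may still satisfy $\postcond_j$. A single input--output pair satisfying two answer choices contradicts nothing, because mutual exclusion only forbids a \emph{program} satisfying both triples. Knowing that the models of $\mathsf{StrongestPost}$ are exactly the reachable pairs does not help: validity is universal over inputs, whereas $\mathrm{SAT}(\postcond_j\wedge\varphi_r)$ is existential.

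\paragraph{Counterexample.} Take integer input $x$, output $y$, $\precond\equiv\mathit{true}$, and $\hs=\{g,h\}$ with $g(x)=x$ and $h(x)=2x$. Let $\Upost$ contain $y=x$, $y=2x$ and their negations.
\begin{itemize}
\item $(\precond,\ y=x,\ y=2x)$ is a valid query: each program satisfies only its own triple, as $x=1$ shows. It has $\dpower=1/2$ and complexity $2$.
\item But $(x,y)=(0,0)$ satisfies both strongest postconditions. So \textsc{ConstructSeparator} returns $\bot$ at line~3, independently of $\Upost$, for every mapping that separates $g$ from $h$.
\item Every induced query therefore has $\dpower=0$. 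For $\lambda_{\mathrm{post}}<1/4$ the claimed optimality fails.
\end{itemize}
This $\precond$ even solves Eq.~(\ref{eq:preobjective}) when $\lambda_{\mathrm{pre}}>1$. The paper's completeness lemma makes the same unjustified inference (``by the mutual-exclusion property \ldots\ this implies $\mathrm{UNSAT}(\ldots)$'').

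\paragraph{What is needed.} The statement requires an extra hypothesis, and your fallback is the right one. Restrict Eq.~(\ref{eq:postobjective}) to tuples satisfying behavior-level exclusion: $\mathrm{UNSAT}(\postcond_j\wedge\mathsf{StrongestPost}(P,\precond))$ for all $P\in\hs\setminus\hs_{\precond,j}$. Given coverage, this is equivalent to requiring that each $\postcond_j$ be a feasible separator for its bin in the sense of Theorem~\ref{thm:sep}. Under that restriction, Step~1 goes through verbatim, and Steps~2 and~3 are sound as written.
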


\section{Common User Errors During Active Learning}\label{sec:errors}

Section ~\ref{sec:user-study} describes a user study wherein participants answer both the multiple-choice questions issued by \toolname, and the input-output questions issued by active learning techniques from prior work. In this section, we describe common user errors when answering queries in each of our four application domains. We focus our qualitative analysis on I/O responses for two reasons. First, MC errors are inherently opaque (i.e., an incorrect selection provides no observable trace of the participant’s reasoning), making it difficult to infer why a particular choice was made. Second, accuracy on MC questions was high, leaving too few incorrect responses to reveal consistent patterns of misunderstanding. In contrast, I/O responses offer richer insight into user behavior, since the outputs participants constructed expose specific reasoning errors. 

\begin{figure}[t]
\centering
\begin{tcolorbox}[colback=gray!2,colframe=black!10,boxsep=1ex,arc=2mm,width=\linewidth]
\footnotesize

\begin{subfigure}[t]{\linewidth}
\caption{Task description.}  
\label{fig:table-example-a}
Transform a table of individual messages into a summary over unique addresses.
Each input row represents one message with a sender and a recipient.
The output lists all unique addresses, counting how many messages each sent and received.
\end{subfigure}

\vspace{0.8ex}

\begin{subfigure}[t]{0.31\linewidth}
\caption{Input table}  
\label{fig:table-example-b}
\centering
\vspace{0.4ex}
\begin{tabular}{@{}c c c@{}}
\toprule
\textsf{id} & \textsf{sender} & \textsf{recipient} \\
\midrule
5 & A.1 & A.1 \\
1 & A.2 & B.2 \\
10 & A.2 & B.1 \\
8  & A.2 & B.2 \\
\bottomrule
\end{tabular}
\end{subfigure}
\hfill
\begin{subfigure}[t]{0.31\linewidth}
\caption{Correct output}  
\label{fig:table-example-c}
\centering
\vspace{0.4ex}
\begin{tabular}{@{}c c c@{}}
\toprule
\textsf{address} & \textsf{sender} & \textsf{recipient} \\
\midrule
A.1 & 1 & 1 \\
A.2 & 0 & 3 \\
B.1 & 1 & 0 \\
B.2 & 2 & 0 \\
\bottomrule
\end{tabular}
\end{subfigure}
\hfill
\begin{subfigure}[t]{0.31\linewidth}
\caption{Incorrect user output}  
\label{fig:table-example-d}
\centering
\vspace{0.4ex}
\begin{tabular}{@{}c c c@{}}
\toprule
\textsf{address} & \textsf{sender} & \textsf{recipient} \\
\midrule
A.1 & 1 & 1 \\
A.2 & 0 & \redcolor{2} \\
B.1 & 1 & 0 \\
B.2 & 2 & 0 \\
\bottomrule
\end{tabular}
\end{subfigure}

\end{tcolorbox}
\caption{An overview of a task in the \tables domain from our user study, including (a) an excerpt of the task description, (b) an input table presented to the user (c) the correct output table, and (d) an erroneous output table written by a participant.}
\label{fig:tables-example}
\end{figure}

\begin{figure}[t]
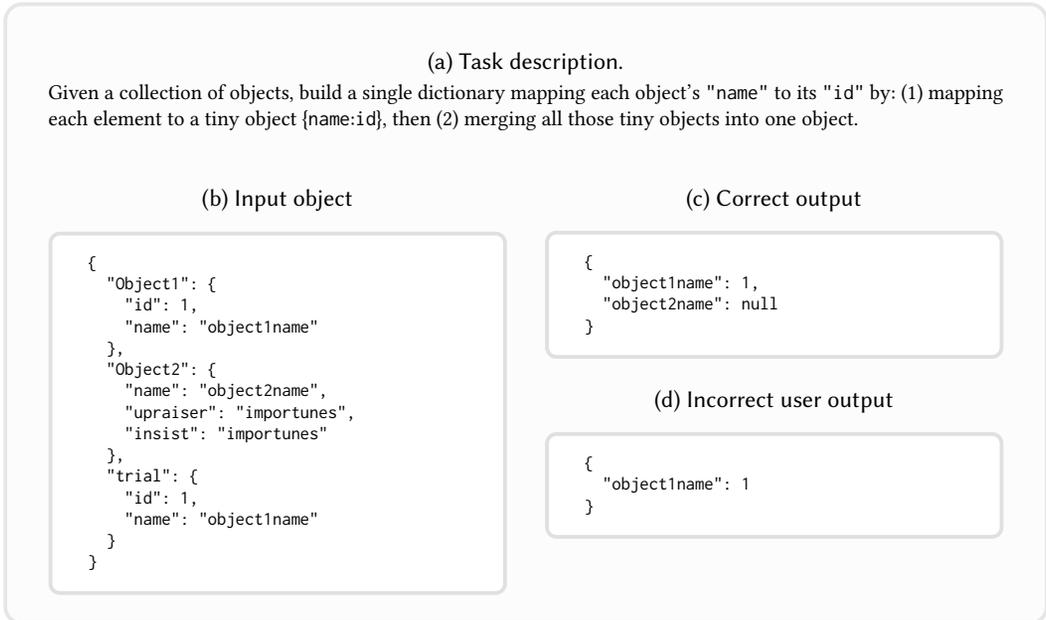

\centering
\begin{tcolorbox}[colback=gray!2,colframe=black!10,boxsep=1ex,arc=2mm,width=\linewidth]
\footnotesize

\begin{subfigure}[t]{\linewidth}
\caption{Task description.}  
\label{fig:json-example-a}                  
\vspace{0.3ex}

Given a collection of objects, build a single dictionary mapping each object's \texttt{"name"} to its \texttt{"id"} by:
(1) mapping each element to a tiny object \{\texttt{name}:\texttt{id}\}, then
(2) merging all those tiny objects into one object.
\end{subfigure}
\vspace{1.0ex}

\begin{subfigure}[t]{0.48\linewidth}
\caption{Input object}  
\label{fig:json-example-b}   
\centering
\vspace{0.4ex}

\begin{tcolorbox}[colback=white,colframe=black!12,boxsep=0.6ex,arc=1mm]
\scriptsize\ttfamily
\begin{verbatim}
{
  "Object1": {
    "id": 1,
    "name": "object1name"
  },
  "Object2": {
    "name": "object2name",
    "upraiser": "importunes",
    "insist": "importunes"
  },
  "trial": {
    "id": 1,
    "name": "object1name"
  }
}
\end{verbatim}
\end{tcolorbox}
\end{subfigure}
\hfill
\begin{subfigure}[t]{0.48\linewidth}
\centering

\begin{subfigure}[t]{\linewidth}
\caption{Correct output}
\label{fig:json-example-c}
\vspace{0.3ex}
\begin{tcolorbox}[colback=white,colframe=black!12,boxsep=0.6ex,arc=1mm]
\scriptsize\ttfamily
\begin{verbatim}
{
  "object1name": 1,
  "object2name": null
}
\end{verbatim}
\end{tcolorbox}
\end{subfigure}

\vspace{1ex}

\begin{subfigure}[t]{\linewidth}
\caption{Incorrect user output}
\label{fig:json-example-d}
\vspace{0.3ex}
\begin{tcolorbox}[colback=white,colframe=black!12,boxsep=0.6ex,arc=1mm]
\scriptsize\ttfamily
\begin{verbatim}
{
  "object1name": 1
}
\end{verbatim}
\end{tcolorbox}
\end{subfigure}

\end{subfigure}

\end{tcolorbox}

\caption{An overview of a task in the \trees domain from our user study, including (a) an excerpt of the task description, (b) an input JSON object presented to the user, (c) the correct output, and (d) an erroneous output written by a participant.}
\label{fig:jq-task}
\end{figure}

\begin{figure}[t]
\centering
\begin{tcolorbox}[colback=gray!2,colframe=black!10,boxsep=1ex,arc=2mm,width=\linewidth]
\footnotesize

\begin{subfigure}[t]{\linewidth}
\centering
\caption{Task description.}  
\vspace{.1cm}
\label{fig:image-task-a}
Find all images with bicycles below people.
\end{subfigure}

\vspace{0.8ex}

\begin{subfigure}[t]{0.48\linewidth}
\centering

\begin{subfigure}[t]{\linewidth}
\centering
\caption{Input image}
\label{fig:imageedit-input}
\includegraphics[width=0.95\linewidth]{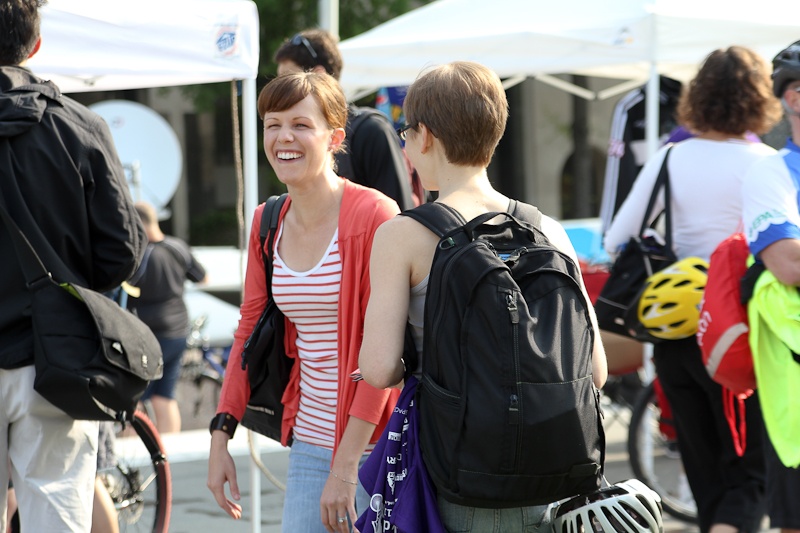}
\end{subfigure}

\vspace{1.2ex}

\begin{subfigure}[t]{\linewidth}
\centering
\caption{Annotated image}
\label{fig:imageedit-annotated}
\includegraphics[width=0.95\linewidth]
{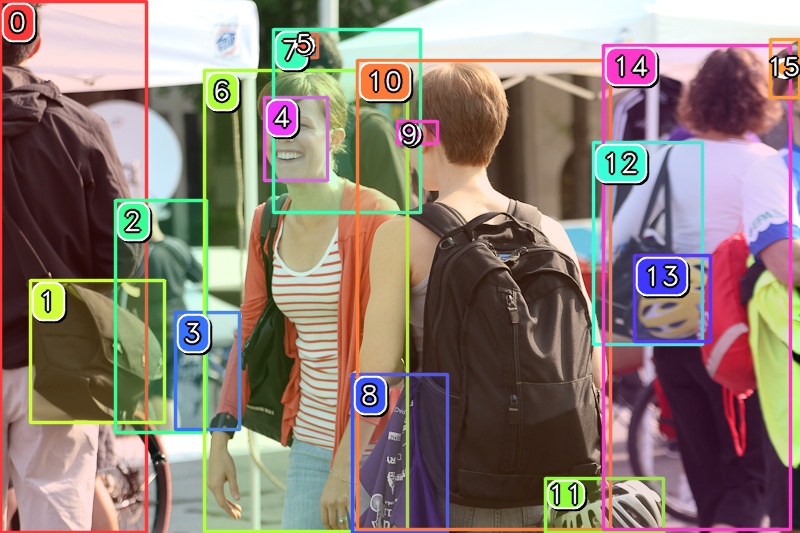}
\end{subfigure}

\end{subfigure}
\hfill
\begin{subfigure}[t]{0.48\linewidth}
\centering

\begin{subfigure}[t]{\linewidth}
\centering
\caption{Image legend}
\label{fig:image-task-legend}
\vspace{0.4ex}
\begin{tcolorbox}[colback=white,colframe=black!12,boxsep=0.6ex,arc=1mm,left=0.8ex,right=0.8ex]
\scriptsize
\begin{tabular}{rlrl}
0 & Person & 8 & Handbag \\
1 & Handbag & 9 & Glasses \\
2 & Person  & 10 & Person \\
3 & Bicycle & 11 & Helmet \\
4 & Face (eyes closed, smiling) & 12 & Handbag \\
5 & Glasses & 13 & Helmet \\
6 & Person  & 14 & Person \\
7 & Person  & 15 & Helmet \\
\end{tabular}
\end{tcolorbox}
\end{subfigure}

\vspace{1.2ex}

\begin{subfigure}[t]{\linewidth}
\centering
\caption{Correct output}
\label{fig:image-task-correct}
\vspace{0.3ex}
\begin{tcolorbox}[colback=white,colframe=black!12,boxsep=0.5ex,arc=1mm]
\scriptsize
\texttt{Selected objects: \{3\}}
\end{tcolorbox}
\end{subfigure}

\vspace{1.2ex}

\begin{subfigure}[t]{\linewidth}
\centering
\caption{Incorrect user output}
\label{fig:image-task-incorrect}
\vspace{0.3ex}
\begin{tcolorbox}[colback=white,colframe=black!12,boxsep=0.5ex,arc=1mm]
\scriptsize
\texttt{Selected objects: \{\}}
\end{tcolorbox}
\end{subfigure}

\end{subfigure}

\end{tcolorbox}
\caption{An overview of a task in the \imagesearch domain from our user study, including (a) an excerpt of the task description, (b) an input image, (c) an annotated version of the image with detected objects, (d) a legend of objects in the image, (e) the correct set of selected objects, (f) an erroneous output written by a participant.}
\label{fig:image-task}
\vspace{-0.15in}
\end{figure}

\subsection{\tables}

When answering input-output questions in this domain, the most common user error was mistyping or miscalculating a value in the output table. For instance, given the task described in Figure ~\ref{fig:table-example-a} and the input table in Figure ~\ref{fig:table-example-b}, users were supposed to write the output table in Figure ~\ref{fig:table-example-c}. However, one user apparently miscalculated the number of messages with recipient A.2, and instead wrote the table in Figure ~\ref{fig:table-example-d}. 


\subsection{\trees}

Answering input–output questions in this domain is particularly challenging because users must reason about structural transformations over nested JSON data. Even minor variations in how values are filtered, merged, or defaulted can silently alter the resulting structure. In the example shown in Figure~\ref{fig:json-example-a}, the ground-truth JQ program assigns \texttt{null} to objects missing an \texttt{id} field (Figure~\ref{fig:json-example-c}), whereas the user’s output (Figure~\ref{fig:json-example-d}) incorrectly omits these entries. This error illustrates how subtle aspects of JQ’s semantics can easily lead to incomplete or structurally inconsistent results. 

\subsection{\imageedit and \imagesearch}
In these domains, we used an interface similar to one presented in prior work on image search ~\cite{photoscout}. In particular, an input-output question presented the user with (1) an image, (2) a version of that image with objects highlighted and tagged, and (3) a legend that listed the labels and attributes of all objects. The user had to select the objects in the input image that corresponded to their task. For instance, given the task in Figure ~\ref{fig:image-task-a}, the user was presented with the images and legend in Figures ~\ref{fig:imageedit-input}, ~\ref{fig:imageedit-annotated}, and ~\ref{fig:image-task-legend}, respectively. This example highlights three difficult aspects of responding to input-output questions in these domains: 
\begin{enumerate}[leftmargin=*]
\item \textbf{Mispredictions of underlying neural networks.} Under the hood, \imageedit and \imagesearch rely on pretrained perception models to detect and label objects. These models occasionally misclassify or miss objects entirely. In this case, the object detector failed to identify the bicycles on the left and right, so these objects are not tagged in Figure ~\ref{fig:imageedit-annotated}. Such mispredictions add noise and uncertainty to the task. 
\item \textbf{Small or partially obscured objects.} While the bicycle with id 3 \emph{is} correctly identified by the system, it is in the background, partially occluded by several other objects, and slightly out of focus. Even with a legend, users often fail to identify such objects.  
\item \textbf{Reasoning about positional attributes.} This task  requires reasoning about the relative positions of bicycles and people in the image. A user may be uncertain about whether the person with id 2 is above the bicycle with id 3. 
\end{enumerate}
In this case, the ground-truth program selects object 3, since it is a bicycle that is below the person with id 2. When presented with this question, one user did not select any objects in this image; another erroneously selected objects 0 and 14. 

\section{Proofs of Theorems}\label{sec:proofs}

\subsection{Proof of Theorem~\ref{thm:optimality}}

\Optimality*

\begin{proof}
    Let $(\precond^*, \postcond_1^*, \ldots, \postcond_k^*)$ be a query generated by Algorithm ~\ref{alg:main}. 
    By Corollary~\ref{cor:sep-valid}, this query is valid. 
    By Theorem ~\ref{cor:omt-equivalence}, 
\[
\precond^* \in \argmax_{\psi \in \textsc{Cube}(\Upre)}
\bigl[\dppre(\psi) - \lambda_{\mathrm{pre}}\cdot \complexity(\psi)\bigr],
\]
and, by Theorem ~\ref{cor:merge-clusters},
\[
(\postcond_1^*, \dots, \postcond_k^*) \in \argmax_{\substack{(\postcond_1, \dots, \postcond_k) \in \textsc{Cube}(\Upost)^k}}
\left[
\dpower(\precond^*, \postcond_1, \dots, \postcond_k) - \lambda_{\mathrm{post}} \cdot \sum_{i=1}^k \complexity(\postcond_i)
\right].
\]
Thus, the query solves the Decomposed Query Selection Problem from Definition~\ref{def:decomposed-query-selection}.
\end{proof}

\subsection{Proof of Theorem~\ref{thm:main}}

\Main*

\begin{proof}
We make the following assumptions:
\begin{enumerate}
    \item The hypothesis space $\hs$ is finite and contains a program $P$ matching the user's intent.
    \item The user answers all queries correctly (i.e., in accordance with the semantics of $P$).
    \item \textsc{GenerateQuery} always outputs a query with at least 2 postconditions.
\end{enumerate}
By Lemma ~\ref{lemma:termination}, our active learning procedure terminates, and by Lemma ~\ref{lemma:correct-prog}, the user's intended program $P$ must be in the hypothesis space when termination occurs. Since all programs in the final hypothesis space are semantically equivalent, we must return a program $P^*$ that is equivalent to $P$.
\end{proof}

\begin{lemma}\label{lemma:correct-prog}
The intended program $P$ is never pruned from $\hs$.
\end{lemma}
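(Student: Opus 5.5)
We argue by induction on the iterations of the main loop of Algorithm~\ref{alg:main}, using as invariant that the intended program $P$ lies in $\hs$ at the start of every iteration. The base case is the first assumption in the proof of Theorem~\ref{thm:main}: the initial hypothesis space contains $P$. Only line~\ref{alg:main:refine-hypothesis} modifies $\hs$. The refinement steps \textsc{RefinePredicates} and \textsc{RefinePostPredicates} enlarge only the predicate universes, so they cannot remove $P$. It therefore suffices to show that this single update keeps $P$.

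Fix an iteration that reaches line~\ref{alg:main:refine-hypothesis}, with $P \in \hs$ and query $\query=(\precond,\postcond_1,\dots,\postcond_k)$ returned by \textsc{GenerateQuery}. First we establish that $\query$ is valid for $\hs$ in the sense of Definition~\ref{def:valid-query}. This is the content of Corollary~\ref{cor:sep-valid}, which the proof of Theorem~\ref{thm:optimality} already cites. If it needs a derivation, it follows from Theorem~\ref{thm:sep} together with the way \textsc{GenerateQuery} builds bins.

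\emph{Coverage.} The bins $B_1,\dots,B_k$ are unions of the clusters $C_i$ produced by \textsc{GroupBySP}, so they partition $\hs$. Each returned $\postcond_j$ satisfies $\Phi^+_j \Rightarrow \postcond_j$, where $\Phi^+_j$ is the disjunction of the strongest postconditions of the programs in $B_j$. Hence every $P' \in B_j$ satisfies $\{\precond\}P'\{\postcond_j\}$.

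\emph{Mutual exclusion.} Theorem~\ref{thm:sep} gives $\mathrm{UNSAT}(\postcond_j \wedge \varphi_r)$ for every $r \neq j$. So a program in $B_r$, whose strongest postcondition implies $\varphi_r$, cannot satisfy $\postcond_j$. Here we need $\precond$ to be satisfiable; the (SAT) constraints of the OMT encoding guarantee this. With $\precond$ satisfiable, the strongest postcondition is a nonvacuous witness of the program's behaviour.

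Next, by the second assumption of Theorem~\ref{thm:main}, the user answers according to the semantics of $P$. So the selected $\postcond$ is the $\postcond_j$ with $\models\{\precond\}P\{\postcond_j\}$. Such a $j$ exists by coverage and is unique by mutual exclusion; uniqueness is what makes ``the correct answer'' well defined. The filter on line~\ref{alg:main:refine-hypothesis} keeps exactly those programs satisfying $\{\precond\}\,\cdot\,\{\postcond_j\}$, so $P$ survives and the invariant holds for the next iteration.

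The main obstacle is making rigorous what ``answers correctly'' means, and hence showing that validity really holds for the query actually posed. Two subtleties need care. First, the strongest-postcondition semantics must be exact; under bounded unrolling it is exact only up to the bound. Second, \textsc{GenerateQuery} loops until every separator is non-$\bot$. We would treat the returned query as one in which every separator succeeded, and we would not need to argue about termination of that loop here, since the lemma concerns only iterations in which a query is actually posed.
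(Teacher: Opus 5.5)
Your proof is correct and follows essentially the same argument as the paper: induction over loop iterations with the invariant $P \in \hs$, the observation that predicate refinement never modifies $\hs$, and the fact that a correct answer $\postcond_i$ satisfies $\{\precond\}\,P\,\{\postcond_i\}$, so the filter on line~\ref{alg:main:refine-hypothesis} keeps $P$. Your detour through Corollary~\ref{cor:sep-valid} adds rigor the paper's proof omits, since it shows that a correct answer exists and is unique. The lemma itself, however, only needs that the selected option satisfies the Hoare triple for $P$.
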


\begin{proof}
We argue by induction on the iterations of Algorithm~\ref{alg:main}, maintaining the invariant
$I_t:\ P\in\mathcal H_t$ at the start of each loop.

\emph{Base case.} Initially, $P\in\mathcal H_0$ by construction of the hypothesis space.

\emph{Inductive case.} Assume $P\in\mathcal H_t$. Consider the update to $\mathcal H_{t+1}$.

\smallskip
\textbf{Case 1:} No query is issued (i.e., {\sc GetBestPrecondition} returns $\bot$ or {\sc GenerateQuery} fails).  
Line~\ref{alg:main:precondition-fail} refines $\Upre$ but does not modify $\mathcal H$. Hence $\mathcal H_{t+1}=\mathcal H_t$, so $P\in\mathcal H_{t+1}$.

\smallskip
\textbf{Case 2:} A query $Q=(\phi,\psi_1,\ldots,\psi_k)$ is issued and the user selects option $i$. Since the user answers questions correctly, $\{\phi\}\,P\,\{\psi_i\}$ holds, and hence we have $P\in\mathcal H_{t+1}$.
\end{proof}

\begin{lemma}\label{lemma:termination}
Algorithm~\ref{alg:main} always terminates.
\end{lemma}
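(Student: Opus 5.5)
We prove termination by a potential argument on the pair $(|\hs|, \mathsf{NumUnique}(\hs))$ together with a bound on the number of consecutive refinement steps. The argument has three parts: the outer loop, the refinement branch of Algorithm~\ref{alg:main}, and the inner loop of \textsc{GenerateQuery}.

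\textbf{Outer loop.} Since $\hs$ is finite, it has at most $\binom{|\hs|}{2}$ distinct pairs. Every subroutine invoked in one iteration of Algorithm~\ref{alg:main} terminates on finite inputs:
\begin{itemize}
\item \textsc{GetDistinguishing} performs a finite DNF and prime-implicant computation over the finite universe $\Upre$.
\item The OMT instance of Figure~\ref{fig:omt-encoding} has finitely many Boolean variables.
\item \textsc{ConstructSeparator} terminates because each CEGIS iteration adds a blocking clause that excludes the current $\psi$. This clause is nonempty: if every atom of $\mathcal{A}$ held in $m$, then $m$ would be a model of $\Phi^+\wedge\varphi_j$, which line~3 has already ruled out. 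There are at most $2^{|\mathcal{A}|}$ cubes, so the loop ends.
\item \textsc{MergeClusters} searches a finite tree of partition mappings $[N]\to[k]$.
\end{itemize}

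\textbf{Progress on a completed iteration.} Suppose an iteration issues a query. We show that $|\hs|$ strictly decreases. The precondition $\precond$ chosen by \textsc{GetBestPrecondition} is non-$\bot$, so it distinguishes at least one pair (we argue that $\bot$ is returned exactly when no pair can be distinguished at positive net objective). By validity of the query (Definition~\ref{def:valid-query}, via the earlier corollary on separators), the bins $\hs_{\precond,i}$ partition $\hs$. By assumption~(3), $k\ge 2$, and we should additionally argue that every bin is nonempty. \textsc{MergeClusters} assigns each fine-grained class $C_i$, and there are at least two of these because $\precond$ separates some pair. With a balanced or optimal partition, no bin receives everything; otherwise $\dpower=0$, which would be dominated by any split. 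Whatever answer the user gives, the update on line~\ref{alg:main:refine-hypothesis} therefore removes a nonempty set of programs. So $|\hs|$ strictly decreases on every iteration that reaches the user, and after at most $|\hs_0|$ such iterations $\mathsf{NumUnique}(\hs)=1$ holds and the algorithm returns.

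\textbf{Refinement steps (main obstacle).} It remains to bound the iterations that take the \textsc{RefinePredicates} branch, and the inner refinement loop of \textsc{GenerateQuery}. Both only ever enlarge a universe, using atoms drawn from the finitely many weakest preconditions or strongest postconditions of programs in the current, finite $\hs$. Each universe is therefore monotone and bounded by a finite set, so after finitely many refinements it reaches a fixpoint. The difficulty is to show that at this fixpoint the procedure no longer fails.

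\begin{itemize}
\item \textbf{Precondition side.} Once all admissible atoms of each $\mathsf{WeakestPre}(P_1(x)\neq P_2(x))$ lie in $\Upre$, some pair that is not semantically equivalent has a distinguishing cube, and this cube yields a feasible precondition.
\item \textbf{Postcondition side.} If the atoms of the strongest postconditions are in $\Upost$, then distinct classes $C_i$, which have distinct $\gamma_i$ that are mutually unsatisfiable under $\precond$, can be separated by a conjunction of atoms of $\gamma_i$.
\end{itemize}

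This step requires an assumption that \textsc{Admissible} does not filter out the atoms that are needed, or at least that, if refinement adds nothing new, the loop is handled as failure with the current $\hs$. I would state this as a standing assumption alongside assumption~(3). The refinement bound, combined with the strict decrease of $|\hs|$, then gives termination.
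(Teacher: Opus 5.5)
Your strict-decrease argument is the paper's proof. The paper inducts on $|\mathcal{H}|$: when $\mathcal{H}$ contains semantically distinct programs, the separator-validity corollary, assumption (3) ($k\ge 2$), and the claim that every bin is nonempty ``by construction'' together show that any correct answer prunes at least one program. The paper never analyzes the refinement branches. It simply asserts that \textsc{GetBestPrecondition} returns a precondition distinguishing some pair. Your fixpoint analysis tries to close that hole but does not, and the standing assumption on \textsc{Admissible} that you propose is not enough.

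Concretely:

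(i) Distinct strongest postconditions need not be mutually unsatisfiable under $\phi$. $\gamma_i\wedge\gamma_j$ is satisfiable as soon as the two programs agree on \emph{some} input satisfying $\phi$, and $\phi$ fully separates only the pairs selected by the OMT. In that case line 3 of \textsc{ConstructSeparator} returns $\bot$ for every partition that puts $C_i$ and $C_j$ in different bins, whatever $\mathcal{U}^+$ is. So saturating $\mathcal{U}^+$ does not make the loop of \textsc{GenerateQuery} exit. For example, with three classes where $C_1,C_2$ are fully separated but both overlap $C_3$, no partition with two nonempty bins admits separators at all.

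(ii) On the precondition side, a distinguishing cube over the saturated $\mathcal{U}^-$ is feasible for the OMT. But by your own reading, $\bot$ is returned unless the optimum has positive net value. If every cube distinguishes no more pairs than $\lambda_{\mathrm{pre}}$ times its number of atoms, the all-zero assignment wins, and \textsc{RefinePredicates}, adding nothing new, loops forever.

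(iii) ``Dominated by any split'' ignores the $\lambda_{\mathrm{post}}$ term. Sending every class to one bin gives that bin the separator $\mathsf{true}$ and each empty bin a cube of at most two atoms (e.g.\ $a\wedge\neg a$). This can beat a balanced split with long separators, in which case nothing is pruned and the deterministic loop repeats the same query.

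(iv) Minor: the blocking clause can be empty, since a model of every atom in $\mathcal{A}$ need not satisfy the disjunction $\Phi^+$. This is harmless, because the clause is always falsified by the current $\psi$ and an empty clause just makes \textsc{MaxSAT} return $\bot$.

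The fix is to assume directly what the paper implicitly relies on: whenever $\mathcal{H}$ has semantically distinct programs, the iteration issues a query with at least two nonempty bins. Treating a stalled refinement as ``failure'' would change the algorithm rather than prove that it terminates. With that assumption, your strict-decrease argument suffices and the fixpoint discussion can be dropped.
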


\begin{proof}
We argue by induction on the cardinality of the hypothesis space $|\hs|$. 

\emph{Base case}. Suppose $|\hs| = 1$. Then trivially $\textsf{NumUnique}(\hs) = 1$, so we terminate. 

\emph{Inductive case}. Assume this lemma holds for $|\hs| \leq n$. We will show that the lemma holds for $|\hs| = n+1$.

\smallskip
\textbf{Case 1:} Suppose all programs in $\hs$ are semantically equivalent. Then $\textsf{NumUnique}(\hs) = 1$, and so active learning terminates.

\textbf{Case 2:} Suppose that there are semantically distinct programs in $\hs$. Then $\textsc{GetBestPrecondition}$ will compute a precondition $\phi$ that distinguishes at least one pair of programs in the hypothesis space. 
We will then generate a query $(\phi, \psi_1, \ldots, \psi_k)$ such that the postconditions $\{\psi_1, \ldots, \psi_k\}$ satisfy the coverage and mutual exclusion conditions of Definition ~\ref{def:valid-query} (by Corollary ~\ref{cor:sep-valid}), and each postcondition corresponds to a bin with at least one program (by construction). Since the user correctly selects the postcondition $\psi_i$ that corresponds to the semantics of $P$, and since $k \geq 2$, at least one program must be pruned. 
Thus, $|\hs| \leq n$ in the next iteration, and active learning will terminate by inductive hypothesis.
\end{proof}

\subsection{Proof of Theorem~\ref{cor:omt-equivalence}} 

\GetBestPrecond*

\medskip
\noindent\textbf{Assumptions and setup.}
By construction of {\sc GetDistinguishing}, for every pair $(i,j)$ the list
$\Phi(i,j)=\langle \Phi_{ij}^1,\dots,\Phi_{ij}^{K_{ij}}\rangle$ satisfies the following requirements:
(i) each $\Phi_{ij}^k$ is a \emph{cube over $\Upre$};
(ii) all $\Phi_{ij}^k$ are written over the same input variables;
(iii) each $\Phi_{ij}^k$ is a \emph{valid distinguisher} for $(P_i,P_j)$; and
(iv) if a cube $\psi\in\textsc{Cube}(\Upre)$ distinguishes $P_i$ and $P_j$,
then $\psi \Rightarrow \Phi_{ij}^k$ and $\mathrm{atoms}(\Phi_{ij}^k)\subseteq \mathrm{atoms}(\psi)$ for some $k$ by the definition of prime implicants(\emph{coverage–completeness}).


\begin{lemma}
\label{lem:inv}
Let $M$ satisfy (DP), (IA), (SAT). Then, if $p_{ij}(M)=1$ then $\precond_M$ distinguishes $P_i$ and $P_j$.
Consequently, $\sum_{i<j} p_{ij}(M) \le \dppre(\precond_M)$.
\end{lemma}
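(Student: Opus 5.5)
Fix a model $M$ satisfying (DP), (IA), (SAT), and write $\precond_M = \bigwedge_{t : a_t(M)=1} A_t$ as in Eq.~(\ref{eq:getbestprecond}). Suppose $p_{ij}(M)=1$. By (DP) there is some $k \in [K_{ij}]$ with $d_{ij}^k(M)=1$. Then (IA) gives $a_t(M)=1$ for every $t \in \mathrm{atoms}(\Phi_{ij}^k)$. Hence every conjunct of the cube $\Phi_{ij}^k$ (assumption (i)) also occurs as a conjunct of $\precond_M$. This holds literal by literal: since $\Upre$ is literal-closed, a negated atom is itself an element $A_t$ of $\Upre$ and has its own selector. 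We conclude $\mathrm{atoms}(\Phi_{ij}^k) \subseteq \mathrm{atoms}(\precond_M)$, and therefore $\precond_M \Rightarrow \Phi_{ij}^k$ syntactically, hence also semantically, for every input $x$.

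Next, assumption (iii) says $\Phi_{ij}^k$ is a valid distinguisher: $\forall x.\ \Phi_{ij}^k(x) \Rightarrow P_i(x) \neq P_j(x)$. Composing the two implications gives $\forall x.\ \precond_M(x) \Rightarrow P_i(x)\neq P_j(x)$. This is exactly the membership condition for the unordered pair $(P_i,P_j)$ in the set counted by $\dppre(\precond_M)$. The (SAT) constraint guarantees that $\precond_M$ is satisfiable, so the distinguishing condition is not vacuous. Strictly, the inequality does not need this, because $\dppre$ as defined counts pairs even for a vacuous precondition. I will only remark on it, since it matters for the interpretation rather than for the bound. Assumption (ii) ensures $\precond_M$ and $\Phi_{ij}^k$ are over the same input variables, so the implication is well-typed.

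For the consequence, observe that the map from flags $\{(i,j) : i<j,\ p_{ij}(M)=1\}$ to unordered pairs $\{P_i,P_j\}$ is injective. By the first part, its image lies inside the set defining $\dppre(\precond_M)$. Summing the $0/1$ flags therefore gives $\sum_{i<j} p_{ij}(M) \le \dppre(\precond_M)$.

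The only delicate step is the syntactic containment $\precond_M \Rightarrow \Phi_{ij}^k$. It requires that atoms of $\Phi_{ij}^k$ be indexed by the same $A_t$ (including polarity) as the selectors. I would state this identification explicitly, using literal-closure of $\Upre$ and the fact that {\sc GetDistinguishing} produces cubes over $\Upre$.
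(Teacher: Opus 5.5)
Your proof is correct and matches the paper's argument: (DP) gives a selected witness $d_{ij}^k$, (IA) forces every atom of $\Phi_{ij}^k$ into $\phi_M$ so that $\phi_M \Rightarrow \Phi_{ij}^k$, the validity of that distinguisher gives the pairwise claim, and summing over pairs gives the bound. Your additional remarks are accurate refinements of the same proof, not a different route: literal-closure to handle polarity, injectivity of $(i,j)\mapsto\{P_i,P_j\}$, and the observation that the inequality holds even without (SAT), whereas the paper uses (SAT) only so that the pair count is well defined.
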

\begin{proof} By (SAT), $\precond_M$ is satisfiable and such that $\dppre(\precond_M)$ is well defined. 
From (DP), $p_{ij}(M)=1$ implies $d_{ij}^k(M)=1$ for some $k$.
By (IA), every atom occurring in the cube $\Phi_{ij}^k$ is selected among the $a_t$,
so $\precond_M \Rightarrow \Phi_{ij}^k$.
By construction of {\sc GetDistinguishing}, $\Phi_{ij}^k$ is a valid distinguisher;
hence $\precond_M$ distinguishes $P_i$ and $P_j$. Summing over pairs gives the desired inequality.
\end{proof}

\begin{lemma}
\label{lem:opt-counts}
Let $M^\star$ be an optimal satisfying assignment. Then
$\sum_{i<j} p_{ij}(M^\star)=\dppre(\precond_{M^\star})$.
\end{lemma}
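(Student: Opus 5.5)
We need to show that at an optimal assignment $M^\star$, $\sum_{i<j} p_{ij}(M^\star)=\dppre(\precond_{M^\star})$. Lemma~\ref{lem:inv} already gives $\sum_{i<j} p_{ij}(M^\star)\le \dppre(\precond_{M^\star})$, so only the reverse inequality remains. The plan is an exchange argument. Suppose some pair $(P_i,P_j)$ is distinguished by $\precond_{M^\star}$ but has $p_{ij}(M^\star)=0$. We will modify $M^\star$ into a feasible assignment $M'$ with strictly larger objective, which contradicts optimality.

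\emph{Constructing $M'$.} Since $\precond_{M^\star}$ is a cube over $\Upre$ that distinguishes $P_i$ and $P_j$, coverage--completeness (assumption (iv)) yields an index $k$ such that $\precond_{M^\star}\Rightarrow \Phi_{ij}^k$ and $\mathrm{atoms}(\Phi_{ij}^k)\subseteq \mathrm{atoms}(\precond_{M^\star})$. Define $M'$ from $M^\star$ by setting $d_{ij}^k=1$ and $p_{ij}=1$, leaving every atom selector $a_t$ unchanged.

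\emph{Checking feasibility of $M'$.} Constraint (DP) holds for the pair $(i,j)$ because $d_{ij}^k=1$ and $p_{ij}=1$; all other pairs are untouched. Constraint (IA) holds for $d_{ij}^k$ because every atom of $\Phi_{ij}^k$ is an atom of $\precond_{M^\star}$, so its selector $a_t$ is already $1$. Constraint (SAT) is unaffected because no atom selector changed.

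\emph{Comparing objectives.} The objective of $M'$ exceeds that of $M^\star$ by exactly $1$: the sum of pair flags increases by one and the penalty term $\lambda_{\mathrm{pre}}\sum_t a_t$ is unchanged. This contradicts the optimality of $M^\star$. Hence every pair distinguished by $\precond_{M^\star}$ has $p_{ij}(M^\star)=1$, which gives $\sum_{i<j}p_{ij}(M^\star)\ge\dppre(\precond_{M^\star})$, and equality follows.

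The main subtlety is the atom-containment half of assumption (iv). The implication $\precond_{M^\star}\Rightarrow\Phi_{ij}^k$ alone would not suffice, since (IA) is a syntactic constraint on selected atoms. This is why we invoke the prime-implicant property that $\mathrm{atoms}(\Phi_{ij}^k)\subseteq\mathrm{atoms}(\precond_{M^\star})$. We also need $\precond_{M^\star}$ to be read exactly as the conjunction of atoms with $a_t=1$, which Eq.~(\ref{eq:getbestprecond}) ensures. Finally, pairs in $\dppre$ are unordered and are indexed with $i<j$, consistent with the objective.
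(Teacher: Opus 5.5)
Your proposal is correct and follows essentially the same argument as the paper. You use Lemma~\ref{lem:inv} for one inequality, and for the other you make the same exchange: coverage--completeness (including the atom-containment half) lets you set $d_{ij}^k$ and $p_{ij}$ to $1$ with every $a_t$ held fixed, which raises the objective by one; the paper phrases the conclusion as equality of the sets of flagged and distinguished pairs rather than as two inequalities, which is the same thing.
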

\begin{proof}
Let $(i,j)$ be any pair that $\precond_{M^\star}$ distinguishes.
By coverage–completeness of $\Phi(i,j)$, there exists $k$ such that
\begin{equation}
\label{eq:containment}
\precond_{M^\star} \;\Rightarrow\; \Phi_{ij}^k
\quad\text{and}\quad
\mathrm{atoms}(\Phi_{ij}^k)\subseteq \mathrm{atoms}(\precond_{M^\star}).
\end{equation}
We claim $p_{ij}(M^\star)=1$. Suppose, for contradiction, that $p_{ij}(M^\star)=0$.
Construct $M'$ from $M^\star$ by setting $d_{ij}^k := 1$ and $p_{ij} := 1$,
leaving all $a_t$ (and all other $d$, $p$ and input variables) unchanged. We now argue that $M'$ is feasible and results in an objective improvement, thereby contradicting that $M^\star$ is a valid solution of the OMT problem.

\smallskip\noindent\emph{Feasibility of $M'$.}
\begin{itemize}[leftmargin=1.2em]
\item (IA): By \eqref{eq:containment}, every atom of $\Phi_{ij}^k$ already appears in $\precond_{M^\star}$,
so the corresponding $a_t$ are $1$ in $M^\star$ and remain $1$ in $M'$. Thus the implications required by (IA) for $d_{ij}^k$ hold.
\item (SAT): Should still holds because all $a_t$ selector variables and input variables are unchanged.
\item (DP): Setting $p_{ij}=1$ when some $d_{ij}^k=1$ preserves the biconditional for pair $(i,j)$;
all other pairs are unchanged, so (DP) holds globally.
\end{itemize}
\smallskip\noindent\emph{Objective improvement.}
No $a_t$ flips its assignment, so the complexity penalty is unchanged, while the coverage term increases by $1$.
Hence
\[
\sum_{i<j} p_{ij}(M') - \lambda_{\mathrm{pre}}\sum_t a_t(M')
\;>\;
\sum_{i<j} p_{ij}(M^\star) - \lambda_{\mathrm{pre}}\sum_t a_t(M^\star),
\]
contradicting optimality of $M^\star$.
Therefore $p_{ij}(M^\star)=1$ for every pair distinguished by $\precond_{M^\star}$.

\smallskip
\noindent\emph{From indicators to counts.}
Let
\[
S_p \;:=\; \{(i,j)\mid p_{ij}(M^\star)=1\}
\qquad\text{and}\qquad
S_d \;:=\; \{(i,j)\mid \precond_{M^\star}\ \text{distinguishes }(i,j)\}.
\]
We showed that, if $(i,j)\in S_d$, then $p_{ij}(M^\star)=1$, i.e.\ $S_d\subseteq S_p$.
By the first lemma, $p_{ij}=1$ implies that $(i,j)$ is distinguished, hence $S_p\subseteq S_d$.
Thus $S_p=S_d$, and taking cardinalities,
\[
\sum_{i<j} p_{ij}(M^\star) \;=\; |S_p| \;=\; |S_d| \;=\; \dppre(\precond_{M^\star}).
\]
Note that (SAT) ensures $\dppre(\precond_{M^\star})$ is well-defined as above.

\end{proof}

\begin{proposition}
\label{prop:obj-id}
For any optimal satisfying assignment $M^\star$,
\[
\sum_{i<j} p_{ij}(M^\star)\;-\;\lambda_{\mathrm{pre}}\sum_t a_t(M^\star)
\;=\;
\dppre(\precond_{M^\star})\;-\;\lambda_{\mathrm{pre}}\cdot\complexity(\precond_{M^\star}).
\]
\end{proposition}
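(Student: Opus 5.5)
The identity splits into two independent parts, and I will match the two terms of the objective one at a time.

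For the coverage term, Lemma~\ref{lem:opt-counts} already gives $\sum_{i<j} p_{ij}(M^\star)=\dppre(\precond_{M^\star})$ for any optimal satisfying assignment $M^\star$. So it remains only to show
\[
\sum_t a_t(M^\star) \;=\; \complexity(\precond_{M^\star}),
\]
where $\complexity$ counts the conjuncts of the cube, as in the statement of Theorem~\ref{cor:omt-equivalence}. Multiplying by $\lambda_{\mathrm{pre}}$ and subtracting then yields the proposition.

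For the penalty term, recall from Eq.~(\ref{eq:getbestprecond}) that $\precond_{M^\star}=\bigwedge_{t:a_t(M^\star)=1}A_t$. Its conjuncts are therefore in bijection with the indices $t$ having $a_t(M^\star)=1$, so the number of conjuncts is $\sum_t a_t(M^\star)$. This step needs the atoms $A_1,\dots,A_n$ of $\Upre$ to be pairwise distinct, so that no two selected indices collapse to the same syntactic conjunct. I will state this explicitly as a standing convention: $\Upre$ is a set, so its atoms are distinct.

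The main subtlety is that the identity should not fail because of an ``unused'' selected atom. Suppose $a_t=1$ for an atom that no selected $d_{ij}^k$ requires. Such an atom still contributes to $\precond_{M^\star}$ and is still counted in $\complexity$, so the equality holds regardless. Both sides count exactly the selected atoms, and the identity is purely syntactic here.

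One could additionally argue via optimality that such atoms never occur. Setting that $a_t$ to $0$ keeps (IA) and (SAT) satisfied and strictly increases the objective, since $\lambda_{\mathrm{pre}}>0$. This is not needed for the proposition, but it reassures that the $\precond_{M^\star}$ produced by the solver has no gratuitous conjuncts.

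I expect the only real obstacle to be making the convention on $\complexity$ (distinct atoms, count of conjuncts) precise. Once that is fixed, the proof is two lines.
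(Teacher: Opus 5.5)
Your proposal is correct and follows the paper's own argument: Lemma~\ref{lem:opt-counts} handles the coverage term, and the convention that $\complexity$ counts the atoms of the cube gives $\sum_t a_t(M^\star)=\complexity(\precond_{M^\star})$ for the penalty term. Your extra remarks, on distinctness of atoms and on why unused selected atoms cause no problem (and cannot occur at an optimum), spell out details the paper's one-line proof leaves implicit.
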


\begin{proof}
Follows from Lemma~\ref{lem:opt-counts} and the assumption that $\complexity$ is defined in terms of the number of atomic predicates in the cube.
\end{proof}

\begin{theorem}[Equivalence to Objective~(1)]
\label{thm:eq-obj1}
Let $M^\star$ be a model satisfying the OMT encoding in Fig.~\ref{fig:omt-encoding}, and let $\precond^\star \in \textsc{Cube}(\Upre)$ be any proposition. If $\precond^\star = \precond_{M^\star}$, then
\[
M^\star \text{ is the optimal model for the encoding in Fig.}~\ref{fig:omt-encoding}
\iff
\precond^\star \in \argmax_{\psi \in \textsc{Cube}(\Upre)}
\Bigl[\dppre(\psi) - \lambda_{\mathrm{pre}}\cdot\complexity(\psi)\Bigr].
\]
Thus, optimizing problem in the OMT encoding in Fig.~\ref{fig:omt-encoding}
is consistent with our optimization goal described in Objective (1).
\end{theorem}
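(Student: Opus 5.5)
The plan is to relate feasible OMT models and cubes over $\Upre$ in both directions, and then use Proposition~\ref{prop:obj-id} to transfer optimality. Write $\mathrm{Obj}(M)=\sum_{i<j}p_{ij}(M)-\lambda_{\mathrm{pre}}\sum_t a_t(M)$ and $J(\psi)=\dppre(\psi)-\lambda_{\mathrm{pre}}\cdot\complexity(\psi)$. The first step is an embedding: every satisfiable cube $\psi\in\textsc{Cube}(\Upre)$ is realized by a feasible model $M_\psi$ with $\precond_{M_\psi}=\psi$ and $\mathrm{Obj}(M_\psi)=J(\psi)$. Construct $M_\psi$ as follows. Set $a_t=1$ exactly for the atoms of $\psi$. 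For each pair $(i,j)$ that $\psi$ distinguishes, invoke coverage--completeness (iv) to pick $k$ with $\mathrm{atoms}(\Phi_{ij}^k)\subseteq\mathrm{atoms}(\psi)$, and set $d_{ij}^k=p_{ij}=1$. Set every other $d$ and $p$ to $0$, and take the input variables to be any model of $\psi$. Then (IA) holds by the containment, (DP) holds by construction, and (SAT) holds because the inputs satisfy $\psi$. The number of pairs with $p_{ij}=1$ is exactly $\dppre(\psi)$, since pairs not distinguished by $\psi$ are set to $0$, and $\sum_t a_t=\complexity(\psi)$. So $\mathrm{Obj}(M_\psi)=J(\psi)$.

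The second step is the reverse bound. By Lemma~\ref{lem:inv}, every feasible $M$ satisfies $\mathrm{Obj}(M)\le J(\precond_M)$, and $\precond_M$ is a satisfiable cube. Combining the two steps gives $\max_M \mathrm{Obj}(M)=\max_\psi J(\psi)$, where $\psi$ ranges over satisfiable cubes. We may restrict to satisfiable cubes because an unsatisfiable cube vacuously ``distinguishes'' every pair. I would make explicit that $\textsc{Cube}(\Upre)$ in Eq.~(\ref{eq:preobjective}) means satisfiable cubes, matching the paper's phrase ``select a satisfiable precondition''; this is the main subtlety to flag.

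The third step proves the two directions. ($\Rightarrow$) Let $M^\star$ be optimal. By Proposition~\ref{prop:obj-id}, $J(\precond_{M^\star})=\mathrm{Obj}(M^\star)=\max_M\mathrm{Obj}(M)$. Since this maximum equals $\max_\psi J(\psi)$, we get $\precond^\star\in\argmax_\psi J(\psi)$. ($\Leftarrow$) Let $\precond^\star=\precond_{M^\star}$ be a maximizer of $J$. The direction as literally stated needs care, because an arbitrary model $M^\star$ with this cube could leave some $p_{ij}=0$ and so be suboptimal. I would therefore read the claim as follows: the model $M_{\precond^\star}$ from the first step, or equivalently any $M^\star$ with $\precond_{M^\star}=\precond^\star$ satisfying $\mathrm{Obj}(M^\star)=J(\precond^\star)$, is optimal. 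That holds because $\mathrm{Obj}(M_{\precond^\star})=J(\precond^\star)=\max_\psi J(\psi)=\max_M\mathrm{Obj}(M)$. Theorem~\ref{cor:omt-equivalence} then needs only the ($\Rightarrow$) direction, since the solver returns an optimal model.

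I expect the main obstacle to be bookkeeping in the first step rather than anything deep. Choosing the witness $k$ through property (iv) is what makes $M_\psi$ pay no extra atom cost. The satisfiability convention on cubes and the precise reading of ($\Leftarrow$) are the points to state carefully.
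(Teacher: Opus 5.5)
Your proposal is correct and follows essentially the paper's route: the same coverage--completeness construction of $M_\psi$ (which the paper uses to show that every satisfiable cube lies in $\phi_{\mathscr{M}}$), combined with Proposition~\ref{prop:obj-id} and the exclusion of unsatisfiable cubes. Your caveat on ($\Leftarrow$) is well founded: the paper's opening reduction, ``$M^\star$ optimal iff $\precond^\star$ maximizes the objective over $\phi_{\mathscr{M}}$,'' fails for a feasible $M^\star$ that leaves some distinguished pair with $p_{ij}=0$, and your explicit identity $\mathrm{Obj}(M_\psi)=J(\psi)$ is exactly what makes the ($\Rightarrow$) half of that reduction rigorous.
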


\begin{proof}
Assuming a fixed space of programs, let $\mathscr{M}$ denote the set of all models satisfying the hard constraints:
\[
\mathscr{M} := \{\, M \mid M \text{ satisfies the hard constraints (DP), (IA), and (SAT) of the OMT encoding in Fig.~\ref{fig:omt-encoding}} \,\}.
\]

Correspondingly, let
\[
\phi_{\mathscr{M}} := \{\, \precond_M \mid M \in \mathscr{M} \,\}
\]
denote the set of all preconditions induced by models in $\mathscr{M}$, where each
\[
\precond_M := \bigwedge_{\;a_t(M)=1} a_t
\]

For an arbitrary model $M^\star \in \mathscr{M}$ and $\precond^\star \in \textsc{Cube}(\Upre)$ 
such that $\phi_{M^\star} = \precond^\star$, we have the following as a result of Prop. ~\ref{prop:obj-id}:
\[
M^\star \text{ is optimal for the encoding in Fig.~\ref{fig:omt-encoding}}
\;\;\iff\;\;
\phi^\star \in 
\argmax_{\psi \in \phi_{\mathscr{M}}}
\Bigl[
\dppre(\psi) - \lambda_{\mathrm{pre}} \cdot \complexity(\psi)
\Bigr].
\]
Hence, it remains to show that
\[
\phi^\star \in 
\argmax_{\psi \in \phi_{\mathscr{M}}}
\Bigl[
\dppre(\psi) - \lambda_{\mathrm{pre}} \cdot \complexity(\psi)
\Bigr]
\quad\iff\quad
\phi^\star \in 
\argmax_{\psi \in \textsc{Cube}(\Upre)}
\Bigl[
\dppre(\psi) - \lambda_{\mathrm{pre}} \cdot \complexity(\psi)
\Bigr].
\]

\paragraph*{($\Leftarrow$) Direction.}
This direction is immediate, since by definition we have $\phi_{\mathscr{M}} \subseteq \textsc{Cube}(\Upre)$

\paragraph*{($\Rightarrow$) Direction.}
Let $\psi$ be an arbitrary cube in $\textsc{Cube}(\Upre)$.
If $\psi$ is unsatisfiable, then $\dppre(\psi)$ is undefined and $\psi$ cannot outperform $\precond^\star$. Otherwise, construct a model $M_\psi \in \mathscr{M}$ such that $\phi_{M_\psi} = \psi$ by assigning $a_t = 1$ exactly for every atom $A_t(x)$ appearing in $\psi$ and $a_t = 0$ otherwise. For each pair distinguished by $\psi$, coverage–completeness yields
$k$ with $\psi \Rightarrow \Phi_{ij}^k$ and $\text{atoms}(\Phi_{ij}^k) \subseteq \text{atoms}(\psi)$; set $d_{ij}^k=p_{ij}=1$. Set all remaining $d$/$p$ to $0$ and set all input variables according to any satisfying assignment of $\psi$. By this construction, all hard constraints (DP), (IA), and (SAT) are satisfied.
Specifically, (IA) holds because $\text{atoms}(\Phi_{ij}^k) \subseteq \text{atoms}(\psi)$, (DP) holds by direct construction of $p_{ij}$ and $d_{ij}^k$, and (SAT) holds since $\psi$ is satisfiable and we use one satisfying assignment for all input variables. Consequently, $\psi \in \phi_{\mathscr{M}}$.
Since $\phi^\star$ is optimal over $\phi_{\mathscr{M}}$, we have
\[
\dppre(\psi) - \lambda_{\mathrm{pre}}\cdot\complexity(\psi)
\;\le\;
\dppre(\phi^\star) - \lambda_{\mathrm{pre}}\cdot\complexity(\phi^\star).
\]
Since $\psi$ was chosen arbitrarily, it follows that $\precond^\star$ is optimal over all $\textsc{Cube}(\Upre)$.


\end{proof}

\subsection{Proof of Theorem~\ref{thm:sep}}

\ConstructSeparator*

\begin{proof}
Let $\mathcal{U}^+$ be the atom universe. Define
$\mathcal{A}\;=\;\{\,a\in\mathcal{U}^+\mid \mathrm{UNSAT}(\Phi^+\land \neg a)\,\}$.
We first show that restricting the search to $\mathcal{A}$ is valid, and then prove
termination, soundness, and optimality.

\paragraph{No loss by restricting to $\mathcal{A}$.}
Suppose $\hat\psi=\bigwedge_{a\in\hat S} a$ with $\hat S\subseteq\mathcal{U}^+$ satisfies
$\Phi^+\Rightarrow \hat\psi$. Then for each $a\in\hat S$ we have $\Phi^+\Rightarrow a$, which is
equivalent to $\mathrm{UNSAT}(\Phi^+\land \neg a)$; hence $a\in\mathcal{A}$ and
$\hat S\subseteq\mathcal{A}$. Thus any feasible cube over $\mathcal{U}^+$ is in fact a cube over
$\mathcal{A}$.
\paragraph{Termination.}
Since $\mathcal{A}$ is finite, there are finitely many cubes 
$\psi=\bigwedge_{a\in\mathcal{A}'}a$ with $\mathcal{A}'\subseteq\mathcal{A}$.
Each iteration of the algorithm either returns a valid separator or adds a 
blocking constraint derived from a counterexample model $m$ satisfying 
$\Phi^+\land\varphi_j\land\psi$. 
The blocking constraint ensures that any future candidate must differ from 
$\psi$ on at least one atom falsified by $m$, thereby eliminating $\psi$ 
and possibly others. 
Because there are finitely many possible cubes, the algorithm must 
eventually terminate.

\paragraph{Soundness.}
If the algorithm returns a formula $\psi$, it does so only when 
$\mathrm{UNSAT}(\varphi_j\land\psi)$ holds for all $j$.
Thus $\psi$ excludes all negative bins. 
Moreover, since all atoms in $\psi$ come from $\mathcal{A}$, each is implied 
by $\Phi^+$, and therefore $\Phi^+\Rightarrow\psi$. 
Hence $\psi$ satisfies the separation conditions:
\[
\Phi^+\Rightarrow\psi
\qquad\text{and}\qquad
\forall j.\ \mathrm{UNSAT}(\varphi_j\land\psi).
\]

\paragraph{Optimality.}
At each iteration, the MaxSAT solver returns the cube $\psi$ that 
minimizes the objective $\sum_{a\in\mathcal{A}}s_a$ among all cubes consistent 
with the accumulated blocking constraints. 
Each constraint added by a counterexample $m$ is necessary: any cube 
omitting all atoms falsified by $m$ would satisfy $\varphi_j$ 
for the same $j$, and thus cannot be feasible. 
Thus, no feasible separator is ever ruled out. 
When the algorithm terminates with a feasible $\psi$, it must therefore 
minimize the objective among all feasible cubes.
\end{proof}

\begin{corollary}[Separator Validity]\label{cor:sep-valid}
Let $\{B_i\}_{i=1}^k$ be the bins induced by a fixed precondition $\precond$, and suppose that for each $i$, 
$ 
\psi_i \;=\; \textsc{ConstructSeparator}\bigl(B_i, \{B_j\}_{j \neq i}, \Upost\bigr) \neq \bot.
$ 
Then the postconditions $\{\psi_1, \dots, \psi_k\}$ satisfy the coverage and mutual exclusion 
conditions of Definition~\ref{def:valid-query} with respect to the hypothesis space $\hs$.
\end{corollary}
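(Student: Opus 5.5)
The proof uses the characterization of \textsc{ConstructSeparator} from Theorem~\ref{thm:sep}, together with the fact that the bins $B_1,\dots,B_k$ partition $\hs$. Both facts hold by construction: \textsc{GroupBySP} partitions $\hs$ into the classes $C_i$, and \textsc{MergeClusters} produces a total mapping $\mathcal{F}:[N]\to[k]$. For each program $P$ let $\gamma_P=\mathsf{StrongestPost}(P,\precond)$. I will use the standard property of strongest postconditions: $\models\{\precond\}P\{\psi\}$ if and only if $\gamma_P\Rightarrow\psi$.

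\emph{Coverage.} Take any $P\in\hs$ and let $B_i$ be the bin containing $P$. The positive specification for $B_i$ is $\Phi^+_i=\bigvee_{P'\in B_i}\gamma_{P'}$, so $\gamma_P\Rightarrow\Phi^+_i$. Since $\psi_i\neq\bot$, Theorem~\ref{thm:sep} gives $\Phi^+_i\Rightarrow\psi_i$. Hence $\gamma_P\Rightarrow\psi_i$, which means $\models\{\precond\}P\{\psi_i\}$.

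\emph{Mutual exclusion.} Suppose for contradiction that $\models\{\precond\}P\{\psi_i\}$ and $\models\{\precond\}P\{\psi_j\}$ for some $i\neq j$. The program $P$ lies in exactly one bin, so at least one of $i,j$, say $j$, is not $P$'s bin $B_r$. Then $P\in B_r$ with $r\neq j$, so $B_r$ is one of the negative bins passed when computing $\psi_j$. Its negative specification is $\varphi_r=\bigvee_{P'\in B_r}\gamma_{P'}$, and Theorem~\ref{thm:sep} gives $\mathrm{UNSAT}(\psi_j\land\varphi_r)$. But $\models\{\precond\}P\{\psi_j\}$ means $\gamma_P\Rightarrow\psi_j$, and $\gamma_P\Rightarrow\varphi_r$, so any model of $\gamma_P$ satisfies $\psi_j\land\varphi_r$. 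This is a contradiction, provided $\gamma_P$ is satisfiable.

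\emph{Main obstacle.} The delicate step is exactly that satisfiability requirement. If the strongest postcondition is unsatisfiable, which happens when $\precond$ admits no input, then every Hoare triple holds vacuously and mutual exclusion fails. I would discharge this with the (SAT) constraint of the OMT encoding, which ensures that $\precond$ is satisfiable, together with the assumption that programs are total on the bounded, unrolled domain. Under these, $\gamma_P$ is satisfiable for every $P$. I will state this assumption explicitly and note where it is used. I will also make explicit that the argument applies the correspondence between Hoare validity and implication from $\gamma_P$ in both directions.
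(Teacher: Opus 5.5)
Your proof is correct and takes essentially the same route as the paper's: coverage follows from $\gamma_P \Rightarrow \Phi^+_i \Rightarrow \psi_i$ via Theorem~\ref{thm:sep}, and mutual exclusion follows by contradiction with $\mathrm{UNSAT}(\psi_j \land \varphi_r)$ for the negative bin $B_r$ containing $P$. Your explicit requirement that $\mathsf{StrongestPost}(P,\precond)$ be satisfiable is well placed: the paper's step ``$\textsf{sp}(P,\precond)\Rightarrow\psi_j$, so $\textsf{sp}(P,\precond)\wedge\psi_j$ is satisfiable'' uses it without saying so, and without it (e.g.\ for an unsatisfiable $\precond$) every separator would be the empty cube $\top$, so mutual exclusion would fail.
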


\begin{proof} Let $\prog \in \hs$, and note $\prog \in B_i$ for some $i$.
    
\paragraph{Coverage.} 
By construction, $\textsf{sp}(\prog, \precond) \implies \psi_i$, and so $\vDash \{\precond\}\prog\{\psi_i\}$.

\paragraph{Mutual exclusion.} Consider another postcondition $\psi_j$ with $i \neq j$. Towards contradiction, suppose $\vDash \{\precond\}\prog\{\psi_j\}$. Then $\textsf{sp}(\prog, \precond) \implies \psi_j$, and so $\textsf{sp}(\prog, \precond) \wedge \psi_j$ is satisfiable. However, since $\prog \in B_i$, $\textsf{sp}(\prog, \precond) \implies \bigvee_{Q\in B_i} \textsf{sp}(Q, \precond)$. Hence, $\psi_j \wedge \bigvee_{Q\in B_i} \textsf{sp}(Q, \precond)$ must be satisfiable, which contradicts the $\text{UNSAT}(\psi_j, \bigvee_{Q\in B_i} \textsf{sp}(Q, \precond))$ property which we have by Theorem ~\ref{thm:sep}. $\Rightarrow\!\Leftarrow$ Therefore, we have $\neg(\vDash \{\precond\}\prog\{\psi_i\} \wedge \vDash \{\precond\}\prog\{\psi_j\})$ 
\end{proof}

\subsection{Proof of Theorem ~\ref{thm:ub-sound}}

\UBSoundness*

\begin{proof}
Let $P_j$ be the per-bin programs defined by
Algorithm~\ref{alg:compute-branch-ub} for the partial partition $\mathcal F_p$.
Fix any completion $\mathcal F \supseteq \mathcal F_p$, and let
$Q=(\phi,\psi^{\mathcal F}_1,\dots,\psi^{\mathcal F}_k)$ be the induced query. Note that completions only add hypotheses to bins, so
$\mu_j^{\mathcal F}\ge |P_j|/|\hs|$ for all $j$, hence
$\max_j \mu_j^{\mathcal F}\ \ge\ (\max_j |P_j|/|\hs|)$.
Then, 
\begin{equation}
\label{eq:ub-dp}
\dpower(Q)\ =\ \dpower(\mathcal F,\phi)\ =
1- \max_{j\in[k]}\tfrac{|P_j|}{|\hs|}.
\end{equation}


Therefore, 
\[
\dpower(Q)\;-\;\lambda_{\mathrm{post}}\sum_{j=1}^k \complexity(\psi^{\mathcal F}_j)
\ \le\
\Bigl[\,1- \max_{j}\tfrac{|P_j|}{|\hs|}\Bigr]
\;=\; U,
\]
\end{proof}

\subsection{Proof of Theorem ~\ref{cor:merge-clusters}}

\MergeClusters*

\begin{proof}
The procedure \textsc{BranchAndBound} is a standard branch-and-bound search over a finite set of admissible completions.
By Theorem~\ref{thm:ub-sound}, \textsc{ComputeBranchUB} provides a sound upper bound on the objective for any subtree, ensuring that no branch containing a better solution is pruned.
Since every completion is either evaluated directly or pruned only when its achievable objective is $\le O_{\mathrm{best}}$, the final incumbent partition $\mathcal{F}$ must achieve the global maximum of \textsc{EvaluateObjective}. Lemma~\ref{lem:mergecluster-completeness} shows that optimizing over the entire space of mappings $\mathcal{F} : [N] \to [k]$ is equivalent to optimizing over all possible postconditions in $\textsc{Cube}(\Upost)^k$.
Hence the returned partition~$\mathcal{F}$ yields a query $(\phi, \psi_1(\mathcal{F}), \dots, \psi_k(\mathcal{F}))$ that is valid (by Lemma~\ref{lem:mergecluster-query-valid}) and satisfies
\begin{align*}
(\phi,\,\psi_1(\mathcal{F}),\dots,\psi_k(\mathcal{F})) \in \argmax_{\substack{(\postcond_1, \dots, \postcond_k) \in \textsc{Cube}(\Upost)^k}}
\left[
\dpower(\precond, \postcond_1, \dots, \postcond_k) - \lambda_{\mathrm{post}} \cdot \sum_{i=1}^k \complexity(\postcond_i)
\right].
\end{align*}.
\end{proof}

\begin{lemma}[\textsc{MergeCluster} Query Validity]\label{lem:mergecluster-query-valid}

Fix a precondition $\precond$ and a hypothesis space $\hs$, and let $\{C_1, C_2, \ldots\}$ be the partition of $\hs$ induced by equality of strongest postconditions $sp(\precond, P)$.
For any mapping $\mathcal{F} : [N] \to [k]$ as in Definition~\ref{def:induced-query},
$(\precond, \psi_1(\mathcal{F}), \ldots, \psi_k(\mathcal{F}))$
is a valid query satisfying the coverage and mutual-exclusion conditions of Definition~\ref{def:valid-query}.
Moreover, for any $P \in C_i \subseteq \hs$, we have
\[
\mathsf{sp}(\precond, P) \Rightarrow \psi_j(\mathcal{F})
\iff \quad
j = \mathcal{F}(i).
\]
\end{lemma}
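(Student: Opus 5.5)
The plan is to reduce the lemma to Theorem~\ref{thm:sep} and Corollary~\ref{cor:sep-valid}, applied bin by bin. We assume, as the surrounding theorems implicitly do, that every separator $\psi_j(\mathcal{F})$ is defined (not $\bot$). This is exactly the situation in which \textsc{GenerateQuery} returns. We also assume every bin $P_j(\mathcal{F})$ is nonempty; an empty bin would only weaken the ``$\Leftarrow$'' direction for that index, and the statement concerns only indices $j=\mathcal{F}(i)$ of occupied bins.

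Fix $P\in C_i$ and write $j_0=\mathcal{F}(i)$. By Definition~\ref{def:induced-query}, $P\in P_{j_0}(\mathcal{F})$. Hence $\mathsf{sp}(\precond,P)$ is one disjunct of the positive specification
\[
\Phi^+_{j_0}=\bigvee_{Q\in P_{j_0}(\mathcal{F})}\mathsf{sp}(\precond,Q).
\]
Theorem~\ref{thm:sep} gives $\Phi^+_{j_0}\Rightarrow\psi_{j_0}(\mathcal{F})$, so $\mathsf{sp}(\precond,P)\Rightarrow\psi_{j_0}(\mathcal{F})$. This is the ``$\Leftarrow$'' direction.

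For ``$\Rightarrow$'', take $j\neq j_0$. Then $P_{j_0}(\mathcal{F})$ is exactly the set $N_{j_0}\in\mathcal{N}_j(\mathcal{F})$. So Theorem~\ref{thm:sep}, applied to the call producing $\psi_j(\mathcal{F})$, gives
\[
\mathrm{UNSAT}\bigl(\psi_j(\mathcal{F})\wedge\textstyle\bigvee_{Q\in P_{j_0}(\mathcal{F})}\mathsf{sp}(\precond,Q)\bigr),
\]
and in particular $\mathsf{sp}(\precond,P)\wedge\psi_j(\mathcal{F})$ is unsatisfiable. The key observation is that $\mathsf{sp}(\precond,P)$ is satisfiable: $\precond$ is satisfiable by (SAT) in Figure~\ref{fig:omt-encoding}, and the strongest postcondition of a satisfiable precondition under a total program is satisfiable. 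Therefore $\mathsf{sp}(\precond,P)\Rightarrow\psi_j(\mathcal{F})$ would make $\mathsf{sp}(\precond,P)\wedge\psi_j(\mathcal{F})$ equivalent to $\mathsf{sp}(\precond,P)$, which is satisfiable, a contradiction. Hence $\mathsf{sp}(\precond,P)\not\Rightarrow\psi_j(\mathcal{F})$, which completes the biconditional.

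Validity then follows directly. The bins $\{P_j(\mathcal{F})\}$ partition $\hs$ because $\{C_i\}$ partitions $\hs$ and $\mathcal{F}$ is a total function. Reading $\models\{\precond\}P\{\psi\}$ as $\mathsf{sp}(\precond,P)\Rightarrow\psi$, the biconditional says each $P$ satisfies exactly one scenario $(\precond,\psi_{\mathcal{F}(i)})$. Exactly one gives both coverage and mutual exclusion; equivalently, we can cite Corollary~\ref{cor:sep-valid} with $B_j=P_j(\mathcal{F})$.

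The main obstacle is the satisfiability of $\mathsf{sp}(\precond,P)$ needed in the mutual-exclusion direction, together with the implicit hypothesis $\psi_j(\mathcal{F})\neq\bot$. Without satisfiability, an infeasible strongest postcondition would vacuously imply every $\psi_j$. I would make both hypotheses explicit, discharging satisfiability via (SAT) and termination of the bounded-unrolled programs, and note that the lemma is only invoked for mappings on which \textsc{ConstructSeparator} succeeds.
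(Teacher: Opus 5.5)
Your proof is correct and follows the paper's argument. You place $P$ in $P_{\mathcal{F}(i)}(\mathcal{F})$, use the positive-specification guarantee of Theorem~\ref{thm:sep} for $\psi_{\mathcal{F}(i)}(\mathcal{F})$ and its UNSAT guarantee for every other $\psi_j(\mathcal{F})$, and read off coverage and mutual exclusion. Your extra hypotheses (no $\psi_j(\mathcal{F})$ equals $\bot$, and $\mathsf{sp}(\precond,P)$ is satisfiable) are exactly what the paper's step from $\psi_j(\mathcal{F})\Rightarrow\neg\mathsf{sp}(\precond,P)$ to ``$\mathsf{sp}(\precond,P)$ implies exactly one postcondition'' uses silently, so stating them is an improvement; the nonemptiness assumption on bins, by contrast, is unnecessary.
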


\begin{proof}
For any $P \in \hs$, since $\{C_1, C_2, \ldots\}$ forms a valid partition of $\hs$, 
there exists a unique index $i$ such that $P \in C_i$. 
Let $j = \mathcal{F}(i)$.
By Definition~\ref{def:induced-query}, each postcondition $\psi_j(\mathcal{F})$ is constructed as
\[
\psi_j(\mathcal{F})
:= \textsc{ConstructSeparator}\bigl(P_j(\mathcal{F}),\, \mathcal{N}_j(\mathcal{F}),\, \Upost\bigr).
\]
From this definition, we have $P \in P_j(\mathcal{F})$.  
Then, by Theorem~\ref{thm:sep} (behavior of \textsc{ConstructSeparator}), 
it follows that 
\[
\mathsf{sp}(\precond, P) \Rightarrow \psi_j(\mathcal{F}).
\]
Similarly, for any $k \neq j$, by construction we have $P \in \mathcal{N}_k(\mathcal{F})$. 
Applying Theorem~\ref{thm:sep} again, we obtain
\[
\psi_k(\mathcal{F}) \Rightarrow \neg\,\mathsf{sp}(\precond, P).
\]
Hence, $\mathsf{sp}(\precond, P)$ implies exactly one postcondition 
$\psi_j(\mathcal{F})$, where $j = \mathcal{F}(i)$, 
satisfying both coverage and mutual exclusion. 
\end{proof}

\begin{lemma}[\textsc{MergeCluster} Completeness]\label{lem:mergecluster-completeness}
Let $\{C_1, \ldots, C_N\}$ be a valid $\mathsf{sp}$-partition over $\hs$.
For any $(\psi'_1, \ldots, \psi'_k) \in \textsc{Cube}(\Upost)^k$ such that
$(\precond, \psi'_1, \ldots, \psi'_k)$ forms a valid query,
there exists a mapping $\mathcal{F} : [N] \to [k]$ for which
$(\precond, \psi_1(\mathcal{F}), \ldots, \psi_k(\mathcal{F}))$
is another valid query that is at least as optimal as
$(\precond, \psi'_1, \ldots, \psi'_k)$.
\end{lemma}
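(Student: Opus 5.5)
Given a valid query $(\precond,\psi'_1,\ldots,\psi'_k)$, I will define the mapping $\mathcal{F}$ directly from it and then compare the two queries term by term in the objective of Eq.~(\ref{eq:postobjective}).

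\textbf{Step 1: constructing $\mathcal{F}$.} By coverage and mutual exclusion (Definition~\ref{def:valid-query}), every $P\in\hs$ satisfies $\{\precond\}P\{\psi'_j\}$ for exactly one $j$. Equivalently, $\mathsf{sp}(\precond,P)\Rightarrow\psi'_j$ for exactly one $j$. This index depends only on $\mathsf{sp}(\precond,P)$, so all programs in a fine-grained cluster $C_i$ receive the same index. I therefore set $\mathcal{F}(i)$ to be that index. With this choice, the bins satisfy $P_j(\mathcal{F})=\hs'_{\precond,j}$, the set of programs selecting answer $j$ in the given query.

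\textbf{Step 2: $\psi'_j$ is a feasible separator.} I claim $\psi'_j$ is a feasible separator for bin $j$ in the sense of Theorem~\ref{thm:sep}, namely $\Phi^+_j\Rightarrow\psi'_j$ and $\mathrm{UNSAT}(\psi'_j\wedge\varphi_r)$ for every $r\neq j$. The first condition holds because each disjunct $\mathsf{sp}(\precond,P)$ of $\Phi^+_j$ implies $\psi'_j$. The second condition is the delicate point. Mutual exclusion only says that $\mathsf{sp}(\precond,Q)\not\Rightarrow\psi'_j$ for $Q$ in another bin, which is weaker than $\mathsf{sp}(\precond,Q)\wedge\psi'_j$ being unsatisfiable. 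I expect this step to be the main obstacle.

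To handle it, I would argue from the structure of $\mathsf{sp}$ under a fixed precondition. I would use that the validity judgment $\models\{\precond\}P\{\psi\}$ is evaluated on the behaviors that the strongest postcondition describes. In particular, I would use that, for the cube-shaped postconditions over $\Upost$ considered here, non-implication of a cube by $\mathsf{sp}(\precond,Q)$ yields an atom $a$ of $\psi'_j$ such that $\mathsf{sp}(\precond,Q)\Rightarrow\neg a$. Since $\Upost$ is literal-closed, the relevant atoms are determined by the program's output on $\precond$. If this fails in general, I would state it as an explicit assumption and proceed: each behavior determines the truth value of each atom.

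\textbf{Step 3: comparing objectives.} Once Step 2 is established, Theorem~\ref{thm:sep} says that $\psi_j(\mathcal{F})=\textsc{ConstructSeparator}(P_j(\mathcal{F}),\mathcal{N}_j(\mathcal{F}),\Upost)$ is not $\bot$. It also says $\complexity(\psi_j(\mathcal{F}))\le\complexity(\psi'_j)$, because $\psi_j(\mathcal{F})$ is the smallest such cube. Lemma~\ref{lem:mergecluster-query-valid} gives that the induced query is valid, and that a program in $C_i$ selects answer $\mathcal{F}(i)$. Hence the answer classes of the induced query coincide with those of the given query, so the two disambiguation powers $\dpower$ are equal. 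Combining equal $\dpower$ with the complexity inequality summed over $j$ shows that the induced query's objective is at least that of $(\precond,\psi'_1,\ldots,\psi'_k)$.
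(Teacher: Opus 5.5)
Your plan matches the paper's proof. You read $\mathcal{F}$ off the answer classes of the given query, get equal $\dpower$ from Lemma~\ref{lem:mergecluster-query-valid}, and get $\complexity(\psi_j(\mathcal{F}))\le\complexity(\psi'_j)$ from the minimality in Theorem~\ref{thm:sep} once $\psi'_j$ is shown to be a feasible separator.

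The gap is exactly the one you flag in Step~2. Validity of $(\precond,\psi'_1,\ldots,\psi'_k)$ gives only $\mathsf{sp}(\precond,Q)\not\Rightarrow\psi'_j$ for $Q$ in another bin. It does not give $\mathrm{UNSAT}(\mathsf{sp}(\precond,Q)\wedge\psi'_j)$. The paper's proof makes the same leap when it says the UNSAT condition follows ``by the mutual-exclusion property''.

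Your proposed justification does not close the gap. Literal-closure of $\Upost$ only guarantees $\neg a\in\Upost$; it does not make $\mathsf{sp}(\precond,Q)$ decide $a$. Also, $\precond$ describes a family of inputs, so there is no single ``output on $\precond$'' that would fix an atom's truth value. What you actually prove is the lemma under an extra hypothesis: for every $P\in\hs$ and $a\in\Upost$, either $\mathsf{sp}(\precond,P)\Rightarrow a$ or $\mathsf{sp}(\precond,P)\Rightarrow\neg a$. Under that hypothesis your Step~2 is correct. It also gets you past the overlap check on line~3 of Algorithm~\ref{alg:sep}, since $\Phi^+_j\Rightarrow\psi'_j$ and $\psi'_j\wedge\varphi_r$ is unsatisfiable.

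That hypothesis cannot be derived, because without it the statement is false. Here is a counterexample.
\begin{itemize}
\item Take integer input $x$ and $\hs=\{P_1,P_2,P_3\}$ with $P_1(x)=x$, $P_2(x)=-|x|-1$, $P_3(x)=|x|$.
\item Let $\precond=\mathit{true}$ and $\Upost=\{\mathit{out}=x,\ \mathit{out}\neq x,\ \mathit{out}\ge 0,\ \mathit{out}<0\}$.
\item The query $(\mathit{true},\ \mathit{out}=x,\ \mathit{out}<0,\ \mathit{out}\ge 0)$ is valid with singleton answer classes, so its objective is $2/3-3\lambda_{\mathrm{post}}$.
\item The $\mathsf{sp}$-clusters are the singletons. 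However, $\mathsf{sp}(\precond,P_1)\wedge\mathsf{sp}(\precond,P_3)$ is satisfiable at $x=0$.
\item So every $\mathcal{F}$ that puts $C_1$ and $C_3$ in different bins makes \textsc{ConstructSeparator} return $\bot$.
\item Hence every valid induced query has a bin with at least two programs and objective at most $1/3$. This is strictly worse whenever $\lambda_{\mathrm{post}}<1/9$.
\end{itemize}
This is not an artifact of a poorly chosen $\precond$: with $\Upre=\{x<0,\,x\ge 0\}$ and $\lambda_{\mathrm{pre}}=2$, $\mathit{true}$ is the optimal precondition.

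The right fix is to add your decidedness condition to the lemma's hypotheses, not to derive it from literal-closure. With that done, your argument is complete. It is also more careful than the paper's proof at the one step where the paper's argument does not go through.
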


\begin{proof}

We will construct such $\mathcal{F}$ as follows

\paragraph{Formation of $\mathcal{F}$}
For each $i \in [N]$, take any program $P \in C_i$.  
By the coverage and mutual-exclusion condition, there is exactly one index $j \in [k]$  
such that $\mathsf{sp}(P, \precond) \implies \psi'_j$.  
We therefore define $\mathcal{F}(i) := j$.  
By definition, $\mathcal{F}$ satisfies:
\[
\mathcal{F}(i) = j
\iff
\forall P \in C_i,\;
\mathsf{sp}(P, \precond) \Rightarrow \psi'_j.
\]

We now analyze how the objective of $(\precond, \psi_1(\mathcal{F}), \ldots, \psi_k(\mathcal{F}))$ compares with that of $(\precond, \psi'_1, \ldots, \psi'_k)$.

\paragraph{Distinguishing Power.}

Consider any two distinct programs $P_1, P_2 \in \hs$, where $P_1 \in C_a$ and $P_2 \in C_b$ for some $a,b \in [N]$. 
If $(\precond, \psi_1(\mathcal{F}), \ldots, \psi_k(\mathcal{F}))$ fails to distinguish $P_1$ and $P_2$, then there exists some $k$ such that
\[
\mathsf{sp}(P_1, \precond) \lor \mathsf{sp}(P_2, \precond) \implies \psi_k(\mathcal{F}).
\]
By Proposition~\ref{lem:mergecluster-query-valid}, this is equivalent to $k = \mathcal{F}(a) = \mathcal{F}(b)$. 
Hence, by the definition of $\mathcal{F}$, we have
$\mathsf{sp}(P_1, \precond) \implies \psi'_k$
and
$\mathsf{sp}(P_2, \precond) \implies \psi'_k$,
so that $(\precond, \psi'_1, \ldots, \psi'_k)$ also fails to distinguish $P_1$ and $P_2$. 
Because the reasoning above is bidirectional (based on an if-and-only-if relationship), 
the two query sets exhibit identical distinguishing power:
\[
\dpower(\precond, \psi_1(\mathcal{F}), \ldots, \psi_k(\mathcal{F})) =
\dpower(\precond, \psi'_1, \ldots, \psi'_k).
\]

\paragraph{Query Complexity.}

Consider any $j \in [k]$.  
By Definition~\ref{def:induced-query}, we can simplify $P_j(\mathcal{F})$ as:
\[
P_j(\mathcal{F}) = \{\, h \in \hs \mid \exists i,~\mathcal{F}(i) = j \land h \in C_i \,\}
= \{\, h \in \hs \mid \mathsf{sp}(h, \precond) \implies \psi'_j \,\}.
\]
Hence $\left( \bigvee_{P\in P_j(\mathcal{F})}\mathsf{sp}(P,\precond) \right) \implies \psi'_j$.   
Similarly, we can express:
\[
N_j(\mathcal{F}) = \{\, h \in \hs \mid \exists r \neq j,~\mathsf{sp}(h, \precond) \implies \psi'_r \,\}.
\]
By the mutual-exclusion property of $(\precond, \psi'_1, \ldots, \psi'_k)$, this implies
\[
\mathrm{UNSAT}(\psi'_j \land \bigvee_{P\in N_j(\mathcal{F})}\mathsf{sp}(P,\precond) ).
\]
By Theorem~\ref{thm:sep} and the construction of $\psi_j(\mathcal{F})$, we know that
$\psi_j(\mathcal{F})$ is the \emph{smallest} cube in $\Upost$ satisfying both
\[
\left( \vee_{P\in P_j(\mathcal{F})}\mathsf{sp}(P,\precond) \right)  \implies \psi_j(\mathcal{F})
\quad\text{and}\quad
\mathrm{UNSAT}(\psi_j(\mathcal{F}) \land \vee_{P\in N_j(\mathcal{F})}\mathsf{sp}(P,\precond) ).
\]
We just showed that $\psi'_j$ also satisfies these two constraints; hence, it follows that
$\psi_j(\mathcal{F})$ cannot be more complex than $\psi'_j$.  
Therefore, for all $j \in [k]$,
\[
\complexity(\psi_j(\mathcal{F})) \leq \complexity(\psi'_j).
\]

\paragraph{Conclusion} We can now compare the objective and conclude
\[
\dpower(\precond, \postcond'_1, \dots, \postcond'_k) - \lambda_{\mathrm{post}} \cdot \sum_{i=1}^k \complexity(\postcond_i)
\leq
\dpower(\precond, \psi_1(\mathcal{F}), \ldots, \psi_k(\mathcal{F}))  - \lambda_{\mathrm{post}} \cdot \sum_{i=1}^k \complexity(\postcond_i).
\]

\end{proof}

\end{document}